\documentclass[submission]{eptcs}
\usepackage{amsmath,amssymb, amsthm}
\usepackage{mathpartir}
\usepackage{url}
\usepackage{hyperref}

\newcommand{\LamDeltadrule}[4][]{{\displaystyle\frac{\begin{array}{l}#2\end{array}}{#3}\quad\LamDeltadrulename{#4}}}

\newcommand{\LamDeltapremise}[1]{ #1 \\}
\newenvironment{LamDeltadefnblock}[3][]{ \framebox{\mbox{#2}} \quad #3 \\[0pt]}{}

\newcommand{\LamDeltant}[1]{\mathit{#1}}
\newcommand{\LamDeltamv}[1]{\mathit{#1}}

\newcommand{\LamDeltasym}[1]{#1}

\newcommand{\LamDeltadrulename}[1]{\textsc{#1}}

\newcommand{\LamDeltadruleAx}[1]{\LamDeltadrule[#1]{%
}{
 \Gamma  \LamDeltasym{,}  \LamDeltamv{x}  \LamDeltasym{:}  \LamDeltant{A}  \vdash  \LamDeltamv{x}  :  \LamDeltant{A} }{%
{\LamDeltadrulename{Ax}}{}%
}}

\newcommand{\LamDeltadruleLam}[1]{\LamDeltadrule[#1]{%
\LamDeltapremise{ \Gamma  \LamDeltasym{,}  \LamDeltamv{x}  \LamDeltasym{:}  \LamDeltant{A}  \vdash  \LamDeltant{t}  :  \LamDeltant{B} }%
}{
 \Gamma  \vdash   \lambda  \LamDeltamv{x} : \LamDeltant{A}  .  \LamDeltant{t}   :   \LamDeltant{A}  \to  \LamDeltant{B}  }{%
{\LamDeltadrulename{Lam}}{}%
}}

\newcommand{\LamDeltadruleApp}[1]{\LamDeltadrule[#1]{%
\LamDeltapremise{ \Gamma  \vdash  \LamDeltant{t_{{\mathrm{2}}}}  :  \LamDeltant{A} }%
\LamDeltapremise{ \Gamma  \vdash  \LamDeltant{t_{{\mathrm{1}}}}  :   \LamDeltant{A}  \to  \LamDeltant{B}  }%
}{
 \Gamma  \vdash  \LamDeltant{t_{{\mathrm{1}}}} \, \LamDeltant{t_{{\mathrm{2}}}}  :  \LamDeltant{B} }{%
{\LamDeltadrulename{App}}{}%
}}

\newcommand{\LamDeltadruleDelta}[1]{\LamDeltadrule[#1]{%
\LamDeltapremise{ \Gamma  \LamDeltasym{,}  \LamDeltamv{x}  \LamDeltasym{:}   \neg  \LamDeltant{A}   \vdash  \LamDeltant{t}  :   \perp  }%
}{
 \Gamma  \vdash   \Delta  \LamDeltamv{x}  :   \neg  \LamDeltant{A}   .  \LamDeltant{t}   :  \LamDeltant{A} }{%
{\LamDeltadrulename{Delta}}{}%
}}

\newcommand{\LamDeltadruleBeta}[1]{\LamDeltadrule[#1]{%
}{
 \LamDeltasym{(}   \lambda  \LamDeltamv{x} : \LamDeltant{T}  .  \LamDeltant{t}   \LamDeltasym{)} \, \LamDeltant{t'}  \redto  \LamDeltasym{[}  \LamDeltant{t'}  \LamDeltasym{/}  \LamDeltamv{x}  \LamDeltasym{]}  \LamDeltant{t} }{%
{\LamDeltadrulename{Beta}}{}%
}}

\newcommand{\LamDeltadruleStructRed}[1]{\LamDeltadrule[#1]{%
\LamDeltapremise{ \LamDeltamv{y}  \text{ fresh in }  \LamDeltant{t}  \text{ and }  \LamDeltant{t'} }%
\LamDeltapremise{ \LamDeltamv{z}  \text{ fresh in }  \LamDeltant{t}  \text{ and }  \LamDeltant{t'} }%
}{
 \LamDeltasym{(}   \Delta  \LamDeltamv{x}  :   \neg  \LamDeltasym{(}   \LamDeltant{T_{{\mathrm{1}}}}  \to  \LamDeltant{T_{{\mathrm{2}}}}   \LamDeltasym{)}   .  \LamDeltant{t}   \LamDeltasym{)} \, \LamDeltant{t'}  \redto   \Delta  \LamDeltamv{y}  :   \neg  \LamDeltant{T_{{\mathrm{2}}}}   .  \LamDeltasym{[}   \lambda  \LamDeltamv{z} :  \LamDeltant{T_{{\mathrm{1}}}}  \to  \LamDeltant{T_{{\mathrm{2}}}}   .  \LamDeltasym{(}  \LamDeltamv{y} \, \LamDeltasym{(}  \LamDeltamv{z} \, \LamDeltant{t'}  \LamDeltasym{)}  \LamDeltasym{)}   \LamDeltasym{/}  \LamDeltamv{x}  \LamDeltasym{]}  \LamDeltant{t}  }{%
{\LamDeltadrulename{StructRed}}{}%
}}


\newtheorem{thm}{Theorem}
\newtheorem{lemma}[thm]{Lemma}
\newtheorem{example}[thm]{Example}
\newtheorem{corollary}[thm]{Corollary}
\newtheorem{definition}[thm]{Definition}

\newcommand{\interp}[1]{[\negthinspace[#1]\negthinspace]}
\newcommand{\normto}[0]{\rightsquigarrow^{!}}

\newcommand{\redto}[0]{\rightsquigarrow}

\begin{document}

\title{Hereditary Substitution for the $\lambda \Delta$-Calculus}
\author{Harley Eades and Aaron Stump\\
  \institute{Department of Computer Science\\University of Iowa}}
\def\titlerunning{Hered. Subst. for the $\lambda\Delta$-Calculus}
\def\authorrunning{H. Eades and A. Stump}
\maketitle
\begin{abstract} 
  Hereditary substitution is a form of type-bounded iterated
  substitution, first made explicit by Watkins et al. and Adams in
  order to show normalization of proof terms for various constructive
  logics. This paper is the first to apply hereditary substitution to
  show normalization of a type theory corresponding to a
  non-constructive logic, namely the $\lambda\Delta$-calculus as
  formulated by Rehof. We show that there is a non-trivial
  extension of the hereditary substitution function of the
  simply-typed $\lambda$-calculus to one for the
  $\lambda\Delta$-calculus. Then hereditary substitution is used to
  prove normalization.
\end{abstract}

\section{Introduction}
\label{sec:introduction}
In 1992 M. Parigot defined an algorithmic interpretation of classical
natural deduction called the $\lambda\mu$-calculus
\cite{Parigot:1992}.  His original theory consisted of complete
sequents for types which often made the theory difficult to reason
about, especially when one wished to adapt any well-known results of
intuitionistic type theory to his classical type theory.  Later, in
1994 N. Rehof and M. S\o rensen defined the $\lambda\Delta$-calculus
which is provably equivalent to the $\lambda\mu$-calculus \footnote{By
  this we mean that everything provable in the $\lambda\mu$-calculus
  is provable in the $\lambda\Delta$-calculus, but $\beta$-reduction
  is not step-by-step equivalent.}.  Due to their equivalence, any
results obtained for the $\lambda\mu$-calculus apply to the
$\lambda\Delta$-calculus by translation. Now the
$\lambda\Delta$-calculus is essentially an extension of the simply
typed $\lambda$-calculus (STLC), hence adapting results from intuitionistic
type theory to the $\lambda\Delta$-calculus is less complicated.  The
main result of this paper is the adaptation of a well-known proof
technique for showing normalization of intuitionistic typed
$\lambda$-calculi, called hereditary substitution, to the
$\lambda\Delta$-calculus.  We stress that proving normalization of the
$\lambda\Delta$-calculus is not our contribution. This is already well
known \cite{David:2003,Parigot:1997}.  In fact it is strongly normalizing.
The adaptation of the proof method to the $\lambda\Delta$-calculus is however our main
contribution.

The central idea behind the hereditary substitution proof method is to
prove normalization using a lexicographic combination of an ordering
on types and the strict subexpression ordering on proofs. This central
idea has been used in normalization proofs dating all the way back to
Prawitz in 1965. Since then it has been used to show normalization for
many simply-typed $\lambda$-calculi
\cite{Amadio:1998,Girard:1989,Joachimski:1999,Levy:1976}.
Extracting the constructive content from these proofs one will obtain
a function much like capture avoiding substitution, except when a
redex which was not present in the input is created as a result of
substitution, that redex is recursively reduced. This substitution
function is called hereditary substitution.  It was first made
explicit by K.  Watkins et al. in \cite{Watkins:2004} for
non-dependent types and R.  Adams in \cite{Adams:2004} for dependent
types.  In previous work, the authors showed how to apply
the hereditary-substitution method to prove normalization of
Stratified System F (SSF), a type theory of predicative polymorphism
studied by Leivant~\cite{Eades:2010,Leivant:1991}.

The motivation for using the hereditary substitution method over other
well-known methods for showing normalization is that it is simpler. It
provides a directly defined substitution which preserves normal forms. Its
definition is essentially a combination of the reduction relation with
capture avoiding substitution.
It has found important application in logical frameworks based on
canonical forms \cite{Watkins:2004}.  In these frameworks hereditary
substitution replaces ordinary capture avoiding substitution in order
to maintain canonicity.  

We begin with defining the $\lambda\Delta$-calculus and presenting
some basic meta-results in Sect.~\ref{sec:the_lambda_delta-calculus}
and Sect.~\ref{sec:basic_syntactic_lemmas}.  Then we give the
definition of the hereditary substitution function for the simply
typed $\lambda$-calculus in Sect.~\ref{sec:the_hereditary_substitution_function_stlc}.  
In Sect.~\ref{subsec:the_final_extension} we extend
the definition of the hereditary substitution function for the simply
typed $\lambda$-calculus with a new function called the structural
hereditary substitution function.  Then its correctness properties are
presented in
Sect.~\ref{subsec:properties_of_the_hereditary_substitution_function}.
The hereditary substitution function is then used to conclude
normalization of the $\lambda\Delta$-calculus in
Sect.~\ref{sec:concluding_normalization}.  We conclude with a related
work section in Sect.~\ref{sec:related_work}.

%
\section{The $\lambda\Delta$-Calculus}
\label{sec:the_lambda_delta-calculus}
The $\lambda\Delta$-calculus is a straightforward extension of the
simply typed $\lambda$-calculus.  The syntax is defined in
Figure~\ref{fig:syntax_mu}.
\begin{figure}[h]
    \begin{center}
      \begin{tabular}{lll}
        \begin{math}
          \begin{array}{rrll}
            \text{(Types)} & \LamDeltant{T},\LamDeltant{A},\LamDeltant{B},\LamDeltant{C} & ::= & 
               \perp \,|\,\LamDeltant{b}\,|\, \LamDeltant{A}  \to  \LamDeltant{B} \\
            \text{(Terms)} & \LamDeltant{t} & ::= & 
              \LamDeltamv{x}\,|\, \lambda  \LamDeltamv{x} : \LamDeltant{T}  .  \LamDeltant{t} \,|\, \Delta  \LamDeltamv{x}  :  \LamDeltant{T}  .  \LamDeltant{t} \,|\,\LamDeltant{t_{{\mathrm{1}}}} \, \LamDeltant{t_{{\mathrm{2}}}}\\
            \text{(Normal Forms)} & \LamDeltant{n},\LamDeltant{m} & ::= & 
              \LamDeltamv{x}\,|\, \lambda  \LamDeltamv{x} : \LamDeltant{T}  .  \LamDeltant{n} \,|\, \Delta  \LamDeltamv{x}  :  \LamDeltant{T}  .  \LamDeltant{n} \,|\,\LamDeltant{h} \, \LamDeltant{n}\\
            \text{(Heads)} & \LamDeltant{h} & ::= & \LamDeltamv{x}\,|\,\LamDeltant{h} \, \LamDeltant{n}\\
            \text{(Contexts)} & \Gamma & ::= &  \cdot \,|\,\LamDeltamv{x}  \LamDeltasym{:}  \LamDeltant{A}\,|\,\Gamma_{{\mathrm{1}}}  \LamDeltasym{,}  \Gamma_{{\mathrm{2}}}\\
          \end{array}
        \end{math}
      \end{tabular}
    \end{center}
    \caption{Syntax for Types and Terms}
    \label{fig:syntax_mu}
\end{figure}
The type $\LamDeltant{b}$ is an arbitrary base type.  Negation is defined as it
is in intuitionistic type theory, that is, $ \neg  \LamDeltant{A}  =^{def}  \LamDeltant{A}  \to   \perp  $, where $ \perp $ is absurdity.  Arbitrary syntactically defined
normal forms will be denoted by the meta-variables $\LamDeltant{n}$ and
$\LamDeltant{m}$, and arbitrary typing contexts will be denoted by the
meta-variable $\Gamma$.  We assume at all times that all variables in
the domain of $\Gamma$ are unique.  In addition we rearrange the
objects in $\Gamma$ freely without indication.

The typing rules are defined in Figure~\ref{fig:typing}.  The
operational semantics are the compatible closure of the rules in
Figure~\ref{fig:opsem}.
\begin{figure}[h]
  \begin{center}
    \begin{math}
      \begin{array}{ccc}
        \LamDeltadruleAx{} & \LamDeltadruleLam{} \\
        & \\
        \LamDeltadruleApp{} & \LamDeltadruleDelta{} \\
      \end{array}
    \end{math}
  \end{center}

  \caption[]{Typing Rules}
  \label{fig:typing}
\end{figure}
\begin{figure}[h]
    \begin{center}
    \begin{math}
      \begin{array}{ccc}
        \LamDeltadruleBeta{} \\
        & \\
        \LamDeltadruleStructRed{}
      \end{array}
    \end{math}
  \end{center}

  \caption{Operational Semantics}
  \label{fig:opsem}
\end{figure}
It is easy to see based on the typing rules that the
$\Delta$-abstraction is the introduction form for double negation.  We
annotate the $\Delta$-abstraction with the type of the bound variable
to make the definition of the hereditary substitution function a
little less complicated.  Removing this annotation should not cause
any significant problems.  On a more programmatic front the
$\Delta$-abstraction is a control operator.  It can simulate
Felleisen's control operators; see \cite{Rehof:1994} for more
information on this.  N. Rehof and M. S\o rensen also extend the
operational semantics with a structural reduction rule for the
$\Delta$-abstraction (\LamDeltadrulename{StructRed} in
Figure~\ref{fig:opsem}). This rule is called structural because it
does not amount to a computational step, rather pushes the application
into the body of the $\Delta$-abstraction potentially creating
additional redexes.  We denote the reflexive and transitive closure of
$ \redto $ as $ \redto^* $. We also define $\LamDeltant{t}  \redto^!  \LamDeltant{t'}$ to mean that $\LamDeltant{t}  \redto^*  \LamDeltant{t'}$ and $\LamDeltant{t'}$ is normal.  Now that we have defined the
$\lambda\Delta$-calculus we state several well-known meta-results that
will be needed throughout the sequel.

%
\section{Basic Syntactic Lemmas}
\label{sec:basic_syntactic_lemmas}
The following meta-results are well-known so we omit their proofs.  We
do not always explicitly state the use of these results.  The first
two properties are weakening and substitution for the typing relation.

\begin{lemma}[Weakening for Typing]
  \label{lemma:weakening_for_typing}
  If $ \Gamma  \vdash  \LamDeltant{t}  :  \LamDeltant{T} $ then $ \Gamma  \LamDeltasym{,}  \LamDeltamv{x}  \LamDeltasym{:}  \LamDeltant{T'}  \vdash  \LamDeltant{t}  :  \LamDeltant{T} $ for any fresh variable $\LamDeltamv{x}$ and
  type $\LamDeltant{T'}$.
\end{lemma}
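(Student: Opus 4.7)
The plan is to proceed by straightforward induction on the derivation of $\Gamma \vdash t : T$, following the standard template for weakening lemmas in simply typed systems. Since the $\lambda\Delta$-calculus only differs from the STLC by adding the $\Delta$-abstraction (whose typing rule has essentially the same shape as \LamDeltadrulename{Lam}), the proof should go through without any genuinely new ideas.

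The cases would proceed as follows. For \LamDeltadrulename{Ax}, the assumption $x : T$ still appears in the extended context $\Gamma, x : T', x_0 : A$ (recalling the convention that contexts can be freely rearranged), so another application of \LamDeltadrulename{Ax} suffices. For \LamDeltadrulename{App}, I would apply the induction hypothesis to both premises $\Gamma \vdash t_1 : A \to B$ and $\Gamma \vdash t_2 : A$ using the same fresh variable $x$ and type $T'$, and then reapply \LamDeltadrulename{App}. For \LamDeltadrulename{Lam}, the premise has the form $\Gamma, y : A \vdash t : B$; after possibly $\alpha$-renaming the bound variable $y$ so that it is distinct from the fresh $x$, the induction hypothesis yields $\Gamma, y : A, x : T' \vdash t : B$, and reordering the context (permitted by the convention stated in Section 2) together with \LamDeltadrulename{Lam} gives the goal. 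The \LamDeltadrulename{Delta} case is handled identically to \LamDeltadrulename{Lam}, since its premise is of the same form $\Gamma, y : \neg A \vdash t : \bot$.

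There is no real obstacle here. The only point requiring care is the freshness/bound-variable management in the \LamDeltadrulename{Lam} and \LamDeltadrulename{Delta} cases, and this is handled by the standard Barendregt variable convention (implicitly endorsed by the paper's remark that ``all variables in the domain of $\Gamma$ are unique'' and that contexts can be rearranged freely). Because the paper already declares this lemma well-known and omits its proof, the sketch above is essentially all that is needed; a full write-up would simply display the four inductive cases.
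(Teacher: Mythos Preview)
Your proposal is correct and matches the paper's approach exactly: the paper's proof simply states ``Straightforward induction on the assumed typing derivation,'' and your case analysis over \LamDeltadrulename{Ax}, \LamDeltadrulename{App}, \LamDeltadrulename{Lam}, and \LamDeltadrulename{Delta} is precisely how that induction unfolds.
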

\begin{proof}
  Straightforward induction on the assumed typing derivation.
\end{proof}

\begin{lemma}[Substitution for Typing]
  \label{lemma:substitution_for_typing}
  If $ \Gamma  \vdash  \LamDeltant{t}  :  \LamDeltant{T} $ and $ \Gamma  \LamDeltasym{,}  \LamDeltamv{x}  \LamDeltasym{:}  \LamDeltant{T}  \LamDeltasym{,}  \Gamma'  \vdash  \LamDeltant{t'}  :  \LamDeltant{T'} $ then $ \Gamma  \vdash  \LamDeltasym{[}  \LamDeltant{t}  \LamDeltasym{/}  \LamDeltamv{x}  \LamDeltasym{]}  \LamDeltant{t'}  :  \LamDeltant{T'} $.
\end{lemma}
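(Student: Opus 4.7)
The plan is to prove this by straightforward structural induction on the typing derivation of $t'$ with hypothesis $\Gamma, x:T, \Gamma' \vdash t' : T'$, with the typing of $t$ and the context $\Gamma'$ treated as parameters of the induction. (I read the conclusion as $\Gamma, \Gamma' \vdash [t/x]t' : T'$, since the statement as displayed appears to elide $\Gamma'$.) The inductive hypothesis must be strong enough to cover the binder cases, so it must allow $\Gamma'$ to be extended; this is automatic when one quantifies over $\Gamma'$ up front.

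I would proceed by cases on the last rule used to derive $\Gamma, x:T, \Gamma' \vdash t' : T'$. For \textsc{Ax}, the relevant term is a variable $y$; if $y \equiv x$ then $T' \equiv T$ and $[t/x]y = t$, and the conclusion follows from the given derivation $\Gamma \vdash t : T$ by applying Weakening (Lemma~\ref{lemma:weakening_for_typing}) to bring in $\Gamma'$; if $y \not\equiv x$ then $[t/x]y = y$ and the variable is typed directly by \textsc{Ax}. The \textsc{App} case is immediate: apply the induction hypothesis to each premise and reassemble with \textsc{App}, since $[t/x](t_1\,t_2) = ([t/x]t_1)\,([t/x]t_2)$.

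The \textsc{Lam} and \textsc{Delta} cases are the standard binder cases and are handled identically in spirit. In the \textsc{Lam} case, $t' \equiv \lambda y:A.\,s$ with $T' \equiv A \to B$ and premise $\Gamma, x:T, \Gamma', y:A \vdash s : B$. Using the Barendregt convention (or explicit $\alpha$-renaming) we may assume $y$ is fresh for $\Gamma$, $\Gamma'$, $x$, and $t$, so that $[t/x](\lambda y:A.\,s) = \lambda y:A.\,[t/x]s$. The induction hypothesis (with $\Gamma'$ instantiated to $\Gamma', y:A$) yields $\Gamma, \Gamma', y:A \vdash [t/x]s : B$, and \textsc{Lam} finishes the case. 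The \textsc{Delta} case is analogous, with the bound variable typed at $\neg A$ and the body at $\bot$.

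The only real subtlety is ensuring the freshness side conditions on bound variables in the \textsc{Lam} and \textsc{Delta} cases so that substitution commutes under the binder without capture; this is routine but is the one place where care is needed. Everything else is syntactic pattern-matching on the derivation, and no additional lemmas beyond Weakening (Lemma~\ref{lemma:weakening_for_typing}) are required.
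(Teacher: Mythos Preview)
Your proposal is correct and follows essentially the same approach as the paper, which simply states ``Straightforward induction on the second assumed typing derivation.'' Your reading of the conclusion as $\Gamma, \Gamma' \vdash [t/x]t' : T'$ is the intended one, and your case analysis (with Weakening for the \textsc{Ax} case and the Barendregt convention for the binder cases) is exactly the standard elaboration of that one-line proof.
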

\begin{proof}
  Straightforward induction on the second assumed typing derivation.
\end{proof}
\noindent
The final three properties are, confluence, type preservation and inversion of the typing
relation. The proof of the confluence and type preservation can be found in \cite{Rehof:1994}
and the proof of the latter is trivial.

\begin{thm}[Confluence]
  \label{thm:confluence}
  If $\LamDeltant{t_{{\mathrm{1}}}}  \redto^*  \LamDeltant{t_{{\mathrm{2}}}}$ and $\LamDeltant{t_{{\mathrm{1}}}}  \redto^*  \LamDeltant{t_{{\mathrm{3}}}}$, then there exists a term $\LamDeltant{t_{{\mathrm{4}}}}$, such that,
  $\LamDeltant{t_{{\mathrm{2}}}}  \redto^*  \LamDeltant{t_{{\mathrm{4}}}}$ and $\LamDeltant{t_{{\mathrm{3}}}}  \redto^*  \LamDeltant{t_{{\mathrm{4}}}}$.
\end{thm}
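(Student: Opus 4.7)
The plan is to adapt the Tait--Martin-L\"of parallel reduction technique to accommodate the structural reduction rule for $\Delta$. First I would define a parallel reduction relation $\Rightarrow$ inductively: a variable reduces to itself; $\lambda x:T.t \Rightarrow \lambda x:T.t'$ and $\Delta x:T.t \Rightarrow \Delta x:T.t'$ whenever $t \Rightarrow t'$; $t_1\,t_2 \Rightarrow t_1'\,t_2'$ whenever $t_i \Rightarrow t_i'$; plus two top-level contraction rules, namely a parallel-$\beta$ clause $(\lambda x:T.t)\,s \Rightarrow [s'/x]t'$ whenever $t \Rightarrow t'$ and $s \Rightarrow s'$, and a parallel \textsc{StructRed} clause of the form $(\Delta x:\neg(T_1 \to T_2).t)\,s \Rightarrow \Delta y:\neg T_2.[\lambda z:T_1 \to T_2.(y\,(z\,s'))/x]t'$, again with the obvious parallel premises. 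Then I would verify the usual sandwich $\redto\,\subseteq\,\Rightarrow\,\subseteq\,\redto^{*}$ by straightforward inductions on the respective derivations.

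Next I would follow Takahashi's strategy and define, by recursion on $t$, the complete development $t^{\star}$ that fires every redex visible in $t$ simultaneously (both $\beta$- and struct-redexes, including the two overlapping cases where the operator of an application is itself a $\lambda$ or $\Delta$). The crux is the triangle property: if $t \Rightarrow s$ then $s \Rightarrow t^{\star}$. This immediately yields the diamond property for $\Rightarrow$, and then confluence of $\redto^{*}$ follows by the standard tiling argument on the rectangle built from $t_1 \redto^{*} t_2$ and $t_1 \redto^{*} t_3$ after promoting each step to $\Rightarrow$.

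The main obstacle is handling the structural rule, which is considerably less well-behaved than $\beta$: its right-hand side reinserts the argument $s$ under two fresh binders and inside a substitution. Two supporting lemmas carry the weight. The first is a substitution lemma for parallel reduction, $t \Rightarrow t'$ and $s \Rightarrow s'$ imply $[s/x]t \Rightarrow [s'/x]t'$, proved by induction on the first derivation with the $\Delta$-case requiring the Barendregt convention to keep the fresh $y,z$ disjoint from the free variables of $s'$. The second is a commutation lemma showing that the pattern $[\lambda z:T_1 \to T_2.(y\,(z\,s))/x]t$ is stable under parallel reduction in all its components, so that a parallel step performed \emph{before} firing \textsc{StructRed} at the top can always be replayed \emph{after}. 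These let one close the critical pairs that arise when both \textsc{Beta} and \textsc{StructRed} (or two nested \textsc{StructRed}) are enabled at overlapping positions.

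With the substitution and commutation lemmas in hand, the triangle property is a case analysis on the derivation of $t \Rightarrow s$, each case invoking the induction hypothesis on immediate subterms and then appealing to the appropriate lemma to rewrite $s$ into a parallel reduct of $t^{\star}$. Since this result is already established in \cite{Rehof:1994}, we refer the reader there for the full argument.
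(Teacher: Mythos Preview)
Your proposal is correct and, in fact, offers substantially more than the paper does: the paper gives no proof at all for this theorem, simply stating that ``the proof of the confluence and type preservation can be found in \cite{Rehof:1994}.'' Your parallel-reduction sketch with Takahashi-style complete developments is the standard route and is almost certainly what the cited reference does; since you also close by deferring to \cite{Rehof:1994}, your treatment and the paper's coincide in the end, with yours being the more informative of the two.
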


\begin{thm}[Preservation]
  \label{thm:preservation}
  If $ \Gamma  \vdash  \LamDeltant{t}  :  \LamDeltant{T} $ and $\LamDeltant{t} \redto \LamDeltant{t'}$ then $ \Gamma  \vdash  \LamDeltant{t'}  :  \LamDeltant{T} $.
\end{thm}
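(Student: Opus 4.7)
The plan is to do induction on the derivation of $t \redto t'$. Since $\redto$ is the compatible closure of the two rules in Figure~\ref{fig:opsem}, there are the two base cases (\textsc{Beta} and \textsc{StructRed}) together with routine congruence cases for application, $\lambda$-abstraction, and $\Delta$-abstraction, each of which follows by inverting the typing derivation, applying the induction hypothesis to the reducing subterm, and reapplying the corresponding typing rule.

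For the \textsc{Beta} case, $t = (\lambda x : A. s)\, t''$ and $t' = [t''/x]s$. Inversion on the typing derivation (one \textsc{App} followed by one \textsc{Lam}) yields $\Gamma \vdash t'' : A$ and $\Gamma, x : A \vdash s : T$. The Substitution Lemma (Lemma~\ref{lemma:substitution_for_typing}) then gives $\Gamma \vdash [t''/x]s : T$, which is what we need.

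The interesting case is \textsc{StructRed}: here $t = (\Delta x : \neg(T_1 \to T_2). s)\, t''$ and $t' = \Delta y : \neg T_2.\, [\lambda z : T_1 \to T_2.\, y\,(z\,t'')\, /\, x]\, s$, with $y,z$ fresh. Inversion (one \textsc{App} followed by one \textsc{Delta}) gives $T = T_2$, $\Gamma \vdash t'' : T_1$, and $\Gamma, x : \neg(T_1 \to T_2) \vdash s : \perp$. To derive $\Gamma \vdash t' : T_2$ it suffices, by \textsc{Delta}, to show $\Gamma, y : \neg T_2 \vdash [\lambda z : T_1 \to T_2.\, y\,(z\,t'')\, /\, x]\, s : \perp$. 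Applying the Substitution Lemma, we must check two judgments: (i) $\Gamma, y : \neg T_2, x : \neg(T_1 \to T_2) \vdash s : \perp$, which is immediate by Weakening (Lemma~\ref{lemma:weakening_for_typing}) since $y$ is fresh; and (ii) $\Gamma, y : \neg T_2 \vdash \lambda z : T_1 \to T_2.\, y\,(z\,t'') : \neg(T_1 \to T_2)$. For (ii), unfolding $\neg(T_1 \to T_2) = (T_1 \to T_2) \to \perp$ and applying \textsc{Lam}, it suffices to show $\Gamma, y : \neg T_2, z : T_1 \to T_2 \vdash y\,(z\,t'') : \perp$, which follows by two uses of \textsc{App}: $z\,t'' : T_2$ from $z : T_1 \to T_2$ and a weakened copy of $\Gamma \vdash t'' : T_1$, and then $y\,(z\,t'') : \perp$ from $y : T_2 \to \perp$.

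The main obstacle is the bookkeeping in the \textsc{StructRed} case: one must unfold the abbreviation $\neg A \equiv A \to \perp$ in the right places, use the freshness of $y$ and $z$ to justify weakening, and thread the Substitution Lemma through the outer $\Delta$- and $\lambda$-abstractions so that each intermediate subterm typechecks at the correct type. Everything else is standard syntactic manipulation.
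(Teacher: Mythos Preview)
Your proof is correct and is exactly the standard syntactic argument one would expect: induction on the reduction derivation, with the \textsc{Beta} case discharged by the Substitution Lemma and the \textsc{StructRed} case by Weakening plus Substitution after unfolding $\neg A \equiv A \to \perp$. The paper itself does not spell out any proof of preservation; it simply cites Rehof and S{\o}rensen~\cite{Rehof:1994} for this result, so your write-up is in fact more detailed than what appears here.
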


\begin{thm}[Inversion]
  \label{theorem:inversion}
  \begin{itemize}
  \item[]
  \item[i.] If $ \Gamma  \vdash  \LamDeltamv{x}  :  \LamDeltant{T} $ then $\LamDeltamv{x} \in \Gamma$.
  \item[ii.] If $ \Gamma  \vdash   \lambda  \LamDeltamv{x} : \LamDeltant{T_{{\mathrm{1}}}}  .  \LamDeltant{t}   :   \LamDeltant{T_{{\mathrm{1}}}}  \to  \LamDeltant{T_{{\mathrm{2}}}}  $ then $ \Gamma  \LamDeltasym{,}  \LamDeltamv{x}  \LamDeltasym{:}  \LamDeltant{T_{{\mathrm{1}}}}  \vdash  \LamDeltant{t}  :  \LamDeltant{T_{{\mathrm{2}}}} $.
  \item[iii.] If $ \Gamma  \vdash   \Delta  \LamDeltamv{x}  :   \neg  \LamDeltant{T}   .  \LamDeltant{t}   :  \LamDeltant{T} $ then $ \Gamma  \LamDeltasym{,}  \LamDeltamv{x}  \LamDeltasym{:}   \neg  \LamDeltant{T}   \vdash  \LamDeltant{t}  :   \perp  $.
  \end{itemize}
\end{thm}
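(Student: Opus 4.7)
The plan is to prove each clause by inversion on the last rule used in the assumed typing derivation. Since the typing relation in Figure \ref{fig:typing} is syntax-directed (each term constructor is the conclusion of exactly one rule), case analysis immediately pins down which rule concluded the derivation, and each clause follows by reading off the premises.

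For clause (i), I would observe that the only rule whose conclusion has the form $\Gamma \vdash x : T$ (where $x$ is a variable) is \textsc{Ax}. Inverting that rule requires the context to contain $x : T$, which is exactly what is meant by $x \in \Gamma$ (under the uniqueness-of-domain convention stated in the paper). For clause (ii), a term of the form $\lambda x : T_1 . t$ can only be typed by \textsc{Lam}; moreover the conclusion of \textsc{Lam} forces the codomain of the function type to match the type of the body, so the premise $\Gamma, x:T_1 \vdash t : T_2$ is immediate. For clause (iii), analogously, $\Delta x : \neg T . t$ must be typed by \textsc{Delta}, whose premise is exactly $\Gamma, x : \neg T \vdash t : \perp$.

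Because no two rules share a conclusion head symbol, there are no overlapping cases to rule out and no nontrivial generation lemmas are required — so as noted in the preceding sentence of the excerpt, the whole argument is essentially a trivial inspection of Figure \ref{fig:typing}. The only mild subtlety, if one wanted to be fully explicit, is bookkeeping around the implicit exchange on contexts (the paper assumes contexts are freely rearranged and have unique domains), which makes the statement $x \in \Gamma$ in clause (i) unambiguous. No obstacle of substance is expected.
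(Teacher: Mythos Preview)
Your proposal is correct and matches the paper's approach: the paper's proof is the single line ``This can be shown by straightforward induction on the assumed typing derivations,'' which in a syntax-directed system amounts precisely to the rule-inversion argument you spell out. Your write-up is in fact more explicit than the paper's own proof, but the underlying idea is identical.
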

\begin{proof}
  This can be shown by straightforward induction on the assumed typing derivations.
\end{proof}
\noindent
At this point we have everything we need to state and prove correct
the hereditary substitution function.

%
\section{The Hereditary Substitution Function for STLC}
\label{sec:the_hereditary_substitution_function_stlc}
In the introduction we gave an informal definition of the hereditary
substitution function.  It is exactly like capture-avoiding
substitution, except that if any redexes are introduced as a result of
substitution those redexes are recursively reduced.  In fact
hereditary substitution in general does not modify any redexes already
in the input.  However, if there are no redexes present in the input
then the output of the hereditary substitution function will not have
any either.  This is one of the main correctness properties of the
hereditary substitution function.

The definition of the hereditary substitution function strongly
depends on the existence of an ordering on types.  In fact if no such
ordering exists then it is unclear if the hereditary substitution can
be defined and proved correct.  We say ``unclear'' here because it is
not known if there exists a means of proving the hereditary
substitution function correct without an ordering on types.  We
conjecture that one may be able to give some semantic interpretation
of hereditary substitution and show correctness with respect to the
semantics. However, this is just a conjecture.  Fortunately, a very
simple ordering exists on the types of STLC and the
$\lambda\Delta$-calculus.
\begin{definition}
  \label{def:ordering}
  We define an ordering on types $\LamDeltant{T}$ as the compatible closure of
  the following formulas.
  \begin{center}
    \begin{math}
      \begin{array}{lll}
         \LamDeltant{A}  \to  \LamDeltant{B}  & > & \LamDeltant{A}\\
         \LamDeltant{A}  \to  \LamDeltant{B}  & > & \LamDeltant{B}\\
      \end{array}
    \end{math}
  \end{center}
\end{definition}
The ordering defined above is simply the strict subexpression ordering
on types where the absurdity and base types are minimal elements.
This ordering is clearly well founded.

The definition of the hereditary substitution function depends on
being able to detect when a new redex has been created as a result of
substitution.  When a new redex is created it must be able to also
detect that the ordering on types has decreased.  To detect both of
these situations the hereditary substitution function uses the
following partial function.
\begin{definition}
  \label{def:ctype}
  We define the partial function $ \textsf{ctype} $ 
  which computes the type of an application in head normal form.  It is defined as follows:
  \begin{center}
    \begin{itemize}
    \item[] $ \textsf{ctype}_{ \LamDeltant{T} }( \LamDeltamv{x} , \LamDeltamv{x} )  = \LamDeltant{T}$
    \item[] $ \textsf{ctype}_{ \LamDeltant{T} }( \LamDeltamv{x} , \LamDeltant{t_{{\mathrm{1}}}} \, \LamDeltant{t_{{\mathrm{2}}}} )  = \LamDeltant{T''}$\\
      \begin{tabular}{lll}
        & Where $ \textsf{ctype}_{ \LamDeltant{T} }( \LamDeltamv{x} , \LamDeltant{t_{{\mathrm{1}}}} )  =  \LamDeltant{T'}  \to  \LamDeltant{T''} $.
      \end{tabular}    
    \end{itemize}
  \end{center}
\end{definition}
The following lemma list the most important results about the $ \textsf{ctype} $ function.
\begin{lemma}[Properties of $ \textsf{ctype} $]
  \label{lemma:ctype_props}
  \begin{itemize}
    \item[]
  \item[i.] If $ \textsf{ctype}_{ \LamDeltant{T} }( \LamDeltamv{x} , \LamDeltant{t} )  = \LamDeltant{T'}$ then $ \textsf{head}( \LamDeltant{t} )  = \LamDeltamv{x}$ and $\LamDeltant{T'} \leq \LamDeltant{T}$.
    
  \item[ii.] If $ \Gamma  \LamDeltasym{,}  \LamDeltamv{x}  \LamDeltasym{:}  \LamDeltant{T}  \LamDeltasym{,}  \Gamma'  \vdash  \LamDeltant{t}  :  \LamDeltant{T'} $ and $ \textsf{ctype}_{ \LamDeltant{T} }( \LamDeltamv{x} , \LamDeltant{t} )  = \LamDeltant{T''}$ then
    $\LamDeltant{T'} \equiv \LamDeltant{T''}$.    
  \end{itemize}
\end{lemma}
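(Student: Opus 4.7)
The plan is to prove both parts by structural induction on the term $\LamDeltant{t}$, which is essentially the same as induction on the (partial) computation of $\textsf{ctype}_{\LamDeltant{T}}(\LamDeltamv{x},\LamDeltant{t})$. The function is defined only for $\LamDeltant{t}$ of the form $\LamDeltant{h}$ (a head-normal application spine), so only two cases arise: $\LamDeltant{t} = \LamDeltamv{y}$ for some variable, and $\LamDeltant{t} = \LamDeltant{t_1}\,\LamDeltant{t_2}$. Both statements assume $\textsf{ctype}$ is defined, so in the variable case the defining clause forces $\LamDeltamv{y} = \LamDeltamv{x}$.

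For part (i), the base case $\LamDeltant{t} = \LamDeltamv{x}$ gives $\LamDeltant{T'} = \LamDeltant{T}$ directly from the defining equation, so $\textsf{head}(\LamDeltant{t}) = \LamDeltamv{x}$ and $\LamDeltant{T'} \leq \LamDeltant{T}$ trivially. In the application case, the defining equation requires $\textsf{ctype}_{\LamDeltant{T}}(\LamDeltamv{x},\LamDeltant{t_1}) = \LamDeltant{T'_1} \to \LamDeltant{T'}$; the induction hypothesis applied to $\LamDeltant{t_1}$ then yields $\textsf{head}(\LamDeltant{t_1}) = \LamDeltamv{x}$ (hence $\textsf{head}(\LamDeltant{t_1}\,\LamDeltant{t_2}) = \LamDeltamv{x}$ by definition of $\textsf{head}$) and $\LamDeltant{T'_1} \to \LamDeltant{T'} \leq \LamDeltant{T}$. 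Combining this with the ordering clause $\LamDeltant{T'_1} \to \LamDeltant{T'} > \LamDeltant{T'}$ from Definition~\ref{def:ordering} and transitivity yields $\LamDeltant{T'} < \LamDeltant{T}$, which is stronger than what is needed.

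For part (ii), I use the same case split and rely on Theorem~\ref{theorem:inversion}. If $\LamDeltant{t} = \LamDeltamv{x}$, then Inversion~(i) together with uniqueness of the type assigned to $\LamDeltamv{x}$ in $\Gamma, \LamDeltamv{x} : \LamDeltant{T}, \Gamma'$ forces $\LamDeltant{T'} = \LamDeltant{T}$, and the defining clause gives $\LamDeltant{T''} = \LamDeltant{T}$ as well. If $\LamDeltant{t} = \LamDeltant{t_1}\,\LamDeltant{t_2}$, Inversion on $\LamDeltadrulename{App}$ produces some $\LamDeltant{A}$ with $\Gamma, \LamDeltamv{x} : \LamDeltant{T}, \Gamma' \vdash \LamDeltant{t_1} : \LamDeltant{A} \to \LamDeltant{T'}$. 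By the induction hypothesis on $\LamDeltant{t_1}$, $\textsf{ctype}_{\LamDeltant{T}}(\LamDeltamv{x},\LamDeltant{t_1})$ must be (identical to) $\LamDeltant{A} \to \LamDeltant{T'}$, so the defining clause gives $\LamDeltant{T''} = \LamDeltant{T'}$.

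The only delicate point, and what I expect to be the mild obstacle, is ensuring that at the application step $\textsf{ctype}_{\LamDeltant{T}}(\LamDeltamv{x},\LamDeltant{t_1})$ is actually defined and yields an arrow type so that the recursive clause applies. For part (i) this is supplied by the hypothesis that the whole $\textsf{ctype}$ call succeeds; for part (ii) it follows from the typing derivation together with the induction hypothesis, which forces the shape $\LamDeltant{A} \to \LamDeltant{T'}$ to agree with the type returned by $\textsf{ctype}$. Everything else is immediate from the definitions.
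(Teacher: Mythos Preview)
Your proposal is correct and follows essentially the same approach as the paper: structural induction on $\LamDeltant{t}$ with the two cases (variable and application), using the defining clauses of $\textsf{ctype}$ together with inversion on the typing derivation in part~(ii). Your treatment is in fact slightly more careful than the paper's in the base case of part~(ii), where you explicitly invoke uniqueness of the type assigned to $\LamDeltamv{x}$ in the context, whereas the paper leaves this implicit.
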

\begin{proof}
  Both cases can be shown by straightforward induction on the structure of $t$.
  The proof can be found in Appendix~\ref{subsec:proof_of_ctype_props}.
\end{proof}
We now have everything we need to state the hereditary substitution
function for STLC. We denote the hereditary substitution function by
$ [  \LamDeltant{t}  /  \LamDeltamv{x}  ]^{ \LamDeltant{A} }  \LamDeltant{t'} $ where $\LamDeltant{A}$ is the type of $\LamDeltamv{x}$ and is called
the cut type, due to the correspondence between hereditary
substitution and cut elimination. In the definition of the hereditary
substitution function it is assumed that all variables are renamed as
to prevent variable capture.  It is also defined with respect to the
termination metric $(\LamDeltant{A},\LamDeltant{t})$ in lexicographic combination of our
ordering on types and the strict subexpression ordering on terms.
\begin{definition}
  \label{def:hereditary_substitution_function}
  The hereditary substitution function is defined as follows:
  \small
  \begin{itemize}
  \item[] $ [  \LamDeltant{t}  /  \LamDeltamv{x}  ]^{ \LamDeltant{A} }  \LamDeltamv{x}  = \LamDeltant{t}$
  \item[] $ [  \LamDeltant{t}  /  \LamDeltamv{x}  ]^{ \LamDeltant{A} }  \LamDeltamv{y}  = \LamDeltamv{y}$\\

  \item[] $ [  \LamDeltant{t}  /  \LamDeltamv{x}  ]^{ \LamDeltant{A} }  \LamDeltasym{(}   \lambda  \LamDeltamv{y} : \LamDeltant{A'}  .  \LamDeltant{t'}   \LamDeltasym{)}  =  \lambda  \LamDeltamv{y} : \LamDeltant{A'}  .  \LamDeltasym{(}   [  \LamDeltant{t}  /  \LamDeltamv{x}  ]^{ \LamDeltant{A} }  \LamDeltant{t'}   \LamDeltasym{)} $\\

  \item[] $ [  \LamDeltant{t}  /  \LamDeltamv{x}  ]^{ \LamDeltant{A} }  \LamDeltasym{(}  \LamDeltant{t_{{\mathrm{1}}}} \, \LamDeltant{t_{{\mathrm{2}}}}  \LamDeltasym{)}  = \LamDeltasym{(}   [  \LamDeltant{t}  /  \LamDeltamv{x}  ]^{ \LamDeltant{A} }  \LamDeltant{t_{{\mathrm{1}}}}   \LamDeltasym{)} \, \LamDeltasym{(}   [  \LamDeltant{t}  /  \LamDeltamv{x}  ]^{ \LamDeltant{A} }  \LamDeltant{t_{{\mathrm{2}}}}   \LamDeltasym{)}$\\
    \begin{tabular}{lll}
      & Where $\LamDeltasym{(}   [  \LamDeltant{t}  /  \LamDeltamv{x}  ]^{ \LamDeltant{A} }  \LamDeltant{t_{{\mathrm{1}}}}   \LamDeltasym{)}$ is not a $\lambda$-abstraction, or both 
      $\LamDeltasym{(}   [  \LamDeltant{t}  /  \LamDeltamv{x}  ]^{ \LamDeltant{A} }  \LamDeltant{t_{{\mathrm{1}}}}   \LamDeltasym{)}$ \\
      & and  $\LamDeltant{t_{{\mathrm{1}}}}$ are $\lambda$-abstractions.\\
      & \\
    \end{tabular}

  \item[] $ [  \LamDeltant{t}  /  \LamDeltamv{x}  ]^{ \LamDeltant{A} }  \LamDeltasym{(}  \LamDeltant{t_{{\mathrm{1}}}} \, \LamDeltant{t_{{\mathrm{2}}}}  \LamDeltasym{)}  =  [  \LamDeltant{s'_{{\mathrm{2}}}}  /  \LamDeltamv{y}  ]^{ \LamDeltant{A''} }  \LamDeltant{s'_{{\mathrm{1}}}} $\\
    \begin{tabular}{lll}
      & Where $\LamDeltasym{(}   [  \LamDeltant{t}  /  \LamDeltamv{x}  ]^{ \LamDeltant{A} }  \LamDeltant{t_{{\mathrm{1}}}}   \LamDeltasym{)} =  \lambda  \LamDeltamv{y} : \LamDeltant{A''}  .  \LamDeltant{s'_{{\mathrm{1}}}} $ for some $\LamDeltamv{y}$, $\LamDeltant{s'_{{\mathrm{1}}}}$ and $\LamDeltant{A''}$, \\
      & $ [  \LamDeltant{t}  /  \LamDeltamv{x}  ]^{ \LamDeltant{A} }  \LamDeltant{t_{{\mathrm{2}}}}  = \LamDeltant{s'_{{\mathrm{2}}}}$, and $ \textsf{ctype}_{ \LamDeltant{A} }( \LamDeltamv{x} , \LamDeltant{t_{{\mathrm{1}}}} )  =  \LamDeltant{A''}  \to  \LamDeltant{A'} $. \\
      & \\
    \end{tabular}
  \end{itemize}
\end{definition}
The definition of the hereditary substitution function is similar to
the definition of capture-avoiding substitution.  The differences show up
in the cases for application.  The last case of the
hereditary substitution function handles the case when a new
$\beta$-redex is created as a result of substitution. This case
depends heavily on the following lemma. 
\begin{lemma}[Properties of $ \textsf{ctype} $ Continued]
  \label{lemma:ctype_props_cont1}
  If $ \Gamma  \LamDeltasym{,}  \LamDeltamv{x}  \LamDeltasym{:}  \LamDeltant{T}  \LamDeltasym{,}  \Gamma'  \vdash  \LamDeltant{t_{{\mathrm{1}}}} \, \LamDeltant{t_{{\mathrm{2}}}}  :  \LamDeltant{T'} $, $ \Gamma  \vdash  \LamDeltant{t}  :  \LamDeltant{T} $, $ [  \LamDeltant{t}  /  \LamDeltamv{x}  ]^{ \LamDeltant{T} }  \LamDeltant{t_{{\mathrm{1}}}}  =
   \lambda  \LamDeltamv{y} : \LamDeltant{T_{{\mathrm{1}}}}  .  \LamDeltant{t'} $, and $\LamDeltant{t_{{\mathrm{1}}}}$ is not a $\lambda$-abstraction, then
  $\LamDeltant{t_{{\mathrm{1}}}}$ is in head normal form and there exists a type $\LamDeltant{A}$ 
  such that $ \textsf{ctype}_{ \LamDeltant{T} }( \LamDeltamv{x} , \LamDeltant{t_{{\mathrm{1}}}} )  = \LamDeltant{A}$.
\end{lemma}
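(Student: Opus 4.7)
The plan is to proceed by case analysis on the structure of $t_1$, exploiting the observation that the only clause of the hereditary substitution definition that can turn a non-$\lambda$-abstraction into a $\lambda$-abstraction is the last clause (the ``create-redex'' case), and that this clause carries the definedness of $\textsf{ctype}$ as a side condition.

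The variable and $\lambda$-cases will be short. If $t_1 = x$, then $[t/x]^T x = t$ together with the hypothesis forces $t = \lambda y : T_1 . t'$; the term $x$ is trivially in head normal form and $\textsf{ctype}_T(x, x) = T$ is defined, so both conclusions hold. If $t_1 = y$ with $y \neq x$, the definition gives $[t/x]^T y = y$, which is not a $\lambda$-abstraction, contradicting the hypothesis. The case $t_1 = \lambda z : A' . t''$ is excluded by assumption.

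The only nontrivial case is $t_1 = s_1 s_2$. Here I would case on which clause of the hereditary substitution function is invoked. Each of the clauses producing a syntactic application --- namely the one where $[t/x]^T s_1$ is not a $\lambda$-abstraction, and the one where both $s_1$ and $[t/x]^T s_1$ are $\lambda$-abstractions --- yields a result of the shape $(\cdots)(\cdots)$, which contradicts the hypothesis that $[t/x]^T t_1$ is a $\lambda$-abstraction. The only remaining possibility is the ``create-redex'' clause, and its side condition hands us $\textsf{ctype}_T(x, s_1) = A'' \to A'$ for some $A''$ and $A'$. By the second clause of the $\textsf{ctype}$ definition this immediately yields $\textsf{ctype}_T(x, s_1 \, s_2) = A'$; and by Lemma~\ref{lemma:ctype_props}(i) we have $\textsf{head}(s_1) = x$, so $t_1 = s_1 s_2$ has $x$ at its head, i.e., is in head normal form.

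The main obstacle I anticipate is simply bookkeeping: correctly enumerating the three sub-cases of the application clause of the hereditary substitution function and verifying that the shape argument (application vs.\ $\lambda$-abstraction) genuinely rules out all but the ``create-redex'' clause. Once that is in place, no induction hypothesis is needed: both conclusions drop out of unfolding $\textsf{ctype}$ together with Lemma~\ref{lemma:ctype_props}(i).
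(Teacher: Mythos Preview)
Your argument is correct, and it is genuinely more direct than the paper's. The paper proves this lemma by induction on the structure of $t_1\,t_2$: in the case $t_1 \equiv s_1\,s_2$, it first argues (exactly as you do) that $[t/x]^T s_1$ must be a $\lambda$-abstraction while $s_1$ is not, and then invokes the induction hypothesis on the smaller application $s_1\,s_2$ to obtain that $\textsf{ctype}_T(x,s_1)$ is defined, finally lifting this to $\textsf{ctype}_T(x,s_1\,s_2)$ via inversion and Lemma~\ref{lemma:ctype_props}(ii). You instead observe that the very clause of the hereditary substitution function that must have fired already carries $\textsf{ctype}_T(x,s_1) = A'' \to A'$ as an explicit side condition, so there is nothing to re-derive inductively; one unfolding of $\textsf{ctype}$ finishes the job. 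This is a legitimate shortcut: since the hypothesis guarantees $[t/x]^T t_1$ is defined and equal to a $\lambda$-abstraction, and the only non-application-producing clause is the create-redex clause, its side conditions necessarily hold. Your route buys brevity and avoids setting up an induction; the paper's route is perhaps more self-contained in that it does not lean on reading off a definitional side condition, but the two reach the same conclusion by essentially the same shape analysis on the clauses.
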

\begin{proof}
  This can be shown by induction on the structure of $\LamDeltant{t_{{\mathrm{1}}}} \, \LamDeltant{t_{{\mathrm{2}}}}$.  See
  part one of the proof in
  Appendix~\ref{subsec:proof_of_ctype_props_cont}.
\end{proof}
The previous properties state that if we have created a redex using
hereditary substitution, then $ \textsf{ctype} $ must be defined.  This in
turn tells us that in the case where hereditary substitution is applied to
a term of the form $\LamDeltant{t_{{\mathrm{1}}}} \, \LamDeltant{t_{{\mathrm{2}}}}$ and a new $\beta$-redex is created then
the head of $\LamDeltant{t_{{\mathrm{1}}}}$ must be the variable being replaced. Furthermore,
recursively applying the hereditary substitution function to $\LamDeltant{t_{{\mathrm{1}}}}$
must yield a $\lambda$-abstraction.  Hence, the cut type must be an
arrow type.  Now $ \textsf{ctype} $ then tells us that this arrow type must
be either equal or strictly larger than the type of the created
$\lambda$-abstraction.  Thus, we can see that recursively reducing the
application of the results of recursively applying hereditary
substitution to $\LamDeltant{t_{{\mathrm{1}}}}$ and $\LamDeltant{t_{{\mathrm{2}}}}$ terminates based on our
ordering.  This explanation reveals that $ \textsf{ctype} $ is instrumental
in the detection of newly created redexes and in proving properties of
the hereditary substitution function.  

The full normalization proof for STLC using hereditary substitution can be
found in \cite{Eades:2011}. The following example gives
some intuition of how the hereditary substitution function operates.
\begin{example}
  \label{ex:beta_reduction}
  \small
  Consider the terms $\LamDeltant{t} \equiv  \lambda  \LamDeltamv{f} :   \text{b}   \to   \text{b}    .  \LamDeltamv{f} $ and $\LamDeltant{t'} \equiv \LamDeltasym{(}  \LamDeltamv{x} \, \LamDeltasym{(}   \lambda  \LamDeltamv{y} :  \text{b}   .  \LamDeltamv{y}   \LamDeltasym{)}  \LamDeltasym{)} \, \LamDeltamv{z}$,
  where $\LamDeltamv{z}$ is a free variable of type $ \text{b} $.  Our goal is to compute 
  $ [  \LamDeltant{t}  /  \LamDeltamv{x}  ]^{ \LamDeltasym{(}   \LamDeltasym{(}    \text{b}   \to   \text{b}    \LamDeltasym{)}  \to  \LamDeltasym{(}    \text{b}   \to   \text{b}    \LamDeltasym{)}   \LamDeltasym{)} }  \LamDeltant{t'} $ using
  the definition of the hereditary substitution function in Definition~\ref{def:hereditary_substitution_function}.
  First, 
  \begin{center}
    $ [  \LamDeltant{t}  /  \LamDeltamv{x}  ]^{ \LamDeltasym{(}   \LamDeltasym{(}    \text{b}   \to   \text{b}    \LamDeltasym{)}  \to  \LamDeltasym{(}    \text{b}   \to   \text{b}    \LamDeltasym{)}   \LamDeltasym{)} }  \LamDeltasym{(}  \LamDeltamv{x} \, \LamDeltasym{(}   \lambda  \LamDeltamv{y} :  \text{b}   .  \LamDeltamv{y}   \LamDeltasym{)}  \LamDeltasym{)}  =  \lambda  \LamDeltamv{y} :  \text{b}   .  \LamDeltamv{y} $,
  \end{center} 
  because 
  \begin{center}
    \begin{tabular}{lll}
      $ \textsf{ctype}_{ \LamDeltasym{(}   \LamDeltasym{(}    \text{b}   \to   \text{b}    \LamDeltasym{)}  \to  \LamDeltasym{(}    \text{b}   \to   \text{b}    \LamDeltasym{)}   \LamDeltasym{)} }( \LamDeltamv{x} , \LamDeltamv{x} )  =  \LamDeltasym{(}    \text{b}   \to   \text{b}    \LamDeltasym{)}  \to  \LamDeltasym{(}    \text{b}   \to   \text{b}    \LamDeltasym{)} $, \\
      \\
      $ [  \LamDeltant{t}  /  \LamDeltamv{x}  ]^{ \LamDeltasym{(}   \LamDeltasym{(}    \text{b}   \to   \text{b}    \LamDeltasym{)}  \to  \LamDeltasym{(}    \text{b}   \to   \text{b}    \LamDeltasym{)}   \LamDeltasym{)} }  \LamDeltamv{x}  = \LamDeltant{t}$,\\
      \\
      $ [  \LamDeltant{t}  /  \LamDeltamv{x}  ]^{ \LamDeltasym{(}   \LamDeltasym{(}    \text{b}   \to   \text{b}    \LamDeltasym{)}  \to  \LamDeltasym{(}    \text{b}   \to   \text{b}    \LamDeltasym{)}   \LamDeltasym{)} }  \LamDeltasym{(}   \lambda  \LamDeltamv{y} :  \text{b}   .  \LamDeltamv{y}   \LamDeltasym{)}  =  \lambda  \LamDeltamv{y} :  \text{b}   .  \LamDeltamv{y} $,
    \end{tabular}
  \end{center}
  and
  \begin{center}
    $ [  \LamDeltasym{(}   \lambda  \LamDeltamv{y} :  \text{b}   .  \LamDeltamv{y}   \LamDeltasym{)}  /  \LamDeltamv{f}  ]^{ \LamDeltasym{(}    \text{b}   \to   \text{b}    \LamDeltasym{)} }  \LamDeltamv{f}  =  \lambda  \LamDeltamv{y} :  \text{b}   .  \LamDeltamv{y} $.
  \end{center} 
  Now the previous facts give us that 
  \begin{center}
    \begin{math}
       [  \LamDeltant{t}  /  \LamDeltamv{x}  ]^{ \LamDeltasym{(}   \LamDeltasym{(}    \text{b}   \to   \text{b}    \LamDeltasym{)}  \to  \LamDeltasym{(}    \text{b}   \to   \text{b}    \LamDeltasym{)}   \LamDeltasym{)} }  \LamDeltant{t'}  = \LamDeltamv{z},
    \end{math}
  \end{center}
  because $ [  \LamDeltant{t}  /  \LamDeltamv{x}  ]^{ \LamDeltasym{(}   \LamDeltasym{(}    \text{b}   \to   \text{b}    \LamDeltasym{)}  \to  \LamDeltasym{(}    \text{b}   \to   \text{b}    \LamDeltasym{)}   \LamDeltasym{)} }  \LamDeltamv{z}  = \LamDeltamv{z}$, and $ [  \LamDeltamv{z}  /  \LamDeltamv{y}  ]^{  \text{b}  }  \LamDeltamv{y}  = \LamDeltamv{z}$.
\end{example}




%
\section{Extending The Hereditary Substitution Function to the $\lambda\Delta$-Calculus}
\label{sec:the_hereditary_substitution_function_for_the_ld-calculus}
Since the $\lambda\Delta$-calculus is an extension of STLC, we might expect that 
the hereditary substitution function for the $\lambda\Delta$-calculus is also an extension of the
hereditary substitution function for STLC.  In this section we show that this extension is
non-trivial by first considering the naive extension, and then discussing why it does not work.
Following this, we give the final extension and prove it correct.

\subsection{Problems with a Naive Extension}
\label{subsec:the_naive_extension}
Lets consider the definition of the hereditary substitution function
for STLC extended with two new cases. The first case for the $\Delta$-abstraction
whose definition parallels the definition for the
$\lambda$-abstraction.  The second is a new application case which
handles newly created structural redexes and is defined following the
same pattern as the case which handles $\beta$-redexes.  We use the same termination
metric we previously used. 
\begin{definition}
  \label{def:hereditary_substitution_function}
  The naive hereditary substitution function is defined as follows:
  \small
  \begin{itemize}
  \item[] $ [  \LamDeltant{t}  /  \LamDeltamv{x}  ]^{ \LamDeltant{A} }  \LamDeltamv{x}  = \LamDeltant{t}$
  \item[] $ [  \LamDeltant{t}  /  \LamDeltamv{x}  ]^{ \LamDeltant{A} }  \LamDeltamv{y}  = \LamDeltamv{y}$\\

  \item[] $ [  \LamDeltant{t}  /  \LamDeltamv{x}  ]^{ \LamDeltant{A} }  \LamDeltasym{(}   \lambda  \LamDeltamv{y} : \LamDeltant{A'}  .  \LamDeltant{t'}   \LamDeltasym{)}  =  \lambda  \LamDeltamv{y} : \LamDeltant{A'}  .  \LamDeltasym{(}   [  \LamDeltant{t}  /  \LamDeltamv{x}  ]^{ \LamDeltant{A} }  \LamDeltant{t'}   \LamDeltasym{)} $
  \item[] $ [  \LamDeltant{t}  /  \LamDeltamv{x}  ]^{ \LamDeltant{A} }  \LamDeltasym{(}   \Delta  \LamDeltamv{y}  :  \LamDeltant{A'}  .  \LamDeltant{t'}   \LamDeltasym{)}  =  \Delta  \LamDeltamv{y}  :  \LamDeltant{A'}  .  \LamDeltasym{(}   [  \LamDeltant{t}  /  \LamDeltamv{x}  ]^{ \LamDeltant{A} }  \LamDeltant{t'}   \LamDeltasym{)} $\\

  \item[] $ [  \LamDeltant{t}  /  \LamDeltamv{x}  ]^{ \LamDeltant{A} }  \LamDeltasym{(}  \LamDeltant{t_{{\mathrm{1}}}} \, \LamDeltant{t_{{\mathrm{2}}}}  \LamDeltasym{)}  = \LamDeltasym{(}   [  \LamDeltant{t}  /  \LamDeltamv{x}  ]^{ \LamDeltant{A} }  \LamDeltant{t_{{\mathrm{1}}}}   \LamDeltasym{)} \, \LamDeltasym{(}   [  \LamDeltant{t}  /  \LamDeltamv{x}  ]^{ \LamDeltant{A} }  \LamDeltant{t_{{\mathrm{2}}}}   \LamDeltasym{)}$\\
    \begin{tabular}{lll}
      & Where $\LamDeltasym{(}   [  \LamDeltant{t}  /  \LamDeltamv{x}  ]^{ \LamDeltant{A} }  \LamDeltant{t_{{\mathrm{1}}}}   \LamDeltasym{)}$ is not a $\lambda$-abstraction or $\Delta$-abstraction,  or both $\LamDeltasym{(}   [  \LamDeltant{t}  /  \LamDeltamv{x}  ]^{ \LamDeltant{A} }  \LamDeltant{t_{{\mathrm{1}}}}   \LamDeltasym{)}$ \\
      & and  $\LamDeltant{t_{{\mathrm{1}}}}$ are $\lambda$-abstractions or $\Delta$-abstractions.\\
      & \\
    \end{tabular}

  \item[] $ [  \LamDeltant{t}  /  \LamDeltamv{x}  ]^{ \LamDeltant{A} }  \LamDeltasym{(}  \LamDeltant{t_{{\mathrm{1}}}} \, \LamDeltant{t_{{\mathrm{2}}}}  \LamDeltasym{)}  =  [  \LamDeltant{s'_{{\mathrm{2}}}}  /  \LamDeltamv{y}  ]^{ \LamDeltant{A''} }  \LamDeltant{s'_{{\mathrm{1}}}} $\\
    \begin{tabular}{lll}
      & Where $\LamDeltasym{(}   [  \LamDeltant{t}  /  \LamDeltamv{x}  ]^{ \LamDeltant{A} }  \LamDeltant{t_{{\mathrm{1}}}}   \LamDeltasym{)} =  \lambda  \LamDeltamv{y} : \LamDeltant{A''}  .  \LamDeltant{s'_{{\mathrm{1}}}} $ for some $\LamDeltamv{y}$, $\LamDeltant{s'_{{\mathrm{1}}}}$ and $\LamDeltant{A''}$, \\
      & $ [  \LamDeltant{t}  /  \LamDeltamv{x}  ]^{ \LamDeltant{A} }  \LamDeltant{t_{{\mathrm{2}}}}  = \LamDeltant{s'_{{\mathrm{2}}}}$, and $ \textsf{ctype}_{ \LamDeltant{A} }( \LamDeltamv{x} , \LamDeltant{t_{{\mathrm{1}}}} )  =  \LamDeltant{A''}  \to  \LamDeltant{A'} $. \\
      & \\
    \end{tabular}

  \item[] $ [  \LamDeltant{t}  /  \LamDeltamv{x}  ]^{ \LamDeltant{A} }  \LamDeltasym{(}  \LamDeltant{t_{{\mathrm{1}}}} \, \LamDeltant{t_{{\mathrm{2}}}}  \LamDeltasym{)}  = \Delta z: \neg  \LamDeltant{A'} .[  \lambda  \LamDeltamv{y} :  \LamDeltant{A''}  \to  \LamDeltant{A'}   .  \LamDeltasym{(}  \LamDeltamv{z} \, \LamDeltasym{(}  \LamDeltamv{y} \, \LamDeltant{s_{{\mathrm{2}}}}  \LamDeltasym{)}  \LamDeltasym{)} /\LamDeltamv{y} ]^{ \neg  \LamDeltasym{(}   \LamDeltant{A''}  \to  \LamDeltant{A'}   \LamDeltasym{)} } \LamDeltamv{s}$\\
    \begin{tabular}{lll}
      & Where $\LamDeltasym{(}   [  \LamDeltant{t}  /  \LamDeltamv{x}  ]^{ \LamDeltant{A} }  \LamDeltant{t_{{\mathrm{1}}}}   \LamDeltasym{)} =  \Delta  \LamDeltamv{y}  :   \neg  \LamDeltasym{(}   \LamDeltant{A''}  \to  \LamDeltant{A'}   \LamDeltasym{)}   .  \LamDeltant{s} $ for some, $\LamDeltamv{y}$ $\LamDeltamv{s}$, 
      and $ \LamDeltant{A''}  \to  \LamDeltant{A'} $, \\      
      & $\LamDeltasym{(}   [  \LamDeltant{t}  /  \LamDeltamv{x}  ]^{ \LamDeltant{A} }  \LamDeltant{t_{{\mathrm{2}}}}   \LamDeltasym{)} = \LamDeltant{s_{{\mathrm{2}}}}$ for some $\LamDeltant{s_{{\mathrm{2}}}}$, $ \textsf{ctype}_{ \LamDeltant{A} }( \LamDeltamv{x} , \LamDeltant{t_{{\mathrm{1}}}} )  =  \LamDeltant{A''}  \to  \LamDeltant{A'} $, and $\LamDeltamv{z}$ is completely fresh.\\
      & \\
    \end{tabular}  
  \end{itemize}
\end{definition}
There is one glaring issue with this definition and it lies in the final case.  
We know from Lemma~\ref{lemma:ctype_props} and Lemma~\ref{lemma:ctype_props_cont} 
that $ \textsf{ctype}_{ \LamDeltant{A} }( \LamDeltamv{x} , \LamDeltant{t_{{\mathrm{1}}}} )  =  \LamDeltant{A''}  \to  \LamDeltant{A'} $ 
implies that $\LamDeltant{A} \geq  \LamDeltant{A''}  \to  \LamDeltant{A'}  <  \neg  \LamDeltasym{(}   \LamDeltant{A''}  \to  \LamDeltant{A'}   \LamDeltasym{)} $. Thus, this 
definition is not well founded!  To fix this issue instead of naively following 
the structural reduction rule we immediately simultaneously hereditarily reduce 
all redexes created by replacing $\LamDeltamv{y}$ with the linear $\lambda$-abstraction
$ \lambda  \LamDeltamv{y} :  \LamDeltant{A''}  \to  \LamDeltant{A'}   .  \LamDeltasym{(}  \LamDeltamv{z} \, \LamDeltasym{(}  \LamDeltamv{y} \, \LamDeltant{s_{{\mathrm{2}}}}  \LamDeltasym{)}  \LamDeltasym{)} $.  To accomplish this we will define mutually with
the hereditary substitution function a new function called the hereditary
structural substitution function.

\subsection{A Correct Extension of Hereditary Substitution}
\label{subsec:the_final_extension}
In order to reduce structural redexes in the definition of the
hereditary substitution we will define by induction mutually with
the hereditary substitution function  a function called the
hereditary structural substitution function.  This function will
use the notion of a multi-substitution.  These are
given by the following grammar:
\begin{center}
  \begin{math}
    \Theta ::=  \cdot \,|\,\Theta  \LamDeltasym{,}   ( \LamDeltamv{y} , \LamDeltamv{z} , \LamDeltant{t} ) 
  \end{math}
\end{center}
We denote the hereditary structural substitution function by $ \langle  \Theta  \rangle^{ \LamDeltant{A} }_{ \LamDeltant{A'} }  \LamDeltant{t'} $ and hereditary substitution by $ [  \LamDeltant{t}  /  \LamDeltamv{x}  ]^{ \LamDeltant{A} }  \LamDeltant{t'} $.  The
type of all the first projections of the elements of $\Theta$ is
$ \neg  \LamDeltasym{(}   \LamDeltant{A}  \to  \LamDeltant{A'}   \LamDeltasym{)} $ and the type of the second projections is $ \neg  \LamDeltant{A'} $.
Both functions are defined by mutual induction using the metric
$(\LamDeltant{A},f,\LamDeltant{t'})$, where $f \in \{0,1\}$, in lexicographic
combination with the ordering on types, the natural number ordering,
and the strict subexpression on terms.  The meta-variable $f$ labels
each function and is equal to $0$ in the definition of the hereditary
substitution function and is equal to $1$ in the definition of the
hereditary structural substitution function.  Again, in the definitions of
the hereditary substitution and hereditary structural substitution
function it is assumed that all variables have been renamed as to
prevent variable capture.  The following is the final definition of the
hereditary substitution function for the $\lambda\Delta$-calculus.

\begin{definition}
  \label{def:hereditary_substitution_function}
  The hereditary substitution function is defined as follows:
  \small
  \begin{itemize}
  \item[] $ \langle  \Theta  \rangle^{ \LamDeltant{A_{{\mathrm{1}}}} }_{ \LamDeltant{A_{{\mathrm{2}}}} }  \LamDeltamv{x}  =  \lambda  \LamDeltamv{y} :  \LamDeltant{A_{{\mathrm{1}}}}  \to  \LamDeltant{A_{{\mathrm{2}}}}   .  \LamDeltasym{(}  \LamDeltamv{z} \, \LamDeltasym{(}  \LamDeltamv{y} \, \LamDeltant{t}  \LamDeltasym{)}  \LamDeltasym{)} $\\
    \begin{tabular}{lll}
      & Where $ ( \LamDeltamv{x} , \LamDeltamv{z} , \LamDeltant{t} )  \in \Theta$, for some $\LamDeltamv{z}$ and $\LamDeltant{t}$, and $\LamDeltamv{y}$ is fresh in $\LamDeltamv{x}$, $\LamDeltamv{z}$, and $\LamDeltant{t}$.\\
      & \\
    \end{tabular}
  \item[] $ \langle  \Theta  \rangle^{ \LamDeltant{A_{{\mathrm{1}}}} }_{ \LamDeltant{A_{{\mathrm{2}}}} }  \LamDeltamv{x}  = x$\\
    \begin{tabular}{lll}
      & Where $ ( \LamDeltamv{x} , \LamDeltamv{z} , \LamDeltant{t} )  \not\in \Theta$ for any $\LamDeltamv{z}$ or $\LamDeltant{t}$.\\
      & \\
    \end{tabular}
  \item[] $ \langle  \Theta  \rangle^{ \LamDeltant{A_{{\mathrm{1}}}} }_{ \LamDeltant{A_{{\mathrm{2}}}} }  \LamDeltasym{(}   \lambda  \LamDeltamv{y} : \LamDeltant{A}  .  \LamDeltant{t}   \LamDeltasym{)}  =  \lambda  \LamDeltamv{y} : \LamDeltant{A}  .   \langle  \Theta  \rangle^{ \LamDeltant{A_{{\mathrm{1}}}} }_{ \LamDeltant{A_{{\mathrm{2}}}} }  \LamDeltant{t}  $\\
  \item[] $ \langle  \Theta  \rangle^{ \LamDeltant{A_{{\mathrm{1}}}} }_{ \LamDeltant{A_{{\mathrm{2}}}} }  \LamDeltasym{(}   \Delta  \LamDeltamv{y}  :  \LamDeltant{A}  .  \LamDeltant{t}   \LamDeltasym{)}  =  \Delta  \LamDeltamv{y}  :  \LamDeltant{A}  .   \langle  \Theta  \rangle^{ \LamDeltant{A_{{\mathrm{1}}}} }_{ \LamDeltant{A_{{\mathrm{2}}}} }  \LamDeltant{t}  $\\

  \item[] $ \langle  \Theta  \rangle^{ \LamDeltant{A_{{\mathrm{1}}}} }_{ \LamDeltant{A_{{\mathrm{2}}}} }  \LamDeltasym{(}  \LamDeltamv{x} \, \LamDeltant{t'}  \LamDeltasym{)}  = \LamDeltamv{z} \,  [  \LamDeltant{t}  /  \LamDeltamv{y}  ]^{ \LamDeltant{A_{{\mathrm{1}}}} }  \LamDeltant{s} $\\
    \begin{tabular}{lll}
      & Where $ ( \LamDeltamv{x} , \LamDeltamv{z} , \LamDeltant{t} )  \in \Theta$, $t' \equiv  \lambda  \LamDeltamv{y} : \LamDeltant{A_{{\mathrm{1}}}}  .  \LamDeltant{t''} $, for some $\LamDeltamv{y}$ and
      $\LamDeltant{t''}$, and $ \langle  \Theta  \rangle^{ \LamDeltant{A_{{\mathrm{1}}}} }_{ \LamDeltant{A_{{\mathrm{2}}}} }  \LamDeltant{t''}  = \LamDeltamv{s}$.\\
      & \\
    \end{tabular}
  \item[] $ \langle  \Theta  \rangle^{ \LamDeltant{A_{{\mathrm{1}}}} }_{ \LamDeltant{A_{{\mathrm{2}}}} }  \LamDeltasym{(}  \LamDeltamv{x} \, \LamDeltant{t'}  \LamDeltasym{)}  = \LamDeltamv{z} \, \LamDeltasym{(}   \Delta  \LamDeltamv{z_{{\mathrm{2}}}}  :   \neg  \LamDeltant{A_{{\mathrm{2}}}}   .  \LamDeltant{s}   \LamDeltasym{)}$\\
    \begin{tabular}{lll}
      & Where $ ( \LamDeltamv{x} , \LamDeltamv{z} , \LamDeltant{t} )  \in \Theta$, $t' \equiv  \Delta  \LamDeltamv{y}  :   \neg  \LamDeltasym{(}   \LamDeltant{A_{{\mathrm{1}}}}  \to  \LamDeltant{A_{{\mathrm{2}}}}   \LamDeltasym{)}   .  \LamDeltant{t''} $, for some $\LamDeltamv{y}$ and $\LamDeltant{t''}$, and \\
      & $ \langle  \Theta  \LamDeltasym{,}   ( \LamDeltamv{y} , \LamDeltamv{z_{{\mathrm{2}}}} , \LamDeltant{t} )   \rangle^{ \LamDeltant{A_{{\mathrm{1}}}} }_{ \LamDeltant{A_{{\mathrm{2}}}} }  \LamDeltant{t''}  = \LamDeltamv{s}$, for some fresh $\LamDeltamv{z_{{\mathrm{2}}}}$.\\
      & \\
    \end{tabular}
  \item[] $ \langle  \Theta  \rangle^{ \LamDeltant{A_{{\mathrm{1}}}} }_{ \LamDeltant{A_{{\mathrm{2}}}} }  \LamDeltasym{(}  \LamDeltamv{x} \, \LamDeltant{t'}  \LamDeltasym{)}  = \LamDeltamv{z} \, \LamDeltant{s'}$\\
    \begin{tabular}{lll}
      & Where $ ( \LamDeltamv{x} , \LamDeltamv{z} , \LamDeltant{t} )  \in \Theta$, $\LamDeltant{t'}$ is not an abstraction, and $ \langle  \Theta  \rangle^{ \LamDeltant{A_{{\mathrm{1}}}} }_{ \LamDeltant{A_{{\mathrm{2}}}} }  \LamDeltant{t'}  = \LamDeltant{s'}$.\\
      & \\
    \end{tabular}
  \item[] $ \langle  \Theta  \rangle^{ \LamDeltant{A_{{\mathrm{1}}}} }_{ \LamDeltant{A_{{\mathrm{2}}}} }  \LamDeltasym{(}  \LamDeltant{t_{{\mathrm{1}}}} \, \LamDeltant{t_{{\mathrm{2}}}}  \LamDeltasym{)}  = \LamDeltant{s_{{\mathrm{1}}}} \, \LamDeltant{s_{{\mathrm{2}}}}$\\
    \begin{tabular}{lll}
      & Where $\LamDeltant{t_{{\mathrm{1}}}}$ is either not a variable, or it is both a variable
      and $(\LamDeltant{t_{{\mathrm{1}}}},\LamDeltamv{z'},\LamDeltant{t'}) \not \in \Theta$ for any \\
      & $\LamDeltant{t'}$ and $\LamDeltamv{z'}$, $ \langle  \Theta  \rangle^{ \LamDeltant{A_{{\mathrm{1}}}} }_{ \LamDeltant{A_{{\mathrm{2}}}} }  \LamDeltant{t_{{\mathrm{1}}}}  = \LamDeltant{s_{{\mathrm{1}}}}$, and $ \langle  \Theta  \rangle^{ \LamDeltant{A_{{\mathrm{1}}}} }_{ \LamDeltant{A_{{\mathrm{2}}}} }  \LamDeltant{t_{{\mathrm{2}}}}  = \LamDeltant{s_{{\mathrm{2}}}}$.\\
      & \\
    \end{tabular}

  \item[] $ [  \LamDeltant{t}  /  \LamDeltamv{x}  ]^{ \LamDeltant{A} }  \LamDeltamv{x}  = \LamDeltant{t}$
  \item[] $ [  \LamDeltant{t}  /  \LamDeltamv{x}  ]^{ \LamDeltant{A} }  \LamDeltamv{y}  = \LamDeltamv{y}$\\

  \item[] $ [  \LamDeltant{t}  /  \LamDeltamv{x}  ]^{ \LamDeltant{A} }  \LamDeltasym{(}   \lambda  \LamDeltamv{y} : \LamDeltant{A'}  .  \LamDeltant{t'}   \LamDeltasym{)}  =  \lambda  \LamDeltamv{y} : \LamDeltant{A'}  .  \LamDeltasym{(}   [  \LamDeltant{t}  /  \LamDeltamv{x}  ]^{ \LamDeltant{A} }  \LamDeltant{t'}   \LamDeltasym{)} $
  \item[] $ [  \LamDeltant{t}  /  \LamDeltamv{x}  ]^{ \LamDeltant{A} }  \LamDeltasym{(}   \Delta  \LamDeltamv{y}  :  \LamDeltant{A'}  .  \LamDeltant{t'}   \LamDeltasym{)}  =  \Delta  \LamDeltamv{y}  :  \LamDeltant{A'}  .  \LamDeltasym{(}   [  \LamDeltant{t}  /  \LamDeltamv{x}  ]^{ \LamDeltant{A} }  \LamDeltant{t'}   \LamDeltasym{)} $\\

  \item[] $ [  \LamDeltant{t}  /  \LamDeltamv{x}  ]^{ \LamDeltant{A} }  \LamDeltasym{(}  \LamDeltant{t_{{\mathrm{1}}}} \, \LamDeltant{t_{{\mathrm{2}}}}  \LamDeltasym{)}  = \LamDeltasym{(}   [  \LamDeltant{t}  /  \LamDeltamv{x}  ]^{ \LamDeltant{A} }  \LamDeltant{t_{{\mathrm{1}}}}   \LamDeltasym{)} \, \LamDeltasym{(}   [  \LamDeltant{t}  /  \LamDeltamv{x}  ]^{ \LamDeltant{A} }  \LamDeltant{t_{{\mathrm{2}}}}   \LamDeltasym{)}$\\
    \begin{tabular}{lll}
      & Where $\LamDeltasym{(}   [  \LamDeltant{t}  /  \LamDeltamv{x}  ]^{ \LamDeltant{A} }  \LamDeltant{t_{{\mathrm{1}}}}   \LamDeltasym{)}$ is not a $\lambda$-abstraction or $\Delta$-abstraction,  or both $\LamDeltasym{(}   [  \LamDeltant{t}  /  \LamDeltamv{x}  ]^{ \LamDeltant{A} }  \LamDeltant{t_{{\mathrm{1}}}}   \LamDeltasym{)}$ \\
      & and  $\LamDeltant{t_{{\mathrm{1}}}}$ are $\lambda$-abstractions or $\Delta$-abstractions.\\
      & \\
    \end{tabular}

  \item[] $ [  \LamDeltant{t}  /  \LamDeltamv{x}  ]^{ \LamDeltant{A} }  \LamDeltasym{(}  \LamDeltant{t_{{\mathrm{1}}}} \, \LamDeltant{t_{{\mathrm{2}}}}  \LamDeltasym{)}  =  [  \LamDeltant{s'_{{\mathrm{2}}}}  /  \LamDeltamv{y}  ]^{ \LamDeltant{A''} }  \LamDeltant{s'_{{\mathrm{1}}}} $\\
    \begin{tabular}{lll}
      & Where $\LamDeltasym{(}   [  \LamDeltant{t}  /  \LamDeltamv{x}  ]^{ \LamDeltant{A} }  \LamDeltant{t_{{\mathrm{1}}}}   \LamDeltasym{)} =  \lambda  \LamDeltamv{y} : \LamDeltant{A''}  .  \LamDeltant{s'_{{\mathrm{1}}}} $ for some $\LamDeltamv{y}$, $\LamDeltant{s'_{{\mathrm{1}}}}$ and $\LamDeltant{A''}$, \\
      & $ [  \LamDeltant{t}  /  \LamDeltamv{x}  ]^{ \LamDeltant{A} }  \LamDeltant{t_{{\mathrm{2}}}}  = \LamDeltant{s'_{{\mathrm{2}}}}$, and $ \textsf{ctype}_{ \LamDeltant{A} }( \LamDeltamv{x} , \LamDeltant{t_{{\mathrm{1}}}} )  =  \LamDeltant{A''}  \to  \LamDeltant{A'} $. \\
      & \\
    \end{tabular}

  \item[] $ [  \LamDeltant{t}  /  \LamDeltamv{x}  ]^{ \LamDeltant{A} }  \LamDeltasym{(}  \LamDeltant{t_{{\mathrm{1}}}} \, \LamDeltant{t_{{\mathrm{2}}}}  \LamDeltasym{)}  = \Delta z: \neg  \LamDeltant{A'} . \langle   ( \LamDeltamv{y} , \LamDeltamv{z} , \LamDeltant{s_{{\mathrm{2}}}} )   \rangle^{ \LamDeltant{A''} }_{ \LamDeltant{A'} }  \LamDeltant{s} $\\
    \begin{tabular}{lll}
      & Where $\LamDeltasym{(}   [  \LamDeltant{t}  /  \LamDeltamv{x}  ]^{ \LamDeltant{A} }  \LamDeltant{t_{{\mathrm{1}}}}   \LamDeltasym{)} =  \Delta  \LamDeltamv{y}  :   \neg  \LamDeltasym{(}   \LamDeltant{A''}  \to  \LamDeltant{A'}   \LamDeltasym{)}   .  \LamDeltant{s} $ for some $\LamDeltamv{y}$ $\LamDeltamv{s}$, 
      and $ \LamDeltant{A''}  \to  \LamDeltant{A'} $, \\      
      & $\LamDeltasym{(}   [  \LamDeltant{t}  /  \LamDeltamv{x}  ]^{ \LamDeltant{A} }  \LamDeltant{t_{{\mathrm{2}}}}   \LamDeltasym{)} = \LamDeltant{s_{{\mathrm{2}}}}$ for some $\LamDeltant{s_{{\mathrm{2}}}}$, $ \textsf{ctype}_{ \LamDeltant{A} }( \LamDeltamv{x} , \LamDeltant{t_{{\mathrm{1}}}} )  =  \LamDeltant{A''}  \to  \LamDeltant{A'} $, and $\LamDeltamv{z}$ is fresh.\\
      & \\
    \end{tabular}  
  \end{itemize}
\end{definition}
We can see in the final case of the hereditary substitution function that the cut type has decreased.  Hence, this
case is now well founded. Lets consider an example which illustrates how our new definition operates. 
\begin{example}
  \label{ex:struct_reduction}
  \small
  Consider the terms $\LamDeltant{t} \equiv  \Delta  \LamDeltamv{f}  :   \neg  \LamDeltasym{(}    \text{b}   \to   \text{b}    \LamDeltasym{)}   .  \LamDeltasym{(}  \LamDeltamv{f} \, \LamDeltasym{(}   \Delta  \LamDeltamv{f'}  :   \neg  \LamDeltasym{(}    \text{b}   \to   \text{b}    \LamDeltasym{)}   .  \LamDeltasym{(}  \LamDeltamv{f'} \, \LamDeltasym{(}   \lambda  \LamDeltamv{z} :  \text{b}   .  \LamDeltamv{z}   \LamDeltasym{)}  \LamDeltasym{)}   \LamDeltasym{)}  \LamDeltasym{)} $ and $\LamDeltant{t'} \equiv \LamDeltamv{x} \, \LamDeltamv{u}$,
  where $\LamDeltamv{u}$ is a free variable of type $ \text{b} $.  Again, our goal is to compute $ [  \LamDeltant{t}  /  \LamDeltamv{x}  ]^{ \LamDeltasym{(}    \text{b}   \to   \text{b}    \LamDeltasym{)} }  \LamDeltant{t'} $ using
  the definition of the hereditary substitution function in Definition~\ref{def:hereditary_substitution_function}.
  Now
  \begin{center}
    $ [  \LamDeltant{t}  /  \LamDeltamv{x}  ]^{ \LamDeltasym{(}    \text{b}   \to   \text{b}    \LamDeltasym{)} }  \LamDeltasym{(}  \LamDeltamv{x} \, \LamDeltamv{u}  \LamDeltasym{)}  =  \Delta  \LamDeltamv{z_{{\mathrm{1}}}}  :   \neg   \text{b}    .  \LamDeltasym{(}  \LamDeltamv{z_{{\mathrm{1}}}} \, \LamDeltasym{(}   \Delta  \LamDeltamv{z_{{\mathrm{2}}}}  :   \neg   \text{b}    .  \LamDeltasym{(}  \LamDeltamv{z_{{\mathrm{2}}}} \, \LamDeltamv{u}  \LamDeltasym{)}   \LamDeltasym{)}  \LamDeltasym{)} $,
  \end{center} 
  because 
  \begin{center}
    \begin{tabular}{llllll}
      $ \textsf{ctype}_{ \LamDeltasym{(}    \text{b}   \to   \text{b}    \LamDeltasym{)} }( \LamDeltamv{x} , \LamDeltamv{x} )  = \LamDeltasym{(}    \text{b}   \to   \text{b}    \LamDeltasym{)}$, &
      &
      $ [  \LamDeltant{t}  /  \LamDeltamv{x}  ]^{ \LamDeltasym{(}    \text{b}   \to   \text{b}    \LamDeltasym{)} }  \LamDeltamv{x}  = \LamDeltant{t}$,
      &
      $ [  \LamDeltant{t}  /  \LamDeltamv{x}  ]^{ \LamDeltasym{(}    \text{b}   \to   \text{b}    \LamDeltasym{)} }  \LamDeltamv{u}  = \LamDeltamv{u}$,
    \end{tabular}
  \end{center}
  and for some fresh variable $\LamDeltamv{z_{{\mathrm{1}}}}$ of type $ \neg   \text{b}  $
  \begin{center}
    \begin{math}
      \begin{array}{lll}
         \Delta  \LamDeltamv{z_{{\mathrm{1}}}}  :   \neg   \text{b}    .   \langle   ( \LamDeltamv{f} , \LamDeltamv{z_{{\mathrm{1}}}} , \LamDeltamv{u} )   \rangle^{  \text{b}  }_{  \text{b}  }  \LamDeltasym{(}  \LamDeltamv{f} \, \LamDeltasym{(}   \Delta  \LamDeltamv{f'}  :   \neg  \LamDeltasym{(}    \text{b}   \to   \text{b}    \LamDeltasym{)}   .  \LamDeltasym{(}  \LamDeltamv{f'} \, \LamDeltasym{(}   \lambda  \LamDeltamv{z} :  \text{b}   .  \LamDeltamv{z}   \LamDeltasym{)}  \LamDeltasym{)}   \LamDeltasym{)}  \LamDeltasym{)}    & = & \\
         \Delta  \LamDeltamv{z_{{\mathrm{1}}}}  :   \neg   \text{b}    .  \LamDeltasym{(}  \LamDeltamv{z_{{\mathrm{1}}}} \, \LamDeltasym{(}   \Delta  \LamDeltamv{z_{{\mathrm{2}}}}  :   \neg   \text{b}    .  \LamDeltasym{(}  \LamDeltamv{z_{{\mathrm{2}}}} \, \LamDeltamv{u}  \LamDeltasym{)}   \LamDeltasym{)}  \LamDeltasym{)} \\
      \end{array}
    \end{math}
  \end{center}
  where
  \begin{center}
    \begin{math}
      \begin{array}{lll}
         \langle   ( \LamDeltamv{f} , \LamDeltamv{z_{{\mathrm{1}}}} , \LamDeltamv{u} )   \rangle^{  \text{b}  }_{  \text{b}  }  \LamDeltasym{(}  \LamDeltamv{f} \, \LamDeltasym{(}   \Delta  \LamDeltamv{f'}  :   \neg  \LamDeltasym{(}    \text{b}   \to   \text{b}    \LamDeltasym{)}   .  \LamDeltasym{(}  \LamDeltamv{f'} \, \LamDeltasym{(}   \lambda  \LamDeltamv{z} :  \text{b}   .  \LamDeltamv{z}   \LamDeltasym{)}  \LamDeltasym{)}   \LamDeltasym{)}  \LamDeltasym{)}   & = & \\
        \LamDeltamv{z_{{\mathrm{1}}}} \, \LamDeltasym{(}   \Delta  \LamDeltamv{z_{{\mathrm{2}}}}  :   \neg   \text{b}    .   \langle   ( \LamDeltamv{f} , \LamDeltamv{z_{{\mathrm{1}}}} , \LamDeltamv{u} )   \LamDeltasym{,}   ( \LamDeltamv{f'} , \LamDeltamv{z_{{\mathrm{2}}}} , \LamDeltamv{u} )   \rangle^{  \text{b}  }_{  \text{b}  }  \LamDeltasym{(}  \LamDeltamv{f'} \, \LamDeltasym{(}   \lambda  \LamDeltamv{z} :  \text{b}   .  \LamDeltamv{z}   \LamDeltasym{)}  \LamDeltasym{)}    \LamDeltasym{)}
      \end{array}
    \end{math}
    \end{center}
  because
  \begin{center}
    \begin{math}
      \begin{array}{lll}
         ( \LamDeltamv{f} , \LamDeltamv{z_{{\mathrm{1}}}} , \LamDeltamv{u} )  \in \langle ( \LamDeltamv{f} , \LamDeltamv{z_{{\mathrm{1}}}} , \LamDeltamv{u} ) \rangle,  \Delta  \LamDeltamv{f'}  :   \neg  \LamDeltasym{(}    \text{b}   \to   \text{b}    \LamDeltasym{)}   .  \LamDeltasym{(}  \LamDeltamv{f'} \, \LamDeltasym{(}   \lambda  \LamDeltamv{z} :  \text{b}   .  \LamDeltamv{z}   \LamDeltasym{)}  \LamDeltasym{)}  \equiv  \Delta  \LamDeltamv{f'}  :   \neg  \LamDeltasym{(}    \text{b}   \to   \text{b}    \LamDeltasym{)}   .  \LamDeltasym{(}  \LamDeltamv{f'} \, \LamDeltasym{(}   \lambda  \LamDeltamv{z} :  \text{b}   .  \LamDeltamv{z}   \LamDeltasym{)}  \LamDeltasym{)} ,
      \end{array}
    \end{math}
  \end{center}
    and for some fresh variable $\LamDeltamv{z_{{\mathrm{2}}}}$ of type $ \neg   \text{b}  $
    \begin{center}
    \begin{math}
      \begin{array}{lll}
         \langle   ( \LamDeltamv{f} , \LamDeltamv{z_{{\mathrm{1}}}} , \LamDeltamv{u} )   \LamDeltasym{,}   ( \LamDeltamv{f'} , \LamDeltamv{z_{{\mathrm{2}}}} , \LamDeltamv{u} )   \rangle^{  \text{b}  }_{  \text{b}  }  \LamDeltasym{(}  \LamDeltamv{f'} \, \LamDeltasym{(}   \lambda  \LamDeltamv{z} :  \text{b}   .  \LamDeltamv{z}   \LamDeltasym{)}  \LamDeltasym{)}   & = & \LamDeltamv{z_{{\mathrm{2}}}} \, \LamDeltamv{u}\\
      \end{array}
    \end{math}
  \end{center}
  because
  \begin{center}
    \begin{math}
      \begin{array}{llllll}
         ( \LamDeltamv{f'} , \LamDeltamv{z_{{\mathrm{2}}}} , \LamDeltamv{u} )  \in \langle ( \LamDeltamv{f} , \LamDeltamv{z_{{\mathrm{1}}}} , \LamDeltamv{u} )   \LamDeltasym{,}   ( \LamDeltamv{f'} , \LamDeltamv{z_{{\mathrm{2}}}} , \LamDeltamv{u} ) \rangle, &
        &
         \lambda  \LamDeltamv{z} :  \text{b}   .  \LamDeltamv{z}  \equiv  \lambda  \LamDeltamv{z} :  \text{b}   .  \LamDeltamv{z} ,
        &
         \langle   ( \LamDeltamv{f} , \LamDeltamv{z_{{\mathrm{1}}}} , \LamDeltamv{u} )   \rangle^{  \text{b}  }_{  \text{b}  }  \LamDeltamv{z}  = z
      \end{array}
    \end{math}
  \end{center}
\end{example}
In the next section we prove the definition of the hereditary substitution function correct.

\subsection{Properties of the Hereditary Substitution Function}
\label{subsec:properties_of_the_hereditary_substitution_function}
There are two main ways the hereditary substitution function is used.
It either replaces capture-avoiding substitution in ones' type theory
or it is used in some other way.  For example, in Canonical LF
hereditary substitution replaces capture-avoiding substitution
\cite{Watkins:2004,Adams:2004}.  However, in \cite{Abel:2008} it is
used only as a normalization function.  No matter how it is used there
are three correctness results which must be proven.  These are
totality, type preservation, and normality preservation.  There is an
additional correctness property we feel one must prove when hereditary
substitution is used as a normalization function.  This property is
called soundness with respect to reduction.  It shows that hereditary
substitution does nothing more than what capture-avoiding substitution
and $\beta$-reduction can do.

We introduce some notation to make working with multi-substitutions a
bit easier.  The sets of all first, second, and third projections of
the triples in $\Theta$ are denoted $ \Theta ^1 $, $ \Theta ^2 $, and $ \Theta ^3 $ respectively.  We denote the assumption of all elements of
$\Theta^i$ having the type $\LamDeltant{T}$ as $\Theta^i : \LamDeltant{T}$. This latter
notation is used in typing contexts to indicate the addition of all the
variables in $\Theta^j$ for $j \in \{1,2\}$ to the context with the
specified type.  We denote this as $\Gamma,\Theta^j : \LamDeltant{T},\Gamma'$ for
some contexts $\Gamma$ and $\Gamma'$.  The notation $ \Gamma  \vdash   \Theta ^3   :  \LamDeltant{T} $
is defined as for all $\LamDeltant{t} \in  \Theta ^3 $ the typing judgment $ \Gamma  \vdash  \LamDeltant{t}  :  \LamDeltant{T} $ holds.  Finally, we denote terms in $ \Theta ^3 $ being normal
as $ \mathsf{norm}(  \Theta ^3  ) $.

All of the following properties will depend on a few more 
properties of the $ \textsf{ctype} $ function.  They are listed in 
the following lemma.
\begin{lemma}[Properties of $ \textsf{ctype} $ Continued]
  \label{lemma:ctype_props_cont}
  \begin{itemize}
    \item[]
  \item[i.] If $ \Gamma  \LamDeltasym{,}  \LamDeltamv{x}  \LamDeltasym{:}  \LamDeltant{T}  \LamDeltasym{,}  \Gamma'  \vdash  \LamDeltant{t_{{\mathrm{1}}}} \, \LamDeltant{t_{{\mathrm{2}}}}  :  \LamDeltant{T'} $, $ \Gamma  \vdash  \LamDeltant{t}  :  \LamDeltant{T} $, 
    $ [  \LamDeltant{t}  /  \LamDeltamv{x}  ]^{ \LamDeltant{T} }  \LamDeltant{t_{{\mathrm{1}}}}  =  \lambda  \LamDeltamv{y} : \LamDeltant{T_{{\mathrm{1}}}}  .  \LamDeltant{t'} $, and $\LamDeltant{t_{{\mathrm{1}}}}$ is not a
    $\lambda$-abstraction, then $t_1$ is in head normal form and
    there exists a type $\LamDeltant{A}$ such that $ \textsf{ctype}_{ \LamDeltant{T} }( \LamDeltamv{x} , \LamDeltant{t_{{\mathrm{1}}}} )  = \LamDeltant{A}$.
    
  \item[ii.] If $ \Gamma  \LamDeltasym{,}  \LamDeltamv{x}  \LamDeltasym{:}  \LamDeltant{T}  \LamDeltasym{,}  \Gamma'  \vdash  \LamDeltant{t_{{\mathrm{1}}}} \, \LamDeltant{t_{{\mathrm{2}}}}  :  \LamDeltant{T'} $, $ \Gamma  \vdash  \LamDeltant{t}  :  \LamDeltant{T} $, 
    $ [  \LamDeltant{t}  /  \LamDeltamv{x}  ]^{ \LamDeltant{T} }  \LamDeltant{t_{{\mathrm{1}}}}  =  \Delta  \LamDeltamv{y}  :   \neg  \LamDeltasym{(}   \LamDeltant{T''}  \to  \LamDeltant{T'}   \LamDeltasym{)}   .  \LamDeltant{t'} $, and $\LamDeltant{t_{{\mathrm{1}}}}$ is not a
    $\Delta$-abstraction, then there exists a type $\LamDeltant{A}$ such that
    $ \textsf{ctype}_{ \LamDeltant{T} }( \LamDeltamv{x} , \LamDeltant{t_{{\mathrm{1}}}} )  = \LamDeltant{A}$.
  \end{itemize}
\end{lemma}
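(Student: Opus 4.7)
The plan is to prove (i) and (ii) simultaneously by induction on the structure of $t_1$, generalizing the argument sketched for the analogous STLC-only statement. The variable and abstraction cases are short: if $t_1 = y$, then either $y = x$ -- in which case $[t/x]^T x = t$ must itself be of the required abstraction shape and $\textsf{ctype}_T(x, x) = T$ supplies the type, with $t_1 = x$ trivially a head -- or $y \neq x$, in which case $[t/x]^T y = y$ is not an abstraction, contradicting the hypothesis. If $t_1$ is a $\lambda$- or $\Delta$-abstraction, the corresponding clause of Definition~\ref{def:hereditary_substitution_function} preserves the outer binder, so one of (i),(ii) is excluded directly by the ``$t_1$ is not a \dots-abstraction'' assumption and the other by a shape mismatch between $\lambda$ and $\Delta$.

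The heart of the argument is the application case $t_1 = t_a\, t_b$. Here I would inspect which of the four application clauses of Definition~\ref{def:hereditary_substitution_function} fires. The ``recombining'' clause outputs an application, so it is ruled out in both parts by the assumed shape of $[t/x]^T t_1$. For (i), the only remaining option is the $\beta$-reduction clause, whose side condition directly supplies $\textsf{ctype}_T(x, t_a) = A'' \to A'$; by Definition~\ref{def:ctype} this yields $\textsf{ctype}_T(x, t_a\, t_b) = A'$, giving the existence of $A$. The head-normal-form claim then follows from the inductive hypothesis applied to $t_a$: the side condition forces $t_a$ itself not to be a $\lambda$-abstraction while $[t/x]^T t_a$ is one, so IH (i) applied to $t_a$ (using the typed application $t_a \, t_b$ as the ambient witness of typing) shows $t_a$ is a head, whence so is $t_a\,t_b$. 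For (ii), both the $\beta$-clause and the structural clause can a priori produce a $\Delta$-abstraction as output, but in either case the side condition already supplies $\textsf{ctype}_T(x, t_a) = A'' \to A'$, and hence $\textsf{ctype}_T(x, t_1) = A'$ follows immediately.

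The main obstacle I foresee is simply the bookkeeping around which clause of Definition~\ref{def:hereditary_substitution_function} is responsible for the observed output: the clause chosen is determined jointly by the shape of $t_a$ and by the shape of $[t/x]^T t_a$, and only by using the disjoint side conditions of the clauses can the ``recombining'' clause be ruled out and the correct reducing clause identified. Once this bookkeeping is settled, all the needed ctype bounds fall out directly from the side conditions, so no nested case analysis of the recursive hereditary substitutions $[s_2'/y]^{A''} s_1'$ or $\langle (y,z,s_2)\rangle^{A''}_{A'} s$ is required to finish the argument.
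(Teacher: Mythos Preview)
Your proposal is correct and follows essentially the same approach as the paper: induction on the structure of $t_1$, with the application case handled by analyzing which clause of the hereditary substitution definition must have fired given the shape of the output. The only minor variation is that you read off $\textsf{ctype}_T(x,t_a)$ directly from the side condition of the fired $\beta$- or structural clause, whereas the paper re-derives it by applying the induction hypothesis to $t_a$ and then invoking Lemma~\ref{lemma:ctype_props}(ii); both routes are valid and yield the same conclusion.
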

\begin{proof}
  Both parts can be shown by induction on the structure of $\LamDeltant{t_{{\mathrm{1}}}} \, \LamDeltant{t_{{\mathrm{2}}}}$.  
  See Appendix~\ref{subsec:proof_of_ctype_props_cont}.
\end{proof}
These are all similar to the properties in Lemma~\ref{lemma:ctype_props_cont1}.
The first two properties of the hereditary substitution function are
totality and type preservation.  The latter is similar to substitution
for typing using the hereditary substitution function.
\begin{lemma}[Totality and Type Preservation]
  \label{lemma:totality_and_type_preservation}
  \begin{itemize}
  \item[]
  \item[i.] If $ \Gamma  \vdash   \Theta ^3   :  \LamDeltant{A} $ and $ \Gamma  \LamDeltasym{,}   \Theta ^1 :   \neg  \LamDeltasym{(}   \LamDeltant{A}  \to  \LamDeltant{A'}   \LamDeltasym{)}    \vdash  \LamDeltant{t'}  :  \LamDeltant{B} $, then there
    exists a term $\LamDeltamv{s}$ such that $ \langle  \Theta  \rangle^{ \LamDeltant{A} }_{ \LamDeltant{A'} }  \LamDeltant{t'}  = \LamDeltamv{s}$ and $ \Gamma  \LamDeltasym{,}   \Theta ^2 :   \neg  \LamDeltant{A'}    \vdash  \LamDeltant{s}  :  \LamDeltant{B} $.
  
  \item[ii.] If $ \Gamma  \vdash  \LamDeltant{t}  :  \LamDeltant{A} $ and $ \Gamma  \LamDeltasym{,}  \LamDeltamv{x}  \LamDeltasym{:}  \LamDeltant{A}  \LamDeltasym{,}  \Gamma'  \vdash  \LamDeltant{t'}  :  \LamDeltant{B} $, then there exists a term $\LamDeltamv{s}$ 
    such that $ [  \LamDeltant{t}  /  \LamDeltamv{x}  ]^{ \LamDeltant{A} }  \LamDeltant{t'}  = \LamDeltamv{s}$ and $ \Gamma  \LamDeltasym{,}  \Gamma'  \vdash  \LamDeltant{s}  :  \LamDeltant{B} $.
  \end{itemize}
\end{lemma}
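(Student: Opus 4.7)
The plan is to prove (i) and (ii) simultaneously by mutual induction on the termination metric $(A, f, t')$ ordered lexicographically: first by the strict subexpression ordering on types (which is well-founded by Definition~\ref{def:ordering}), then by the natural-number ordering on $f\in\{0,1\}$, and finally by the strict subexpression ordering on terms. The value $f=1$ is used when we are in case (i) (the structural branch) and $f=0$ when we are in case (ii). This matches the metric declared for the definition itself, so every recursive invocation in Definition~\ref{def:hereditary_substitution_function} corresponds to a strictly smaller measure, which is precisely what we need both to know that each defining equation actually produces an output (totality) and to apply the induction hypothesis when checking the type of that output.

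For part (ii), I would proceed by cases on the structure of $t'$. The variable cases and the two abstraction cases are routine: $[t/x]^A x = t$ uses the hypothesis $\Gamma \vdash t : A$ directly; $[t/x]^A y = y$ uses weakening; the $\lambda$- and $\Delta$-cases invert the typing by Theorem~\ref{theorem:inversion} and apply the IH to the body at the same cut type but a strictly smaller term. The interesting cases are the three application equations. When $[t/x]^A t_1$ is not an abstraction (or when both $[t/x]^A t_1$ and $t_1$ are abstractions), the IH on $t_1$ and $t_2$ (smaller terms, same metric in the third component) gives typed results, and the application is typed by \textsc{App}. When $[t/x]^A t_1 = \lambda y{:}A''.s'_1$ with $t_1$ not itself a $\lambda$, Lemma~\ref{lemma:ctype_props_cont} (i) guarantees $\mathsf{ctype}_A(x,t_1) = A''\to A'$ is defined, and Lemma~\ref{lemma:ctype_props} (i) tells us $A''\to A' \le A$, so $A'' < A$ strictly; the recursive call $[s'_2/y]^{A''} s'_1$ is therefore at a strictly smaller cut type, and by the IH yields a well-typed term of type $B$.

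The $\Delta$-case is the crucial one and the part that motivated introducing the structural function in the first place. When $[t/x]^A t_1 = \Delta y{:}\neg(A''\to A').s$ with $t_1$ not a $\Delta$, Lemma~\ref{lemma:ctype_props_cont} (ii) again guarantees $\mathsf{ctype}_A(x,t_1) = A''\to A'$ so $A''\to A' \le A$, hence $A'' < A$ strictly. By the IH at part (ii) we get typed $s$ (with $y:\neg(A''\to A')$ in scope and body at type $\bot$) and typed $s_2 : A''$. Forming the singleton $\Theta = (y,z,s_2)$, the hypothesis $\Gamma \vdash \Theta^3 : A''$ holds and $\Gamma, \Theta^1 : \neg(A''\to A') \vdash s : \bot$ holds; since $A'' < A$, the IH at part (i) applies and produces $\langle \Theta \rangle^{A''}_{A'} s$ typed at $\bot$ under $\Gamma, z:\neg A'$. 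Wrapping with $\Delta z:\neg A'$ via rule \textsc{Delta} gives a term of type $A'$, which by Lemma~\ref{lemma:ctype_props} (ii) is equivalent to $B$.

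For part (i), case analysis on $t'$ proceeds similarly. The variable subcase where $(x,z,t)\in\Theta$ produces $\lambda y{:}A_1\to A_2.(z\,(y\,t))$: from $\Theta^1 : \neg(A_1\to A_2)$ we have $z : \neg(A_1\to A_2)$ in the output context $\Theta^2:\neg A_2$--- wait, we need to be careful: $z$ comes from $\Theta^2$, so it has type $\neg A_2$, which is exactly what typing of $z\,(y\,t)$ demands with $t:A_1$ from $\Gamma\vdash\Theta^3:A_1$. The abstraction cases recur with the same $\Theta$ on smaller terms. The application subcases where the head belongs to $\Theta^1$ are the analogues of the two ``new redex'' cases of (ii) and use the IH of (ii) at strictly smaller cut type (either $A_1 < \neg(A_1\to A_2)$, or a nested structural call with an augmented $\Theta$ on a strictly smaller term at the same type). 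The remaining application case recurses on both subterms. The hard part, and where care is most needed, is bookkeeping: checking in each application case that the metric really does decrease (either the type strictly shrinks, in which case $f$ and $t'$ can do anything, or the type stays the same and we must verify $t'$ is a strict subterm), and that the contexts produced by the IH match the contexts required by the ambient typing rule so that \textsc{App}, \textsc{Lam}, and \textsc{Delta} apply cleanly.
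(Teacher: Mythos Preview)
Your approach is essentially the paper's: simultaneous induction on the metric $(A,f,t')$, case analysis on $t'$, and appeal to Lemma~\ref{lemma:ctype_props} and Lemma~\ref{lemma:ctype_props_cont} in the application cases. The paper's proof in Appendix~\ref{subsec:proof_of_totality_and_type_preservation} follows the same structure and invokes the same lemmas.

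One point needs correction, though. In part~(i), in the subcase where $t'_1\equiv x$ with $(x,z,t)\in\Theta$ and $t'_2\equiv\lambda y{:}A_1.t''_2$, you invoke the IH of~(ii) for $[t/y]^{A_1}s$ and justify it by ``$A_1<\neg(A_1\to A_2)$''. That comparison is not the one governing the metric. The first component of the metric for part~(i) is $A_1$ itself (the superscript of $\langle\Theta\rangle^{A_1}_{A_2}$), not $\neg(A_1\to A_2)$; so the recursive call $[t/y]^{A_1}s$ has the \emph{same} first component. The decrease comes entirely from the second component: $f$ drops from $1$ to $0$. This is exactly why the $f$ component is in the metric, and the paper's proof records the comparison as $(A,1,t')>(A,0,s)$. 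If instead you took $\neg(A_1\to A_2)$ as the first component for part~(i), your part~(ii)\,$\to$\,(i) step would break: there the call is $\langle(y,z,s_2)\rangle^{A''}_{A'}s$, and $\neg(A''\to A')$ is in general not a subexpression of $A$ (e.g.\ when $A=A''\to A'$). So be sure the first metric component is uniformly the superscript $A$ in both parts, and use the $f$ drop to justify the (i)\,$\to$\,(ii) call.
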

\begin{proof}
  This can be shown by mutual induction using the lexicographic
  combination $(\LamDeltant{A}, f,\LamDeltant{t'})$ of our ordering on types, the
  natural number ordering where $f \in \{0,1\}$, and the strict
  subexpression ordering on terms. See
  Appendix~\ref{subsec:proof_of_totality_and_type_preservation}.
\end{proof}
The next property shows that the hereditary substitution function is
normality preserving.  That is, if the input to the hereditary
substitution function is normal then so is the output.  This is
crucial for the normalization argument. The proof of normality
preservation depends on the following auxiliary result.
\begin{lemma}
  \label{lemma:ssub_var_head}
  For any $\Theta$, $\LamDeltant{A}$ and $\LamDeltant{A'}$, if $\LamDeltant{n_{{\mathrm{1}}}} \, \LamDeltant{n_{{\mathrm{2}}}}$ is normal then 
  $ \textsf{head}(  \langle  \Theta  \rangle^{ \LamDeltant{A} }_{ \LamDeltant{A'} }  \LamDeltasym{(}  \LamDeltant{n_{{\mathrm{1}}}} \, \LamDeltant{n_{{\mathrm{2}}}}  \LamDeltasym{)}  ) $ is a variable.
\end{lemma}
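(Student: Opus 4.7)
The plan is to induct on the head component $n_1$ of the normal application $n_1\,n_2$. Because $n_1\,n_2$ is normal, the normal-form grammar forces $n_1$ to be a head term $h$, so $n_1$ is either a variable $x$ or itself an application $h'\,n'$ of a head to a normal form (with $h'\,n'$ again a normal term).

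In the base case $n_1 = x$, I split on whether there exist $z, t$ with $(x, z, t) \in \Theta$. If yes, then inspecting the normal form $n_2$ — which must be a variable, a $\lambda$-abstraction, a $\Delta$-abstraction, or a head applied to a normal form — selects exactly one of the three ``$x$ in $\Theta$'' application clauses of $\langle \Theta \rangle^{A}_{A'}$. Reading off the right-hand sides, each of those clauses produces either $z \, [t/y]^{A}s$, or $z\,(\Delta z_2 : \neg A'. s)$, or $z\,s'$, and in every case the outermost head is the second projection $z$, which is a variable. If no triple for $x$ lies in $\Theta$, then the side condition on the final application clause is met and we obtain $(\langle \Theta \rangle^{A}_{A'} x)(\langle \Theta \rangle^{A}_{A'} n_2)$; by the second variable clause of the structural substitution, $\langle \Theta \rangle^{A}_{A'} x = x$, so the head is again a variable.

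In the inductive step $n_1 = h'\,n'$, the head of $n_1\,n_2$ is not a variable, so the first three application clauses are blocked and only the final application clause can fire, yielding $s_1\,s_2$ with $s_1 = \langle \Theta \rangle^{A}_{A'} n_1$ and $s_2 = \langle \Theta \rangle^{A}_{A'} n_2$. Unfolding $\textsf{head}$ on applications, $\textsf{head}(s_1\,s_2) = \textsf{head}(s_1)$. Since $h'$ is a head and $n'$ is normal, $h'\,n' = n_1$ is itself a normal application, so the induction hypothesis applies and $\textsf{head}(s_1)$ is a variable.

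The main obstacle is really bookkeeping in the base case: one must exhaust the possible shapes of the normal form $n_2$ to see that one of the three ``$x$ in $\Theta$'' clauses fires and that each resulting right-hand side does indeed have $z$ as its outermost head. After that, the inductive step is immediate, because the final clause of the structural substitution essentially commutes with the head, and normality of $h'\,n'$ is read directly from the normal-form grammar.
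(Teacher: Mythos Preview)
Your proof is correct and follows essentially the same approach as the paper's: induction on the structure of the normal head $n_1$, splitting the variable case on membership in $\Theta$ and using the inductive hypothesis when $n_1$ is itself an application. The only difference is cosmetic---you explicitly enumerate the three shapes of $n_2$ in the $(x,z,t)\in\Theta$ subcase, whereas the paper simply observes that each of the three application clauses with $x\in\Theta^1$ returns something of the form $z\,t_2$, so the head is $z$ in one stroke.
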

\begin{proof}
  This proof is by induction on the form of $\LamDeltant{n_{{\mathrm{1}}}} \, \LamDeltant{n_{{\mathrm{2}}}}$. See
  Appendix~\ref{subsec:proof_of_ssub_var_head}.
\end{proof}
\begin{lemma}[Normality Preservation]
  \label{lemma:normality_preservation}
  \begin{itemize}
    \item[]
  \item[i.] If $ \mathsf{norm}(  \Theta ^3  ) $, $ \Gamma  \vdash   \Theta ^3   :  \LamDeltant{A} $ and $ \Gamma  \LamDeltasym{,}   \Theta ^1 :   \neg  \LamDeltasym{(}   \LamDeltant{A}  \to  \LamDeltant{A'}   \LamDeltasym{)}    \vdash  \LamDeltant{n'}  :  \LamDeltant{B} $, 
    then there exists a normal form $\LamDeltant{m}$ such that $ \langle  \Theta  \rangle^{ \LamDeltant{A} }_{ \LamDeltant{A'} }  \LamDeltant{n'}  = \LamDeltant{m}$.
    
  \item[ii.] If $ \Gamma  \vdash  \LamDeltant{n}  :  \LamDeltant{A} $ and $ \Gamma  \LamDeltasym{,}  \LamDeltamv{x}  \LamDeltasym{:}  \LamDeltant{A}  \LamDeltasym{,}  \Gamma'  \vdash  \LamDeltant{n'}  :  \LamDeltant{B} $ then there exists a term $\LamDeltant{m}$ 
    such that $ [  \LamDeltant{n}  /  \LamDeltamv{x}  ]^{ \LamDeltant{A} }  \LamDeltant{n'}  = \LamDeltant{m}$. 
  \end{itemize} 
\end{lemma}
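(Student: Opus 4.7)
The plan is to prove both parts simultaneously by mutual induction on the same lexicographic metric $(A, f, n')$ used to define the two functions, where $f \in \{0,1\}$ selects the function being analyzed. At the outset of each case I would invoke Lemma~\ref{lemma:totality_and_type_preservation} to obtain existence and typing of the output; the residual work is to check that the output is syntactically normal, reading off the shape of the grammar for normal forms.

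For part (i), I would case-split on the shape of $n'$. When $n' = x$ with $(x,z,t) \in \Theta$, the output is $\lambda y : A \to A' . (z\,(y\,t))$, and the hypothesis $\mathsf{norm}(\Theta^3)$ makes $t$ normal, hence $y\,t$, $z\,(y\,t)$, and the surrounding abstraction normal. When $n'$ is a variable not firing in $\Theta$, or an abstraction, normality is immediate from IH on a strict subterm. For an application $n_1\,n_2$ with $n_1 = x$ and $(x,z,t) \in \Theta$, the interesting subcases are when $n_2$ is a $\lambda$- or $\Delta$-abstraction. In the $\lambda$-case the output is $z \cdot [t/y]^{A_1} s$: the inner structural call on $t''$ decreases on the subterm component, and the hereditary call uses cut type $A_1$, strictly smaller than $A_1 \to A_2$, so IH produces normal $s$ and normal $[t/y]^{A_1}s$, giving a normal application. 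The $\Delta$-case is analogous, using the extended multi-substitution $\Theta, (y, z_2, t)$ and decreasing on the third component of the metric. In all remaining subcases the output is $s_1\,s_2$, and Lemma~\ref{lemma:ssub_var_head} together with the IH ensures $s_1$ is a head, so the application is normal.

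For part (ii), I would case-split on $n'$ analogously. Variable and abstraction cases are immediate. For an application $h \cdot m$, I would analyze $[n/x]^{A} h$: if it is not an abstraction, the output $s_1\,s_2$ has a head for $s_1$ by IH, giving a normal application; if it is a $\lambda$- or $\Delta$-abstraction, then by Lemmas~\ref{lemma:ctype_props} and~\ref{lemma:ctype_props_cont} the quantity $\textsf{ctype}_{A}(x, h) = A'' \to A'$ is defined with $A'' \to A' \leq A$, so the follow-up call, either the hereditary $[s'_2/y]^{A''} s'_1$ or the structural $\langle (y,z,s_2) \rangle^{A''}_{A'} s$, uses a cut type strictly smaller than $A$, and the IH supplies normality of the result (after, in the $\Delta$-case, wrapping it in $\Delta z : \neg A' . (\cdot)$).

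The main obstacle is the bookkeeping of the termination metric in the structural-redex branch of hereditary substitution, where we hand off to the structural function ($f$ rising from $0$ to $1$), and must rely on the ctype lemmas to guarantee a strict decrease in the first component to compensate. A second subtle point is that the normality hypothesis $\mathsf{norm}(\Theta^3)$ must be preserved through the $\Delta$-abstraction subcase of the structural function, where $\Theta$ is extended with a new triple; since that triple's third component is reused from the outer $\Theta^3$ (itself assumed normal), the invariant is maintained, and the mutual induction closes.
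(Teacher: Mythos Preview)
Your proposal is correct and mirrors the paper's mutual induction on $(A, f, n')$ with the same case analysis and the same appeal to Lemma~\ref{lemma:ssub_var_head} for the residual application subcases. One small correction on the metric bookkeeping: in the $\lambda$-subcase of Part~(i), the hereditary call $[t/y]^{A_1} s$ carries the \emph{same} first component $A_1$ as the enclosing structural call $\langle \Theta \rangle^{A_1}_{A_2}$, so the decrease that licenses the IH is in the second component $f$ (from $1$ to $0$), not in the type as your phrasing ``$A_1$ strictly smaller than $A_1 \to A_2$'' suggests---the paper's proof makes exactly this $(A,1,n') > (A,0,m)$ step.
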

\begin{proof}
  This can be shown by mutual induction using the lexicographic
  combination $(\LamDeltant{A}, f,\LamDeltant{t'})$ of our ordering on types, the
  natural number ordering where $f \in \{0,1\}$, and the strict
  subexpression ordering on terms.
  See Appendix~\ref{subsec:proof_of_normality_preservation}.
\end{proof}
\noindent
The final correctness property of the hereditary substitution function is
soundness with respect to reduction.  We need one last piece of notation.
Suppose $\Theta = (\LamDeltamv{x_{{\mathrm{1}}}},\LamDeltamv{z_{{\mathrm{1}}}},\LamDeltant{t_{{\mathrm{1}}}}),\ldots,(\LamDeltamv{x_{\LamDeltamv{i}}},\LamDeltamv{z_{\LamDeltamv{i}}},\LamDeltant{t_{\LamDeltamv{i}}})$ for
some natural number $\LamDeltamv{i}$.
Then $ \langle  \Theta  \rangle^{\uparrow^{ \LamDeltant{A} }_{ \LamDeltant{A'} } }\, \LamDeltant{t'}  =^{def} [  \lambda  \LamDeltamv{y} :  \LamDeltant{A}  \to  \LamDeltant{A'}   .  \LamDeltasym{(}  \LamDeltamv{z_{\LamDeltamv{i}}} \, \LamDeltasym{(}  \LamDeltamv{y} \, \LamDeltant{t_{\LamDeltamv{i}}}  \LamDeltasym{)}  \LamDeltasym{)} /\LamDeltamv{x_{\LamDeltamv{i}}} ](\cdots([  \lambda  \LamDeltamv{y} :  \LamDeltant{A}  \to  \LamDeltant{A'}   .  \LamDeltasym{(}  \LamDeltamv{z_{{\mathrm{1}}}} \, \LamDeltasym{(}  \LamDeltamv{y} \, \LamDeltant{t_{{\mathrm{1}}}}  \LamDeltasym{)}  \LamDeltasym{)} /\LamDeltamv{x_{{\mathrm{1}}}} ]\LamDeltant{t_{{\mathrm{1}}}})\cdots)$.

\begin{lemma}[Soundness with Respect to Reduction]
  \label{lemma:soundness_reduction}  
  \begin{itemize}
  \item[]
  \item[i.] If $ \Gamma  \vdash   \Theta ^3   :  \LamDeltant{A} $ and $ \Gamma  \LamDeltasym{,}   \Theta ^1 :   \neg  \LamDeltasym{(}   \LamDeltant{A}  \to  \LamDeltant{A'}   \LamDeltasym{)}    \vdash  \LamDeltant{t'}  :  \LamDeltant{B} $, then
    $ \langle  \Theta  \rangle^{\uparrow^{ \LamDeltant{A} }_{ \LamDeltant{A'} } }\, \LamDeltant{t'}   \redto^*   \langle  \Theta  \rangle^{ \LamDeltant{A} }_{ \LamDeltant{A'} }  \LamDeltant{t'} $.
  
  \item[ii.] If $ \Gamma  \vdash  \LamDeltant{t}  :  \LamDeltant{A} $ and $ \Gamma  \LamDeltasym{,}  \LamDeltamv{x}  \LamDeltasym{:}  \LamDeltant{A}  \LamDeltasym{,}  \Gamma'  \vdash  \LamDeltant{t'}  :  \LamDeltant{B} $ then 
    $\LamDeltasym{[}  \LamDeltant{t}  \LamDeltasym{/}  \LamDeltamv{x}  \LamDeltasym{]}  \LamDeltant{t'}  \redto^*   [  \LamDeltant{t}  /  \LamDeltamv{x}  ]^{ \LamDeltant{A} }  \LamDeltant{t'} $.   
  \end{itemize}  
\end{lemma}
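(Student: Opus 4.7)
The plan is to proceed by mutual induction using the same lexicographic measure $(A, f, t')$ that was used to justify the definitions of $\langle\Theta\rangle^{A}_{A'}t'$ and $[t/x]^A\, t'$, where $f = 1$ flags the structural case (i) and $f = 0$ flags the ordinary case (ii). In each inductive step, I intend to produce a reduction sequence from the ordinary-substitution term to the hereditary-substitution term by simulating every collapse performed inside the hereditary definition with a finite number of $\beta$- or \textsc{StructRed}-steps, using the IH to rewrite subterms underneath contexts.

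For part (ii), I would case-split on $t'$. Variable, $\lambda$-, and $\Delta$-cases, as well as the application case where no new redex is introduced, are immediate by reflexivity, compatibility of $\redto^*$ with term constructors, and a direct use of the IH on structurally smaller $t'$. The interesting case is $t' = t_1\, t_2$ with a newly created redex. If $[t/x]^A t_1 = \lambda y : A''.\, s_1'$, then by IH (applied at $(A,0,t_1)$ and $(A,0,t_2)$, both strictly smaller), we know $[t/x]\,t_1 \redto^* \lambda y : A''.\, s_1'$ and $[t/x]\,t_2 \redto^* s_2'$; one \textsc{Beta}-step then yields $[s_2'/y]\, s_1'$, and since Lemma~\ref{lemma:ctype_props_cont} gives $A'' < A$, a final IH invocation at $(A'',0,s_1')$ collapses ordinary substitution to hereditary substitution. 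The analogous case for a newly created structural redex applies \textsc{StructRed} in place of \textsc{Beta}: after rewriting $[t/x](t_1\,t_2)$ to $(\Delta y : \neg(A'' \to A').\, s)\, s_2'$, one \textsc{StructRed}-step produces exactly the ordinary-substitution form $\langle (y,z,s_2')\rangle^{\uparrow^{A''}_{A'}}\, s$ inside a $\Delta$-binder, and we finish by appealing to part (i) of the IH at the strictly smaller cut type $A''$.

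For part (i), I case-split on $t'$ against the cases in the definition of $\langle\Theta\rangle^A_{A'}t'$. The variable cases are immediate from the definition of $\langle\Theta\rangle^{\uparrow^A_{A'}}$. For $\lambda$- and $\Delta$-bodies, and for applications whose head is not a bound variable of $\Theta$, compatibility plus a direct IH finishes the job. The subtle cases are those in which the head of $t'$ is some $x$ with $(x,z,t) \in \Theta$ and the argument is an abstraction: here the ordinary substitution produces an administrative $\beta$-redex $(\lambda y : A \to A'.\, z\,(y\, t))\,(\lambda y : A.\, t'')$ (or a \textsc{StructRed}-redex when the argument is a $\Delta$-abstraction), and I need to fire two \textsc{Beta}-steps (respectively, one \textsc{StructRed} followed by the appropriate unfolding) to match the shape given by $\langle\Theta\rangle^A_{A'}$. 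The resulting residual substitution $[t/y]\,s$ is discharged by part (ii) of the IH at the strictly smaller triple $(A,0,s)$ since we dropped from $f=1$ to $f=0$ with the cut type unchanged.

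The main obstacle I foresee is controlling the inductive measure across the mutual calls, especially in the structural case of (ii) where the natural reduction chain $[t/x]\,(t_1 t_2) \redto^* (\Delta y.\,s)\,s_2' \redto \Delta z.\,[\lambda y'.\,z(y's_2')/y]\,s$ delivers a term whose next reductions must be justified by part (i) at strictly smaller cut type $A''$; matching the exact syntactic shape produced by \textsc{StructRed} against the definition of $\langle\Theta\rangle^{\uparrow^{A''}_{A'}}\,s$ will require careful bookkeeping of fresh variables and alpha-conversion. A secondary subtlety is that whenever the IH is invoked under a context, we need $\redto^*$ to be closed under that context (which it is, as the compatible closure), and all typing premises must be re-established using Lemmas~\ref{lemma:weakening_for_typing}, \ref{lemma:substitution_for_typing}, and Theorem~\ref{thm:preservation} to ensure the IH hypotheses apply.
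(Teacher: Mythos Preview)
Your proposal is correct and follows essentially the same approach as the paper's proof: mutual induction on the lexicographic measure $(A,f,t')$, case analysis on $t'$, and simulating each hereditary collapse by firing the corresponding $\beta$- or \textsc{StructRed}-step(s) before invoking the IH at a smaller measure. One small point to tighten when you write it out: in the part-(i) subcase where the argument is an abstraction, after firing the administrative redexes you must \emph{first} apply the part-(i) IH to the body $t''_2$ (at $(A,1,t''_2)$) to replace $\langle\Theta\rangle^{\uparrow}\,t''_2$ by $\langle\Theta\rangle\,t''_2=s$, and only \emph{then} invoke part (ii) at $(A,0,s)$ on the residual $[t/y]\,s$; and in the $\Delta$-argument subcase the correct sequence is one \textsc{Beta} followed by one \textsc{StructRed}, not the other way around.
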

\begin{proof}
  This can be shown by mutual induction using the lexicographic
  combination $(\LamDeltant{A}, f,\LamDeltant{t'})$ of our ordering on types, the
  natural number ordering where $f \in \{0,1\}$, and the strict
  subexpression ordering on terms.
  See Appendix~\ref{subsec:soundness_with_respect_to_reduction}.
\end{proof}
Using these properties it is now possible to conclude normalization for the $\lambda\Delta$-calculus.

%
\section{Concluding Normalization}
\label{sec:concluding_normalization}
We now define the interpretation $\interp{\LamDeltant{T}}_\Gamma$ of types
$\LamDeltant{T}$ in typing context $\Gamma$.  This is in fact the same
interpretation of types that was used to show normalization using
hereditary substitution of Stratified System F in \cite{Eades:2010}.
\begin{definition}
  \label{def:semantics}
  The interpretation of types $\interp{\LamDeltant{T}}_\Gamma$ is defined by:
  \begin{center}
    \begin{math}
      \begin{array}{lll}
        \LamDeltant{n} \in \interp{\LamDeltant{T}}_\Gamma & \iff &  \Gamma  \vdash  \LamDeltant{n}  :  \LamDeltant{T} 
      \end{array}
    \end{math}
  \end{center}
  We extend this definition to non-normal terms $t$ in the following way:
  \begin{center}
    \begin{math}
      \begin{array}{lll}
        \LamDeltant{t} \in \interp{\LamDeltant{T}}_\Gamma & \iff & \exists \LamDeltant{n}.\LamDeltant{t} \normto \LamDeltant{n} \in \interp{\LamDeltant{T}}_\Gamma
      \end{array}
    \end{math}
  \end{center}
\end{definition}
Type soundness depends on the following lemma. It shows that the
interpretation of types is closed under hereditary substitution.
\begin{lemma}[Hereditary Substitution for the Interpretation of Types]
  \label{lemma:substitution_for_the_interpretation_of_types}
  If $\LamDeltant{n} \in \interp{\LamDeltant{T}}_\Gamma$ and $\LamDeltant{n'} \in \interp{\LamDeltant{T'}}_{\Gamma  \LamDeltasym{,}  \LamDeltamv{x}  \LamDeltasym{:}  \LamDeltant{T}  \LamDeltasym{,}  \Gamma'}$, then
  $ [  \LamDeltant{n}  /  \LamDeltamv{x}  ]^{ \LamDeltant{T} }  \LamDeltant{n'}  \in \interp{\LamDeltant{T'}}_{\Gamma  \LamDeltasym{,}  \Gamma'}$.
\end{lemma}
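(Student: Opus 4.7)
The plan is to derive this lemma as a direct consequence of the correctness properties already established for the hereditary substitution function, namely totality, type preservation, and normality preservation. The interpretation of types for normal forms is simply typability in the given context, so the task reduces to checking that $[n/x]^T n'$ is (i) a well-defined term, (ii) normal, and (iii) typable at $T'$ in the reduced context $\Gamma, \Gamma'$.

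First I would unpack the hypotheses. Because $n$ and $n'$ are already normal and live in the respective interpretations, the definition of $\interp{\cdot}$ gives $\Gamma \vdash n : T$ and $\Gamma, x : T, \Gamma' \vdash n' : T'$ immediately (no $\normto$-step is needed, since normal terms reduce to themselves). This places me in precisely the hypothesis configuration of Lemma~\ref{lemma:totality_and_type_preservation}(ii) and Lemma~\ref{lemma:normality_preservation}(ii).

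Next, applying Lemma~\ref{lemma:normality_preservation}(ii) to $n$ and $n'$ yields a normal form $m$ with $[n/x]^T n' = m$, establishing both totality and normality of the result. Applying Lemma~\ref{lemma:totality_and_type_preservation}(ii) to the same inputs gives $\Gamma, \Gamma' \vdash [n/x]^T n' : T'$, that is $\Gamma, \Gamma' \vdash m : T'$. Combining these, $m$ is a normal form typable as $T'$ in $\Gamma, \Gamma'$, so by Definition~\ref{def:semantics} we conclude $[n/x]^T n' = m \in \interp{T'}_{\Gamma, \Gamma'}$, which is the required conclusion.

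There is essentially no obstacle here beyond bookkeeping; the conceptual work has been discharged in the earlier correctness lemmas. The only subtle point worth flagging explicitly is that both invocations (normality preservation and type preservation) must be on the same inputs so that the existential witness $m$ from the first is the same term receiving the type in the second — which follows because hereditary substitution, as a function, returns a unique value. No induction is needed at this stage.
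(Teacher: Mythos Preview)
Your proposal is correct and follows essentially the same approach as the paper: both arguments invoke Lemma~\ref{lemma:totality_and_type_preservation}(ii) for type preservation and Lemma~\ref{lemma:normality_preservation}(ii) for normality, then conclude by the definition of $\interp{\cdot}$. Your write-up is slightly more explicit about unpacking the hypotheses and noting that the two lemmas yield the same witness, but the underlying argument is identical.
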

\begin{proof}
  We know by Lemma~\ref{lemma:totality_and_type_preservation} that there exists a term $\LamDeltamv{s}$ such that
  $ [  \LamDeltant{n}  /  \LamDeltamv{x}  ]^{ \LamDeltant{T} }  \LamDeltant{n'}   \LamDeltasym{=}  \LamDeltant{s}$ and $ \Gamma  \LamDeltasym{,}  \Gamma'  \vdash  \LamDeltant{s}  :  \LamDeltant{T'} $, and by Lemma~\ref{lemma:normality_preservation} $\LamDeltamv{s}$ is
  normal.  Therefore, $\LamDeltamv{s} \in \interp{\LamDeltant{T'}}_{\Gamma  \LamDeltasym{,}  \Gamma'}$.
\end{proof}
Using the previous lemma and the properties of the hereditary
substitution function we can now prove type soundness.
\begin{thm}[Type Soundness]
  \label{theorem:type_soundness}
  If $ \Gamma  \vdash  \LamDeltant{t}  :  \LamDeltant{T} $ then $\LamDeltant{t} \in \interp{\LamDeltant{T}}_\Gamma$.
\end{thm}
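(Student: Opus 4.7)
The plan is to proceed by induction on the typing derivation $\Gamma \vdash t : T$. In each case I must exhibit a normal form $n$ with $t \normto n$ and $\Gamma \vdash n : T$, so that $n \in \interp{T}_\Gamma$ and hence $t \in \interp{T}_\Gamma$ by the extension clause of Definition~\ref{def:semantics}.

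For the \textsc{Ax} case, $t = x$ is already normal, so $x \normto x$ and the conclusion is immediate. For \textsc{Lam}, the premise $\Gamma, x:A \vdash t_0 : B$ gives by the induction hypothesis some normal $n_0$ with $t_0 \normto n_0$ and $\Gamma, x:A \vdash n_0 : B$; since $\redto$ is the compatible closure of the reduction rules, $\lambda x:A.t_0 \normto \lambda x:A.n_0$, which is normal and typeable at $A \to B$ by rule \textsc{Lam}. The \textsc{Delta} case is analogous, using $\Delta$-congruence and rule \textsc{Delta}.

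The interesting case is \textsc{App}, where $t = t_1\,t_2$ with $\Gamma \vdash t_1 : A \to B$ and $\Gamma \vdash t_2 : A$. The induction hypothesis supplies normal forms $n_1, n_2$ with $t_i \normto n_i$ and the expected types, so $t_1\,t_2 \redto^* n_1\,n_2$. The obstacle is that $n_1\,n_2$ need not itself be normal: when $n_1$ is a $\lambda$- or $\Delta$-abstraction, it contains a residual $\beta$- or structural redex. My plan is to dispatch both subcases uniformly via the trick of introducing a completely fresh variable $f : A \to B$ and invoking Lemma~\ref{lemma:substitution_for_the_interpretation_of_types}. Since $f\,n_2$ is normal with $\Gamma, f:A\to B \vdash f\,n_2 : B$, the lemma gives
\[
  [  n_1  /  f  ]^{ A \to B }  (f\,n_2) \in \interp{B}_\Gamma.
\]
By normality preservation (Lemma~\ref{lemma:normality_preservation}) this hereditary substitute is itself a normal form $m$ with $\Gamma \vdash m : B$. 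Finally, soundness with respect to reduction (Lemma~\ref{lemma:soundness_reduction}(ii)) gives
\[
  n_1\,n_2 = [n_1/f](f\,n_2)  \redto^*  [  n_1  /  f  ]^{ A \to B }  (f\,n_2) = m,
\]
so stitching the reductions together yields $t_1\,t_2 \normto m$ with $m \in \interp{B}_\Gamma$, as required.

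The main obstacle is precisely the \textsc{App} case when $n_1$ is a $\Delta$-abstraction: one could try to analyze the \textsc{StructRed} step directly and argue that the resulting term normalizes, but this re-derives the well-foundedness subtlety that motivated the hereditary \emph{structural} substitution in the first place. By instead packaging the reduction of $n_1\,n_2$ as a hereditary substitution at the variable $f$, the correctness lemmas for hereditary substitution (totality, type preservation, normality preservation, and soundness wrt reduction) absorb the entire structural-redex bookkeeping, and the theorem follows without any new case analysis on the shape of $n_1$.
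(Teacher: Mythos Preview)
Your proposal is correct and follows essentially the same approach as the paper: induction on the typing derivation, with the \textsc{App} case handled by introducing a fresh variable of function type, invoking Lemma~\ref{lemma:substitution_for_the_interpretation_of_types} on the normal term $f\,n_2$, and closing with soundness with respect to reduction. The paper writes the fresh variable as having type $A$ and the cut type as $A$, which appears to be a typo; your choice of $f:A\to B$ and cut type $A\to B$ is the one that actually type-checks against Lemma~\ref{lemma:substitution_for_the_interpretation_of_types}.
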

The only hard case in the proof of the type soundness theorem is the
case for applications. Using the previous lemma and the properties of
the hereditary substitution function, however, it goes through with
ease.  Consider the application case of the proof of type soundness.
Note that the proof is by induction on the assumed typing derivation (See Appendix~\ref{subsec:proof_of_type_soundness}).
\begin{center}
  \begin{math}
    \LamDeltadruleApp{}{}
  \end{math}
\end{center}
By the induction hypothesis we know $\LamDeltant{t_{{\mathrm{1}}}} \in \interp{ \LamDeltant{A}  \to  \LamDeltant{B} }_\Gamma$ and $\LamDeltant{t_{{\mathrm{2}}}} \in \interp{\LamDeltant{A}}_\Gamma$.
  So by the definition of the interpretation of types we know there exists normal forms $\LamDeltant{n_{{\mathrm{1}}}}$ and $\LamDeltant{n_{{\mathrm{2}}}}$
  such that $\LamDeltant{t_{{\mathrm{1}}}}  \redto^*  \LamDeltant{n_{{\mathrm{1}}}} \in \interp{ \LamDeltant{A}  \to  \LamDeltant{B} }_\Gamma$ and $\LamDeltant{t_{{\mathrm{2}}}}  \redto^*  \LamDeltant{n_{{\mathrm{2}}}} \in \interp{\LamDeltant{A}}_\Gamma$. Assume $\LamDeltamv{y}$ is a fresh
  variable in $\LamDeltant{n_{{\mathrm{1}}}}$ and $\LamDeltant{n_{{\mathrm{2}}}}$ of type $\LamDeltant{A}$.    
  Then by hereditary 
  substitution for the interpretation of types (Lemma~\ref{lemma:substitution_for_the_interpretation_of_types}) 
  $ [  \LamDeltant{n_{{\mathrm{1}}}}  /  \LamDeltamv{y}  ]^{ \LamDeltant{A} }  \LamDeltasym{(}  \LamDeltamv{y} \, \LamDeltant{n_{{\mathrm{2}}}}  \LamDeltasym{)}  \in \interp{\LamDeltant{B}}_\Gamma$.  
  It suffices to show that $\LamDeltant{t_{{\mathrm{1}}}} \, \LamDeltant{t_{{\mathrm{2}}}}  \redto^*   [  \LamDeltant{n_{{\mathrm{1}}}}  /  \LamDeltamv{y}  ]^{ \LamDeltant{A} }  \LamDeltasym{(}  \LamDeltamv{y} \, \LamDeltant{n_{{\mathrm{2}}}}  \LamDeltasym{)} $.  This is an easy consequence of soundness with respect
  to reduction (Lemma~\ref{lemma:soundness_reduction}), that is, $\LamDeltant{t_{{\mathrm{1}}}} \, \LamDeltant{t_{{\mathrm{2}}}}  \redto^*  \LamDeltant{n_{{\mathrm{1}}}} \, \LamDeltant{n_{{\mathrm{2}}}} = \LamDeltasym{[}  \LamDeltant{n_{{\mathrm{1}}}}  \LamDeltasym{/}  \LamDeltamv{y}  \LamDeltasym{]}  \LamDeltasym{(}  \LamDeltamv{y} \, \LamDeltant{n_{{\mathrm{2}}}}  \LamDeltasym{)}$ 
  and by soundness with respect to reduction $\LamDeltasym{[}  \LamDeltant{n_{{\mathrm{1}}}}  \LamDeltasym{/}  \LamDeltamv{y}  \LamDeltasym{]}  \LamDeltasym{(}  \LamDeltamv{y} \, \LamDeltant{n_{{\mathrm{2}}}}  \LamDeltasym{)}  \redto^*   [  \LamDeltant{n_{{\mathrm{1}}}}  /  \LamDeltamv{y}  ]^{ \LamDeltant{A} }  \LamDeltasym{(}  \LamDeltamv{y} \, \LamDeltant{n_{{\mathrm{2}}}}  \LamDeltasym{)} $.  Therefore, 
  $\LamDeltant{t_{{\mathrm{1}}}} \, \LamDeltant{t_{{\mathrm{2}}}} \in \interp{\LamDeltant{B}}_\Gamma$.  

\ \\
\noindent
Finally, we conclude normalization for the $\lambda\Delta$-calculus using hereditary substitution.
\begin{corollary}[Normalization]
  \label{corollary:normalization}
  If $ \Gamma  \vdash  \LamDeltant{t}  :  \LamDeltant{T} $ then there exists a term $\LamDeltant{n}$ such that $\LamDeltant{t} \normto \LamDeltant{n}$.
\end{corollary}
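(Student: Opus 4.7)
The plan is to derive this directly from the Type Soundness theorem (Theorem~\ref{theorem:type_soundness}) by unpacking the definition of the interpretation of types (Definition~\ref{def:semantics}). First, I would invoke Type Soundness on the hypothesis $\Gamma \vdash t : T$ to obtain $t \in \interp{T}_\Gamma$. Then, by the extension of $\interp{\cdot}_\Gamma$ to non-normal terms, membership of $t$ in $\interp{T}_\Gamma$ is defined to mean precisely that there exists a normal form $n$ with $t \normto n$ (and $n \in \interp{T}_\Gamma$ in the normal-form clause). Extracting this $n$ gives the conclusion.

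So the corollary really is a one-line consequence, and I would present it as such: apply Theorem~\ref{theorem:type_soundness}, then unfold Definition~\ref{def:semantics}. There is no case analysis or induction needed at this level, since all of the hard work (definedness, type preservation, normality preservation, and soundness with respect to reduction of the hereditary substitution function) has already been done in Section~\ref{subsec:properties_of_the_hereditary_substitution_function} and is encapsulated inside the proof of Type Soundness.

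The only subtlety worth flagging in the write-up is that one should be explicit about the direction of the definition being used: we are reading $t \in \interp{T}_\Gamma$ from left to right to produce the witness $n$, not the other way around. If I wanted to say a bit more, I would note that the normal form $n$ obtained is in fact typable at $T$ in $\Gamma$ (since $n \in \interp{T}_\Gamma$ by the normal-form clause), so one simultaneously gets the usual strengthening of normalization with preservation of typing, though this is not explicitly asked for by the corollary.
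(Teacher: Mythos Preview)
Your proposal is correct and matches the paper's approach: the corollary is presented without an explicit proof precisely because it follows immediately from Type Soundness (Theorem~\ref{theorem:type_soundness}) by unfolding Definition~\ref{def:semantics}. Your observation that the witness $n$ is also typable at $T$ in $\Gamma$ is a nice bonus, though as you note it is not part of the corollary's statement.
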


%
\section{Related Work}
\label{sec:related_work}
We first compare the proof method normalization using hereditary
substitution with other known proof methods.  The
$\lambda\Delta$-calculus could have been proven weakly and strongly
normalizing by translation to $\lambda\mu$-calculus.  It is true that
this is not as complicated as the proof method here, but a proof by
translation does not yield a direct proof.  

A direct proof of weak and strong normalization could have been given
using the Tait-Girard reducibility method.  However, we claim that the
proof method used here is less complicated.  The statement of the type
soundness theorem is qualitatively less complex due to the fact that
there is no need to universally quantify over the set of well-formed
substitutions.  We are able to prove type soundness on open terms
directly.  Additionally, the formalization of normalization using
hereditary substitution does not require recursive types to define the
semantics of types which are required when formalizing a proof using
reducibility.  

R. David and K. Nour give a short proof of normalization of the
$\lambda\Delta$-calculus in \cite{David:2003}.  There they use a
rather complicated lexicographic combination to give a completely
arithmetical proof of strong normalization.  While they show strong
normalization their proof method is comparable to using hereditary
substitution.  As we mentioned in the introduction hereditary
substitution is the constructive content of normalization proofs using
the lexicographic combination of an ordering on types and the strict
subexpression ordering on terms.  It is currently unknown if
hereditary substitution can be extended to show strong normalization,
but we conjecture that the constructive content of the proof of Lemma
3..6 in David and Nour's work would yield a hereditary substitution
like function.  Furthermore, for simply typed theories we believe it
is enough to show weak normalization and never need to show strong
normalization.  It is well-known due to the work of G. Barthe et
al. in \cite{Barthe:2001} that for the entire left hand side of the
$\lambda$-cube weak normalization implies strong normalization.  We
conjecture that this result would extend to the left hand side of the
classical $\lambda$-cube given in \cite{Barthe:1997}.  Thus, showing
normalization using hereditary substitution is less complicated than
the work of David and Nour's.

Similar to the work of David and Nour is the work of F. Joachimski and
R. Matthes.  In \cite{Joachimski:1999} they prove weak and strong
normalization of various simply typed theories.  The proof method used
is induction on various lexicographic combinations similar to
hereditary substitution.  After proving weak normalization of each
type theory they extract the constructive content of the proof
yielding a normalization function which depends on a substitution
function similar to the hereditary substitution function.  In contrast
once hereditary substitution is defined for a type theory we can easily
define a normalization function. Note that the following function is 
the computational content of the type-soundness theorem 
(Theorem~\ref{theorem:type_soundness}).
\begin{definition}
  \label{def:norm_fun_hs}
  We define a normalization function for the $\lambda\Delta$-calculus using hereditary
  substitution as follows:
  \begin{center}
    \begin{itemize}
    \item[] $ \mathsf{norm}\, \LamDeltamv{x}  = \LamDeltamv{x}$\\
    \item[] $ \mathsf{norm}\, \LamDeltasym{(}   \lambda  \LamDeltamv{x} : \LamDeltant{A}  .  \LamDeltant{t}   \LamDeltasym{)}  =  \lambda  \LamDeltamv{x} : \LamDeltant{A}  .  \LamDeltasym{(}   \mathsf{norm}\, \LamDeltant{t}   \LamDeltasym{)} $\\
    \item[] $ \mathsf{norm}\, \LamDeltasym{(}   \Delta  \LamDeltamv{x}  :  \LamDeltant{A}  .  \LamDeltant{t}   \LamDeltasym{)}  =  \Delta  \LamDeltamv{x}  :  \LamDeltant{A}  .  \LamDeltasym{(}   \mathsf{norm}\, \LamDeltant{t}   \LamDeltasym{)} $\\
    \item[] $ \mathsf{norm}\, \LamDeltasym{(}  \LamDeltant{t_{{\mathrm{1}}}} \, \LamDeltant{t_{{\mathrm{2}}}}  \LamDeltasym{)}  =  [  \LamDeltant{n_{{\mathrm{1}}}}  /  \LamDeltamv{r}  ]^{ \LamDeltant{A} }  \LamDeltasym{(}  \LamDeltamv{r} \, \LamDeltant{n_{{\mathrm{2}}}}  \LamDeltasym{)} $\\
      \begin{tabular}{lll}
      & Where $ \mathsf{norm}\, \LamDeltant{t_{{\mathrm{1}}}}  = \LamDeltant{n_{{\mathrm{1}}}}$, $ \mathsf{norm}\, \LamDeltant{t_{{\mathrm{2}}}}  = \LamDeltant{n_{{\mathrm{2}}}}$, $\LamDeltant{A}$ is
      the type of $\LamDeltant{t_{{\mathrm{1}}}}$, and $\LamDeltamv{r}$ is fresh in $\LamDeltant{t_{{\mathrm{1}}}}$\\
      &  and $\LamDeltant{t_{{\mathrm{2}}}}$.\\
    \end{tabular}
    \end{itemize}
  \end{center}
\end{definition}

This function is similar to the normalization functions in Joachimski 
and Matthes' work.  We could use the above normalization function to
decide $\beta\eta$-equality for the $\lambda\Delta$-calculus.  Indeed
this one of the main application of hereditary substitution.

A. Abel in 2006 shows how to implement a normalizer using sized
heterogeneous types which is a function similar to the hereditary
substitution function in \cite{Abel:2006}.  He then uses hereditary
substitution to prove normalization of the type level of a type theory
with higher-order subtyping in \cite{Abel:2008}.  This results in a
purely syntactic metatheory.  C. Keller and T. Altenkirch recently
implemented hereditary substitution as a normalization function for
the simply typed $\lambda$-calculus in Agda \cite{Keller:2010}.  Their
results show that hereditary substitution can be used to decide $\beta
\eta$-equality.  They found hereditary substitution to be convenient
to use in a total type theory, because it can be implemented without a
termination proof.  This is because the hereditary-substitution
function can be recognized as structurally recursive, and hence
accepted directly by Agda's termination checker.

One point which sets the current work apart from all of the related
work just considered is that they were all concerned with
intuitionistic type theories.  Here we apply hereditary substitution
on a classical type theory.  To our knowledge this is the first time
this has been done.

%
\section{Conclusion}
\label{sec:conclusion}
We briefly gave an overview of the hereditary substitution proof
method for showing normalization of typed $\lambda$-calculi and showed
how to extend and apply it to the $\lambda\Delta$-calculus.  In
Section~\ref{subsec:the_final_extension} we
defined the hereditary substitution function for the
$\lambda\Delta$-calculus which involved a new function called the
structural hereditary substitution function.  Then we proved the main
properties of the hereditary substitution function in
Section~\ref{subsec:properties_of_the_hereditary_substitution_function}.
Lastly, we concluded normalization in Section~\ref{sec:concluding_normalization}.

\textbf{Future work.} The authors conjecture that the current work may
extend to yield a direct proof of normalization of the
$\lambda\mu$-calculus using hereditary substitution.

\bibliographystyle{eptcs}



\appendix 

\section{Proofs}
\label{sec:proofs}
\subsection{Proof of Properties of $ \textsf{ctype}_{ \LamDeltant{T} } $}
\label{subsec:proof_of_ctype_props}
We prove part one first. This is a proof by induction on the structure of $t$.

\begin{itemize}
\item[Case.] Suppose $\LamDeltant{t} \equiv \LamDeltamv{x}$.  Then $ \textsf{ctype}_{ \LamDeltant{T} }( \LamDeltamv{x} , \LamDeltamv{x} )  = \LamDeltant{T}$.  Clearly,
  $ \textsf{head}( \LamDeltamv{x} )  = \LamDeltamv{x}$ and $\LamDeltant{T}$ is a subexpression of itself.
  
\item[Case.] Suppose $\LamDeltant{t} \equiv \LamDeltant{t_{{\mathrm{1}}}} \, \LamDeltant{t_{{\mathrm{2}}}}$.  Then $ \textsf{ctype}_{ \LamDeltant{T} }( \LamDeltamv{x} , \LamDeltant{t_{{\mathrm{1}}}} \, \LamDeltant{t_{{\mathrm{2}}}} )  = \LamDeltant{T''}$
  when $ \textsf{ctype}_{ \LamDeltant{T} }( \LamDeltamv{x} , \LamDeltant{t_{{\mathrm{1}}}} )  =  \LamDeltant{T'}  \to  \LamDeltant{T''} $.  Now $\LamDeltant{t} > \LamDeltant{t_{{\mathrm{1}}}}$ so by the induction
  hypothesis $ \textsf{head}( \LamDeltant{t_{{\mathrm{1}}}} )  = \LamDeltamv{x}$ and $ \LamDeltant{T'}  \to  \LamDeltant{T''} $ is a subexpression of $\LamDeltant{T}$.
  Therefore, $ \textsf{head}( \LamDeltant{t_{{\mathrm{1}}}} \, \LamDeltant{t_{{\mathrm{2}}}} )  = \LamDeltamv{x}$ and certainly $\LamDeltant{T''}$ is a subexpression of $\LamDeltant{T}$.
\end{itemize}

\ \\
We now prove part two.  This is also a proof by induction on the structure of $t$.

\begin{itemize}
\item[Case.] Suppose $\LamDeltant{t} \equiv \LamDeltamv{x}$.  Then $ \textsf{ctype}_{ \LamDeltant{T} }( \LamDeltamv{x} , \LamDeltamv{x} )  = \LamDeltant{T}$.  Clearly,
  $\LamDeltant{T} \equiv \LamDeltant{T}$.
  
\item[Case.] Suppose $\LamDeltant{t} \equiv \LamDeltant{t_{{\mathrm{1}}}} \, \LamDeltant{t_{{\mathrm{2}}}}$.  Then $ \textsf{ctype}_{ \LamDeltant{T} }( \LamDeltamv{x} , \LamDeltant{t_{{\mathrm{1}}}} \, \LamDeltant{t_{{\mathrm{2}}}} )  = \LamDeltant{T_{{\mathrm{2}}}}$
  when $ \textsf{ctype}_{ \LamDeltant{T} }( \LamDeltamv{x} , \LamDeltant{t_{{\mathrm{1}}}} )  =  \LamDeltant{T_{{\mathrm{1}}}}  \to  \LamDeltant{T_{{\mathrm{2}}}} $.  By inversion on the assumed typing
  derivation we know there exists type $\LamDeltant{T''}$ such that $ \Gamma  \LamDeltasym{,}  \LamDeltamv{x}  \LamDeltasym{:}  \LamDeltant{T}  \LamDeltasym{,}  \Gamma'  \vdash  \LamDeltant{t_{{\mathrm{1}}}}  :   \LamDeltant{T''}  \to  \LamDeltant{T'}  $.
  Now $\LamDeltant{t} > \LamDeltant{t_{{\mathrm{1}}}}$ so by the induction hypothesis $ \LamDeltant{T_{{\mathrm{1}}}}  \to  \LamDeltant{T_{{\mathrm{2}}}}  \equiv  \LamDeltant{T''}  \to  \LamDeltant{T'} $.
  Therefore, $\LamDeltant{T_{{\mathrm{1}}}} \equiv \LamDeltant{T''}$ and $\LamDeltant{T_{{\mathrm{2}}}} \equiv \LamDeltant{T'}$.
\end{itemize}

\subsection{Proof of Properties of $ \textsf{ctype}_{ \LamDeltant{T} } $ Continued}
\label{subsec:proof_of_ctype_props_cont}
We prove part one first. This is a proof by induction on the structure of $\LamDeltant{t_{{\mathrm{1}}}} \, \LamDeltant{t_{{\mathrm{2}}}}$.

\ \\
The only possibilities for the form of $\LamDeltant{t_{{\mathrm{1}}}}$ is $\LamDeltamv{x}$ or $\LamDeltant{s_{{\mathrm{1}}}} \, \LamDeltant{s_{{\mathrm{2}}}}$.  All other 
forms would not result in $ [  \LamDeltant{t}  /  \LamDeltamv{x}  ]^{ \LamDeltant{T} }  \LamDeltant{t_{{\mathrm{1}}}} $ being a $\lambda$-abstraction and $\LamDeltant{t_{{\mathrm{1}}}}$ not.
If $\LamDeltant{t_{{\mathrm{1}}}} \equiv \LamDeltamv{x}$ then there exist a type $\LamDeltant{T''}$ such that $\LamDeltant{T} \equiv  \LamDeltant{T''}  \to  \LamDeltant{T'} $ and
$ \textsf{ctype}_{ \LamDeltant{T} }( \LamDeltamv{x} , \LamDeltamv{x} \, \LamDeltant{t_{{\mathrm{2}}}} )  = \LamDeltant{T'}$ when $ \textsf{ctype}_{ \LamDeltant{T} }( \LamDeltamv{x} , \LamDeltamv{x} )  = \LamDeltant{T} \equiv  \LamDeltant{T''}  \to  \LamDeltant{T'} $ in this case.  We know
$\LamDeltant{T''}$ to exist by inversion on $ \Gamma  \LamDeltasym{,}  \LamDeltamv{x}  \LamDeltasym{:}  \LamDeltant{T}  \LamDeltasym{,}  \Gamma'  \vdash  \LamDeltant{t_{{\mathrm{1}}}} \, \LamDeltant{t_{{\mathrm{2}}}}  :  \LamDeltant{T'} $.

\ \\
Now suppose $\LamDeltant{t_{{\mathrm{1}}}} \equiv \LamDeltant{s_{{\mathrm{1}}}} \, \LamDeltant{s_{{\mathrm{2}}}}$.  Now knowing $\LamDeltant{t_{{\mathrm{1}}}}$ to not a $\lambda$-abstraction
implies that $\LamDeltant{s_{{\mathrm{1}}}}$ is also not a $\lambda$-abstraction or $ [  \LamDeltant{t}  /  \LamDeltamv{x}  ]^{ \LamDeltant{T} }  \LamDeltant{t_{{\mathrm{1}}}} $ would be an application
instead of a $\lambda$-abstraction.  So it must be the case that $ [  \LamDeltant{t}  /  \LamDeltamv{x}  ]^{ \LamDeltant{T} }  \LamDeltant{s_{{\mathrm{1}}}} $ is a $\lambda$-abstraction
and $\LamDeltant{s_{{\mathrm{1}}}}$ is not.  Since $\LamDeltant{s_{{\mathrm{1}}}} < \LamDeltant{t_{{\mathrm{1}}}}$ we can apply the induction hypothesis to obtain there exists
a type $\LamDeltant{A}$ such that $ \textsf{ctype}_{ \LamDeltant{T} }( \LamDeltamv{x} , \LamDeltant{s_{{\mathrm{1}}}} )  = \LamDeltant{A}$.  
Now by inversion on $ \Gamma  \LamDeltasym{,}  \LamDeltamv{x}  \LamDeltasym{:}  \LamDeltant{T}  \LamDeltasym{,}  \Gamma'  \vdash  \LamDeltant{t_{{\mathrm{1}}}} \, \LamDeltant{t_{{\mathrm{2}}}}  :  \LamDeltant{T'} $ we know there exists a type $\LamDeltant{T''}$ such that
$ \Gamma  \LamDeltasym{,}  \LamDeltamv{x}  \LamDeltasym{:}  \LamDeltant{T}  \LamDeltasym{,}  \Gamma'  \vdash  \LamDeltant{t_{{\mathrm{1}}}}  :   \LamDeltant{T''}  \to  \LamDeltant{T'}  $.  We know $\LamDeltant{t_{{\mathrm{1}}}} \equiv \LamDeltant{s_{{\mathrm{1}}}} \, \LamDeltant{s_{{\mathrm{2}}}}$ so by inversion on
$ \Gamma  \LamDeltasym{,}  \LamDeltamv{x}  \LamDeltasym{:}  \LamDeltant{T}  \LamDeltasym{,}  \Gamma'  \vdash  \LamDeltant{t_{{\mathrm{1}}}}  :   \LamDeltant{T''}  \to  \LamDeltant{T'}  $ we know there exists a type $\LamDeltant{A''}$ such that
$ \Gamma  \LamDeltasym{,}  \LamDeltamv{x}  \LamDeltasym{:}  \LamDeltant{T}  \LamDeltasym{,}  \Gamma'  \vdash  \LamDeltant{s_{{\mathrm{1}}}}  :   \LamDeltant{A''}  \to  \LamDeltasym{(}   \LamDeltant{T''}  \to  \LamDeltant{T'}   \LamDeltasym{)}  $.
By part two of Lemma~\ref{lemma:ctype_props} we know $\LamDeltant{A} \equiv  \LamDeltant{A''}  \to  \LamDeltasym{(}   \LamDeltant{T''}  \to  \LamDeltant{T'}   \LamDeltasym{)} $ and
$ \textsf{ctype}_{ \LamDeltant{T} }( \LamDeltamv{x} , \LamDeltant{t_{{\mathrm{1}}}} )  =  \textsf{ctype}_{ \LamDeltant{T} }( \LamDeltamv{x} , \LamDeltant{s_{{\mathrm{1}}}} \, \LamDeltant{s_{{\mathrm{2}}}} )  =  \LamDeltant{T''}  \to  \LamDeltant{T'} $ 
when $ \textsf{ctype}_{ \LamDeltant{T} }( \LamDeltamv{x} , \LamDeltant{s_{{\mathrm{1}}}} )  =  \LamDeltant{A''}  \to  \LamDeltasym{(}   \LamDeltant{T''}  \to  \LamDeltant{A'}   \LamDeltasym{)} $, because we know $ \textsf{ctype}_{ \LamDeltant{T} }( \LamDeltamv{x} , \LamDeltant{s_{{\mathrm{1}}}} )  = \LamDeltant{A}$.

\ \\
The proof of part two is similar to the proof of part one.

\subsection{Proof of Totality and Type Preservation}
\label{subsec:proof_of_totality_and_type_preservation}
This is a mutually inductive proof using the lexicographic combination
$(\LamDeltant{A}, f,\LamDeltant{t'})$ of our ordering on types,
the natural number ordering where $f \in \{0,1\}$, and
the strict subexpression ordering on terms. We first prove part one
and then part two.  In both parts we case split on $\LamDeltant{t'}$.

\ \\
\noindent Part One.
\begin{itemize}
\item[Case.] Suppose $\LamDeltant{t'}$ is a variable $\LamDeltamv{x}$.  Then either there exists
  a term $\LamDeltant{a}$ such that $ ( \LamDeltamv{x} , \LamDeltamv{z} , \LamDeltant{a} )  \in \Theta$ or not.  Suppose so. Then 
  $ \langle  \Theta  \rangle^{ \LamDeltant{A} }_{ \LamDeltant{A'} }  \LamDeltamv{x}  =  \lambda  \LamDeltamv{y} :  \LamDeltant{A}  \to  \LamDeltant{A'}   .  \LamDeltasym{(}  \LamDeltamv{z} \, \LamDeltasym{(}  \LamDeltamv{y} \, \LamDeltant{a}  \LamDeltasym{)}  \LamDeltasym{)} $ where $\LamDeltamv{y}$ is fresh in $\LamDeltamv{x}$, $\LamDeltamv{z}$ and $\LamDeltant{a}$.
  Now suppose there does not exist any term $\LamDeltant{a}$ or $\LamDeltamv{z}$ such that $ ( \LamDeltamv{x} , \LamDeltamv{z} , \LamDeltant{a} )  \in \Theta$.  Then
  $ \langle  \Theta  \rangle^{ \LamDeltant{A} }_{ \LamDeltant{A'} }  \LamDeltamv{x}  = \LamDeltamv{x}$. Typing clearly holds, because if $ ( \LamDeltamv{x} , \LamDeltamv{z} , \LamDeltant{a} )  \in \Theta$ then 
  $\LamDeltant{B} \equiv  \neg  \LamDeltasym{(}   \LamDeltant{A}  \to  \LamDeltant{A'}   \LamDeltasym{)} $ and we know $ \Gamma  \LamDeltasym{,}   \Theta ^2 :   \neg  \LamDeltant{A'}    \vdash   \lambda  \LamDeltamv{y} :  \LamDeltant{A}  \to  \LamDeltant{A'}   .  \LamDeltasym{(}  \LamDeltamv{z} \, \LamDeltasym{(}  \LamDeltamv{y} \, \LamDeltant{a}  \LamDeltasym{)}  \LamDeltasym{)}   :  \LamDeltant{B} $ or
  $\LamDeltamv{x} \not\in  \Theta ^1 $ then it must be the case that $\LamDeltamv{x}  \LamDeltasym{:}  \LamDeltant{B} \in \Gamma$, hence,
  by assumption and weakening for typing $ \Gamma  \LamDeltasym{,}   \Theta ^2 :   \neg  \LamDeltant{A'}    \vdash  \LamDeltamv{x}  :  \LamDeltant{B} $.

\item[Case.] It must be the case that $\LamDeltant{B} \equiv  \LamDeltant{B_{{\mathrm{1}}}}  \to  \LamDeltant{B_{{\mathrm{2}}}} $ for some types $\LamDeltant{B_{{\mathrm{1}}}}$ and
  $\LamDeltant{B_{{\mathrm{2}}}}$.  Suppose $\LamDeltant{t'} \equiv  \lambda  \LamDeltamv{y} : \LamDeltant{B_{{\mathrm{1}}}}  .  \LamDeltant{t'_{{\mathrm{1}}}} $. Then 
  $ \langle  \Theta  \rangle^{ \LamDeltant{A} }_{ \LamDeltant{A'} }  \LamDeltant{t'}  =  \langle  \Theta  \rangle^{ \LamDeltant{A} }_{ \LamDeltant{A'} }  \LamDeltasym{(}   \lambda  \LamDeltamv{y} : \LamDeltant{B_{{\mathrm{1}}}}  .  \LamDeltant{t'_{{\mathrm{1}}}}   \LamDeltasym{)}  =  \lambda  \LamDeltamv{y} : \LamDeltant{B_{{\mathrm{1}}}}  .   \langle  \Theta  \rangle^{ \LamDeltant{A} }_{ \LamDeltant{A'} }  \LamDeltant{t'_{{\mathrm{1}}}}  $.  Now sense
  $(\LamDeltant{A},1,\LamDeltant{t'}) > (\LamDeltant{A},1,\LamDeltant{t'_{{\mathrm{1}}}})$ we may apply the induction hypothesis to obtain
  that there exists a term $\LamDeltamv{s}$ such that $ \langle  \Theta  \rangle^{ \LamDeltant{A} }_{ \LamDeltant{A'} }  \LamDeltant{t'_{{\mathrm{1}}}}  = \LamDeltamv{s}$, and $ \Gamma  \LamDeltasym{,}   \Theta ^2 :   \neg  \LamDeltant{A'}    \LamDeltasym{,}  \LamDeltamv{y}  \LamDeltasym{:}  \LamDeltant{B_{{\mathrm{1}}}}  \vdash  \LamDeltant{s}  :  \LamDeltant{B_{{\mathrm{2}}}} $.
  Thus, by definition and the typing rule for $\lambda$-abstractions we obtain $ \langle  \Theta  \rangle^{ \LamDeltant{A} }_{ \LamDeltant{A'} }  \LamDeltant{t'}  =  \lambda  \LamDeltamv{y} : \LamDeltant{B_{{\mathrm{1}}}}  .  \LamDeltant{s} $
  and $ \Gamma  \LamDeltasym{,}   \Theta ^2 :   \neg  \LamDeltant{A'}    \vdash   \lambda  \LamDeltamv{y} : \LamDeltant{B_{{\mathrm{1}}}}  .  \LamDeltant{s}   :   \LamDeltant{B_{{\mathrm{1}}}}  \to  \LamDeltant{B_{{\mathrm{2}}}}  $.

\item[Case.] Suppose $\LamDeltant{t'} \equiv  \Delta  \LamDeltamv{y}  :   \neg  \LamDeltant{B}   .  \LamDeltant{t'_{{\mathrm{1}}}} $. Similar to the previous case.

\item[Case.] Suppose $\LamDeltant{t'} \equiv \LamDeltant{t'_{{\mathrm{1}}}} \, \LamDeltant{t'_{{\mathrm{2}}}}$. We have two cases to consider.
  \begin{itemize}
  \item[Case.] Suppose $\LamDeltant{t'_{{\mathrm{1}}}} \equiv \LamDeltamv{x}$ for some variable $\LamDeltamv{x}$.  In each
    case $\LamDeltant{B} \equiv  \perp $.
    \begin{itemize}
    \item[Case.] Suppose $\LamDeltant{t'_{{\mathrm{2}}}} \equiv  \lambda  \LamDeltamv{y} : \LamDeltant{A}  .  \LamDeltant{t''_{{\mathrm{2}}}} $, for some $\LamDeltamv{y}$ and $\LamDeltant{t''_{{\mathrm{2}}}}$,
      $ ( \LamDeltamv{x} , \LamDeltamv{z} , \LamDeltant{t} )  \in \Theta$. Since 
      $(A,1,\LamDeltant{t'}) > (A,1,\LamDeltant{t''_{{\mathrm{2}}}})$ and the typing assumptions
      hold by inversion we can apply the induction
      hypothesis to obtain $ \langle  \Theta  \rangle^{ \LamDeltant{A} }_{ \LamDeltant{A'} }  \LamDeltant{t''_{{\mathrm{2}}}}  = \LamDeltamv{s}$ for some term $\LamDeltamv{s}$ and
      $ \Gamma  \LamDeltasym{,}   \Theta ^2 :   \neg  \LamDeltant{A'}    \LamDeltasym{,}  \LamDeltamv{y}  \LamDeltasym{:}  \LamDeltant{A}  \vdash  \LamDeltant{s}  :  \LamDeltant{A'} $.
      Furthermore, sense $(A,1,t') > (A,0,s)$, the previous typing condition and
      the typing assumptions we also know from the induction hypothesis that 
      $ [  \LamDeltant{t}  /  \LamDeltamv{y}  ]^{ \LamDeltant{A} }  \LamDeltant{s}   \LamDeltasym{=}  \LamDeltant{s'}$ for some term $\LamDeltant{s'}$ and
      $ \Gamma  \LamDeltasym{,}   \Theta ^2 :   \neg  \LamDeltant{A'}    \vdash  \LamDeltant{s'}  :  \LamDeltant{A'} $. Finally, by definition we know 
      $ \langle  \Theta  \rangle^{ \LamDeltant{A} }_{ \LamDeltant{A'} }  \LamDeltant{t'}  = \LamDeltamv{z} \, \LamDeltasym{(}   [  \LamDeltant{t}  /  \LamDeltamv{y}  ]^{ \LamDeltant{A} }  \LamDeltant{s}   \LamDeltasym{)} = \LamDeltamv{z} \, \LamDeltant{s'}$ and 
      by using the application typing rule that $ \Gamma  \LamDeltasym{,}   \Theta ^2 :   \neg  \LamDeltant{A'}    \vdash  \LamDeltamv{z} \, \LamDeltant{s'}  :  \LamDeltant{B} $.

    \item[Case.] Suppose $\LamDeltant{t'_{{\mathrm{2}}}} \equiv  \Delta  \LamDeltamv{y}  :   \neg  \LamDeltasym{(}   \LamDeltant{A}  \to  \LamDeltant{A'}   \LamDeltasym{)}   .  \LamDeltant{t''_{{\mathrm{2}}}} $, for some $\LamDeltamv{y}$ and 
      $\LamDeltant{t''_{{\mathrm{2}}}}$, $ ( \LamDeltamv{x} , \LamDeltamv{z} , \LamDeltant{t} )  \in \Theta$. Since $(A,1,t') > (A,1,\LamDeltant{t''_{{\mathrm{2}}}})$ we know from the 
      induction hypothesis that $ \langle  \Theta  \LamDeltasym{,}   ( \LamDeltamv{y} , \LamDeltamv{z_{{\mathrm{2}}}} , \LamDeltant{t} )   \rangle^{ \LamDeltant{A} }_{ \LamDeltant{A'} }  \LamDeltant{t''_{{\mathrm{2}}}}  = \LamDeltamv{s}$ for some fresh variable $\LamDeltamv{z}$
      and term $\LamDeltamv{s}$, and $ \Gamma  \LamDeltasym{,}   \Theta ^2 :   \neg  \LamDeltant{A'}    \LamDeltasym{,}  \LamDeltamv{z_{{\mathrm{2}}}}  \LamDeltasym{:}   \neg  \LamDeltant{A'}   \vdash  \LamDeltant{s}  :   \perp  $.
      Finally, $ \langle  \Theta  \rangle^{ \LamDeltant{A} }_{ \LamDeltant{A'} }  \LamDeltant{t'}  = \LamDeltamv{z} \, \LamDeltasym{(}   \Delta  \LamDeltamv{z_{{\mathrm{2}}}}  :   \neg  \LamDeltant{A'}   .  \LamDeltant{s}   \LamDeltasym{)}$ by definition, and by
      using the application typing rule 
      $ \Gamma  \LamDeltasym{,}   \Theta ^2 :   \neg  \LamDeltant{A'}    \vdash  \LamDeltamv{z} \, \LamDeltasym{(}   \Delta  \LamDeltamv{z_{{\mathrm{2}}}}  :   \neg  \LamDeltant{A'}   .  \LamDeltant{s}   \LamDeltasym{)}  :  \LamDeltant{B} $.

    \item[Case.] Suppose $\LamDeltant{t'_{{\mathrm{2}}}}$ is not an abstraction, and $ ( \LamDeltamv{x} , \LamDeltamv{z} , \LamDeltant{t} )  \in \Theta$.  
      Since $(A,1,t') > (A,1,\LamDeltant{t'_{{\mathrm{2}}}})$ we know from the 
      induction hypothesis that $ \langle  \Theta  \rangle^{ \LamDeltant{A} }_{ \LamDeltant{A'} }  \LamDeltant{t'_{{\mathrm{2}}}}  = \LamDeltamv{s}$ for some term $\LamDeltamv{s}$
      and $ \Gamma  \LamDeltasym{,}   \Theta ^2 :   \neg  \LamDeltant{A'}    \vdash  \LamDeltant{s}  :   \LamDeltant{A}  \to  \LamDeltant{A'}  $.  Finally,
      $ \langle  \Theta  \rangle^{ \LamDeltant{A} }_{ \LamDeltant{A'} }  \LamDeltant{t'}  = \LamDeltamv{z} \, \LamDeltant{s}$ by definition, and by
      using the application typing rule 
      $ \Gamma  \LamDeltasym{,}   \Theta ^2 :   \neg  \LamDeltant{A'}    \vdash  \LamDeltamv{z} \, \LamDeltant{s}  :  \LamDeltant{B} $.

    \item[Case.] Suppose $ ( \LamDeltamv{x} , \LamDeltamv{z} , \LamDeltant{t''} )  \not \in \Theta$ for any term $\LamDeltant{t''}$ and $\LamDeltamv{z}$.  
      Since $(A,1,t') > (A,1,\LamDeltant{t'_{{\mathrm{2}}}})$ we know from the 
      induction hypothesis that $ \langle  \Theta  \rangle^{ \LamDeltant{A} }_{ \LamDeltant{A'} }  \LamDeltant{t'_{{\mathrm{2}}}}  = \LamDeltamv{s}$ for some term $\LamDeltamv{s}$
      and $ \Gamma  \LamDeltasym{,}   \Theta ^2 :   \neg  \LamDeltant{A'}    \vdash  \LamDeltant{s}  :   \LamDeltant{A}  \to  \LamDeltant{A'}  $.  Finally,
      $ \langle  \Theta  \rangle^{ \LamDeltant{A} }_{ \LamDeltant{A'} }  \LamDeltant{t'}  = \LamDeltamv{x} \, \LamDeltant{s}$ by definition, and by
      using the application typing rule 
      $ \Gamma  \LamDeltasym{,}   \Theta ^2 :   \neg  \LamDeltant{A'}    \vdash  \LamDeltamv{x} \, \LamDeltant{s}  :  \LamDeltant{B} $.
    \end{itemize}
  \item[Case.] Suppose $\LamDeltant{t'_{{\mathrm{1}}}}$ is not a variable.  This case follows easily from
    the induction hypothesis.
  \end{itemize}

\end{itemize}

\ \\
\noindent Part two.
\begin{itemize}
\item[Case.] Suppose $\LamDeltant{t'}$ is either $\LamDeltamv{x}$ or a variable $\LamDeltamv{y}$ distinct from $\LamDeltamv{x}$.  
  Trivial in both cases.

\item[Case.] Suppose $\LamDeltant{t'} \equiv  \lambda  \LamDeltamv{y} : \LamDeltant{A_{{\mathrm{1}}}}  .  \LamDeltant{t'_{{\mathrm{1}}}} $.  By inversion 
  we know there exists a type $\LamDeltant{A_{{\mathrm{2}}}}$ such that
  $ \Gamma  \LamDeltasym{,}  \LamDeltamv{x}  \LamDeltasym{:}  \LamDeltant{A}  \LamDeltasym{,}  \Gamma'  \LamDeltasym{,}  \LamDeltamv{y}  \LamDeltasym{:}  \LamDeltant{A_{{\mathrm{1}}}}  \vdash  \LamDeltant{t'_{{\mathrm{1}}}}  :  \LamDeltant{A_{{\mathrm{2}}}} $.
  We also know that $\LamDeltant{t'_{{\mathrm{1}}}}$ is a strict subexpression of $\LamDeltant{t'}$, hence we can apply the second part of the
  induction hypothesis to obtain
  $ [  \LamDeltant{t}  /  \LamDeltamv{x}  ]^{ \LamDeltant{A} }  \LamDeltant{t'_{{\mathrm{1}}}}  = \LamDeltant{s_{{\mathrm{1}}}}$ and $ \Gamma  \LamDeltasym{,}  \Gamma'  \LamDeltasym{,}  \LamDeltamv{y}  \LamDeltasym{:}  \LamDeltant{A_{{\mathrm{1}}}}  \vdash  \LamDeltant{s_{{\mathrm{1}}}}  :  \LamDeltant{A_{{\mathrm{2}}}} $
  for some term $\LamDeltant{s_{{\mathrm{1}}}}$.  By the definition of the hereditary substitution function 
  \begin{center}
    \begin{math}
      \begin{array}{lll}
         [  \LamDeltant{t}  /  \LamDeltamv{x}  ]^{ \LamDeltant{A} }  \LamDeltant{t'}  & = &  \lambda  \LamDeltamv{y} : \LamDeltant{A_{{\mathrm{1}}}}  .   [  \LamDeltant{t}  /  \LamDeltamv{x}  ]^{ \LamDeltant{A} }  \LamDeltant{t'_{{\mathrm{1}}}}   \\
        & = &  \lambda  \LamDeltamv{y} : \LamDeltant{A_{{\mathrm{1}}}}  .  \LamDeltant{s_{{\mathrm{1}}}} .
      \end{array}
    \end{math}
  \end{center}
  It suffices to show that $ \Gamma  \LamDeltasym{,}  \Gamma'  \vdash   \lambda  \LamDeltamv{y} : \LamDeltant{A_{{\mathrm{1}}}}  .  \LamDeltant{s_{{\mathrm{1}}}}   :   \LamDeltant{A_{{\mathrm{1}}}}  \to  \LamDeltant{A_{{\mathrm{2}}}}  $.  
  By simply applying the typing rule $\LamDeltadrulename{Lam}$ using
  $ \Gamma  \LamDeltasym{,}  \Gamma'  \LamDeltasym{,}  \LamDeltamv{y}  \LamDeltasym{:}  \LamDeltant{A_{{\mathrm{1}}}}  \vdash  \LamDeltant{s_{{\mathrm{1}}}}  :  \LamDeltant{A_{{\mathrm{2}}}} $ we obtain $ \Gamma  \LamDeltasym{,}  \Gamma'  \vdash   \lambda  \LamDeltamv{y} : \LamDeltant{A_{{\mathrm{1}}}}  .  \LamDeltant{s_{{\mathrm{1}}}}   :   \LamDeltant{A_{{\mathrm{1}}}}  \to  \LamDeltant{A_{{\mathrm{2}}}}  $.
  
\item[Case.] Suppose $\LamDeltant{t'} \equiv  \Delta  \LamDeltamv{y}  :   \neg  \LamDeltant{B}   .  \LamDeltant{t'_{{\mathrm{1}}}} $.  Similar to the previous case.

\item[Case.] Suppose $\LamDeltant{t'} \equiv \LamDeltant{t'_{{\mathrm{1}}}} \, \LamDeltant{t'_{{\mathrm{2}}}}$.  By inversion we know
  $ \Gamma  \LamDeltasym{,}  \LamDeltamv{x}  \LamDeltasym{:}  \LamDeltant{A}  \LamDeltasym{,}  \Gamma'  \vdash  \LamDeltant{t'_{{\mathrm{1}}}}  :   \LamDeltant{B'}  \to  \LamDeltant{B}  $ and
  $ \Gamma  \LamDeltasym{,}  \LamDeltamv{x}  \LamDeltasym{:}  \LamDeltant{A}  \LamDeltasym{,}  \Gamma'  \vdash  \LamDeltant{t'_{{\mathrm{2}}}}  :  \LamDeltant{B'} $ for some type $\LamDeltant{B'}$.
  Clearly, $\LamDeltant{t'_{{\mathrm{1}}}}$ and $\LamDeltant{t'_{{\mathrm{2}}}}$ are strict subexpressions of $\LamDeltant{t'}$.  Thus, by the second part of the
  induction hypothesis there exists terms $\LamDeltant{s_{{\mathrm{1}}}}$ and $\LamDeltant{s_{{\mathrm{2}}}}$ such that $ [  \LamDeltant{t}  /  \LamDeltamv{x}  ]^{ \LamDeltant{A} }  \LamDeltant{t'_{{\mathrm{1}}}}  = \LamDeltant{s_{{\mathrm{1}}}}$ and
  $ [  \LamDeltant{t}  /  \LamDeltamv{x}  ]^{ \LamDeltant{A} }  \LamDeltant{t'_{{\mathrm{2}}}}  = \LamDeltant{s_{{\mathrm{2}}}}$, and $ \Gamma  \LamDeltasym{,}  \Gamma'  \vdash  \LamDeltant{s_{{\mathrm{1}}}}  :   \LamDeltant{B'}  \to  \LamDeltant{B'}  $ and
  $ \Gamma  \LamDeltasym{,}  \Gamma'  \vdash  \LamDeltant{s_{{\mathrm{2}}}}  :  \LamDeltant{B'} $.  We case split on whether or not $\LamDeltant{s_{{\mathrm{1}}}}$ is a $\lambda$-abstraction or
  a $\Delta$-abstraction and $\LamDeltant{t'_{{\mathrm{1}}}}$ is not, or $\LamDeltant{s_{{\mathrm{1}}}}$ and $\LamDeltant{t'_{{\mathrm{1}}}}$ are both a $\lambda$-abstraction or
  a $\Delta$-abstraction.
  We only consider the non-trivial cases when $\LamDeltant{s_{{\mathrm{1}}}} \equiv  \lambda  \LamDeltamv{y} : \LamDeltant{B'}  .  \LamDeltant{s'_{{\mathrm{1}}}} $ and $\LamDeltant{t'_{{\mathrm{1}}}}$ is not a 
  $\lambda$-abstraction, and $\LamDeltant{s_{{\mathrm{1}}}} \equiv  \Delta  \LamDeltamv{y}  :   \neg  \LamDeltasym{(}   \LamDeltant{B'}  \to  \LamDeltant{B}   \LamDeltasym{)}   .  \LamDeltant{s'_{{\mathrm{1}}}} $ and $\LamDeltant{t'_{{\mathrm{1}}}}$ is not a 
  $\Delta$-abstraction.  Consider the former.

  \ \\
  Now by Lemma~\ref{lemma:ctype_props} it is the case that 
  there exists a $\LamDeltant{B''}$ such that $ \textsf{ctype}_{ \LamDeltant{A} }( \LamDeltamv{x} , \LamDeltant{t'_{{\mathrm{1}}}} )  = \LamDeltant{B''}$, 
  $\LamDeltant{B''} \equiv  \LamDeltant{B'}  \to  \LamDeltant{B} $, and $\LamDeltant{B}$ is a subexpression of $\LamDeltant{A}$, hence
  $\LamDeltant{A} > \LamDeltant{B'}$.  By the definition of the hereditary substitution function
  $ [  \LamDeltant{t}  /  \LamDeltamv{x}  ]^{ \LamDeltant{A} }  \LamDeltasym{(}  \LamDeltant{t'_{{\mathrm{1}}}} \, \LamDeltant{t'_{{\mathrm{2}}}}  \LamDeltasym{)}  =  [  \LamDeltant{s_{{\mathrm{2}}}}  /  \LamDeltamv{y}  ]^{ \LamDeltant{B'} }  \LamDeltant{s'_{{\mathrm{1}}}} $. Therefore, by the induction hypothesis there exists a 
  term $\LamDeltamv{s}$ such that $ [  \LamDeltant{s_{{\mathrm{2}}}}  /  \LamDeltamv{y}  ]^{ \LamDeltant{A} }  \LamDeltant{s'_{{\mathrm{1}}}}  = \LamDeltamv{s}$ and $ \Gamma  \LamDeltasym{,}  \Gamma'  \vdash  \LamDeltant{s}  :  \LamDeltant{B} $.

  \ \\
  At this point consider when $\LamDeltant{s_{{\mathrm{1}}}} \equiv  \Delta  \LamDeltamv{y}  :   \neg  \LamDeltasym{(}   \LamDeltant{B'}  \to  \LamDeltant{B}   \LamDeltasym{)}   .  \LamDeltant{s'_{{\mathrm{1}}}} $ and $\LamDeltant{t'_{{\mathrm{1}}}}$ is not a 
  $\Delta$-abstraction.  Again, by Lemma~\ref{lemma:ctype_props} it is the case that 
  there exists a $\LamDeltant{B''}$ such that $ \textsf{ctype}_{ \LamDeltant{A} }( \LamDeltamv{x} , \LamDeltant{t'_{{\mathrm{1}}}} )  = \LamDeltant{B''}$, $\LamDeltant{B''} \equiv  \LamDeltant{B'}  \to  \LamDeltant{B} $ and
  $ \LamDeltant{B'}  \to  \LamDeltant{B} $ is a subexpression of $\LamDeltant{A}$.  Hence, $\LamDeltant{A} > \LamDeltant{B'}$. Let $\LamDeltamv{r}$ be a fresh variable of type $ \neg  \LamDeltant{B} $.    
  Then by the induction hypothesis, there exists a term $s''$, such that, $ \langle   ( \LamDeltamv{y} , \LamDeltamv{r} , \LamDeltant{s_{{\mathrm{2}}}} )   \rangle^{ \LamDeltant{B'} }_{ \LamDeltant{B} }  \LamDeltant{s'_{{\mathrm{1}}}}  = s''$ and 
  $ \Gamma  \LamDeltasym{,}  \LamDeltamv{r}  \LamDeltasym{:}   \neg  \LamDeltant{B}   \vdash  \LamDeltant{s''}  :   \perp  $.  Therefore, $ [  \LamDeltant{t}  /  \LamDeltamv{x}  ]^{ \LamDeltant{A} }  \LamDeltasym{(}  \LamDeltant{t'_{{\mathrm{1}}}} \, \LamDeltant{t'_{{\mathrm{2}}}}  \LamDeltasym{)}  =  \Delta  \LamDeltamv{r}  :   \neg  \LamDeltant{B}   .   \langle   ( \LamDeltamv{y} , \LamDeltamv{r} , \LamDeltant{s_{{\mathrm{2}}}} )   \rangle^{ \LamDeltant{B'} }_{ \LamDeltant{B} }  \LamDeltant{s'_{{\mathrm{1}}}}   =  \Delta  \LamDeltamv{r}  :   \neg  \LamDeltant{B}   .  \LamDeltant{s''} $,
  and by the $\Delta$-abstraction typing rule $ \Gamma  \vdash   \Delta  \LamDeltamv{r}  :   \neg  \LamDeltant{B}   .  \LamDeltant{s''}   :  \LamDeltant{B} $.
\end{itemize}

\subsection{Proof of Lemma~\ref{lemma:ssub_var_head}}
\label{subsec:proof_of_ssub_var_head}
This is a proof by induction on the form of $\LamDeltant{n_{{\mathrm{1}}}} \, \LamDeltant{n_{{\mathrm{2}}}}$.
In every case where $\LamDeltant{n_{{\mathrm{1}}}}$ is a variable and $(\LamDeltant{n_{{\mathrm{1}}}},\LamDeltamv{z},\LamDeltant{t}) \in \Theta$ for some 
term $\LamDeltant{t}$ and variable $\LamDeltamv{z}$, we know by definition
that $ \langle  \Theta  \rangle^{ \LamDeltant{A} }_{ \LamDeltant{A'} }  \LamDeltasym{(}  \LamDeltant{n_{{\mathrm{1}}}} \, \LamDeltant{n_{{\mathrm{2}}}}  \LamDeltasym{)}  = \LamDeltamv{z} \, \LamDeltant{t_{{\mathrm{2}}}}$ for some variable $\LamDeltamv{z}$ 
and term $\LamDeltant{t_{{\mathrm{2}}}}$.  In the case where $\LamDeltant{n_{{\mathrm{1}}}}$ is a variable and 
$(\LamDeltant{n_{{\mathrm{1}}}},\LamDeltamv{z},\LamDeltant{t}) \not\in \Theta$ for some 
term $\LamDeltant{t}$ and variable $\LamDeltamv{z}$, we know by definition
that $ \langle  \Theta  \rangle^{ \LamDeltant{A} }_{ \LamDeltant{A'} }  \LamDeltasym{(}  \LamDeltant{n_{{\mathrm{1}}}} \, \LamDeltant{n_{{\mathrm{2}}}}  \LamDeltasym{)}  = \LamDeltasym{(}   \langle  \Theta  \rangle^{ \LamDeltant{A} }_{ \LamDeltant{A'} }  \LamDeltant{n_{{\mathrm{1}}}}   \LamDeltasym{)} \, \LamDeltasym{(}   \langle  \Theta  \rangle^{ \LamDeltant{A} }_{ \LamDeltant{A'} }  \LamDeltant{n_{{\mathrm{2}}}}   \LamDeltasym{)}$.  Now
by hypothesis and definition $ \langle  \Theta  \rangle^{ \LamDeltant{A} }_{ \LamDeltant{A'} }  \LamDeltant{n_{{\mathrm{1}}}}  = \LamDeltant{n_{{\mathrm{1}}}}$.  Thus,
$\LamDeltasym{(}   \langle  \Theta  \rangle^{ \LamDeltant{A} }_{ \LamDeltant{A'} }  \LamDeltant{n_{{\mathrm{1}}}}   \LamDeltasym{)} \, \LamDeltasym{(}   \langle  \Theta  \rangle^{ \LamDeltant{A} }_{ \LamDeltant{A'} }  \LamDeltant{n_{{\mathrm{2}}}}   \LamDeltasym{)} = \LamDeltant{n_{{\mathrm{1}}}} \, \LamDeltasym{(}   \langle  \Theta  \rangle^{ \LamDeltant{A} }_{ \LamDeltant{A'} }  \LamDeltant{n_{{\mathrm{2}}}}   \LamDeltasym{)}$ and we know $\LamDeltant{n_{{\mathrm{1}}}}$ is
a variable.  The final case is when $\LamDeltant{n_{{\mathrm{1}}}}$ is not a variable.  Then it must
be the case that $\LamDeltant{n_{{\mathrm{1}}}}$ is a normal application.  So by the induction hypothesis
$ \textsf{head}(  \langle  \Theta  \rangle^{ \LamDeltant{A} }_{ \LamDeltant{A'} }  \LamDeltant{n_{{\mathrm{1}}}}  ) $ is a variable.  Therefore,
$ \textsf{head}(  \langle  \Theta  \rangle^{ \LamDeltant{A} }_{ \LamDeltant{A'} }  \LamDeltasym{(}  \LamDeltant{n_{{\mathrm{1}}}} \, \LamDeltant{n_{{\mathrm{2}}}}  \LamDeltasym{)}  )  =  \textsf{head}( \LamDeltasym{(}   \langle  \Theta  \rangle^{ \LamDeltant{A} }_{ \LamDeltant{A'} }  \LamDeltant{n_{{\mathrm{1}}}}   \LamDeltasym{)} \, \LamDeltasym{(}   \langle  \Theta  \rangle^{ \LamDeltant{A} }_{ \LamDeltant{A'} }  \LamDeltant{n_{{\mathrm{2}}}}   \LamDeltasym{)} )  =  \textsf{head}(  \langle  \Theta  \rangle^{ \LamDeltant{A} }_{ \LamDeltant{A'} }  \LamDeltant{n_{{\mathrm{1}}}}  ) $ is a variable.

\subsection{Proof of Normality Preservation}
\label{subsec:proof_of_normality_preservation}
This is a mutually inductive proof using the lexicographic combination
$(\LamDeltant{A}, f,\LamDeltant{n'})$ of our ordering on types,
the natural number ordering where $f \in \{0,1\}$, and
the strict subexpression ordering on terms. We first prove part one
and then part two.  In both parts we case split on $\LamDeltant{n'}$.

\ \\
\noindent Part One.
\begin{itemize}
\item[Case.] Suppose $\LamDeltant{n'}$ is a variable $\LamDeltamv{x}$.  Then either there exists
  a normal form $\LamDeltant{m}$ and variable $\LamDeltamv{z}$, such that, $ ( \LamDeltamv{x} , \LamDeltamv{z} , \LamDeltant{m} )  \in \Theta$ or not.  
  Suppose so. Then 
  $ \langle  \Theta  \rangle^{ \LamDeltant{A} }_{ \LamDeltant{A'} }  \LamDeltamv{x}  =  \lambda  \LamDeltamv{y} :  \LamDeltant{A}  \to  \LamDeltant{A'}   .  \LamDeltasym{(}  \LamDeltamv{z} \, \LamDeltasym{(}  \LamDeltamv{y} \, \LamDeltant{m}  \LamDeltasym{)}  \LamDeltasym{)} $ where $\LamDeltamv{y}$ is fresh in $\LamDeltamv{x}$, $\LamDeltamv{z}$ and 
  $\LamDeltant{m}$.  Clearly, $ \lambda  \LamDeltamv{y} :  \LamDeltant{A}  \to  \LamDeltant{A'}   .  \LamDeltasym{(}  \LamDeltamv{z} \, \LamDeltasym{(}  \LamDeltamv{y} \, \LamDeltant{m}  \LamDeltasym{)}  \LamDeltasym{)} $ is normal.
  Now suppose there does not exist any term $\LamDeltant{m}$ or $\LamDeltamv{z}$ such that 
  $ ( \LamDeltamv{x} , \LamDeltamv{z} , \LamDeltant{m} )  \in \Theta$.  Then $ \langle  \Theta  \rangle^{ \LamDeltant{A} }_{ \LamDeltant{A'} }  \LamDeltamv{x}  = \LamDeltamv{x}$ which is clearly normal.

\item[Case.] Suppose $\LamDeltant{n'} \equiv  \lambda  \LamDeltamv{y} : \LamDeltant{B_{{\mathrm{1}}}}  .  \LamDeltant{n'_{{\mathrm{1}}}} $.  Then 
  $ \langle  \Theta  \rangle^{ \LamDeltant{A} }_{ \LamDeltant{A'} }  \LamDeltant{n'}  =  \langle  \Theta  \rangle^{ \LamDeltant{A} }_{ \LamDeltant{A'} }  \LamDeltasym{(}   \lambda  \LamDeltamv{y} : \LamDeltant{B_{{\mathrm{1}}}}  .  \LamDeltant{n'_{{\mathrm{1}}}}   \LamDeltasym{)}  =  \lambda  \LamDeltamv{y} : \LamDeltant{B_{{\mathrm{1}}}}  .   \langle  \Theta  \rangle^{ \LamDeltant{A} }_{ \LamDeltant{A'} }  \LamDeltant{n'_{{\mathrm{1}}}}  $.  Now sense
  $(\LamDeltant{A},1,\LamDeltant{n'}) > (\LamDeltant{A},1,\LamDeltant{n'_{{\mathrm{1}}}})$ we may apply the induction hypothesis to obtain
  that there exists a term $\LamDeltant{m}$ such that $ \langle  \Theta  \rangle^{ \LamDeltant{A} }_{ \LamDeltant{A'} }  \LamDeltant{n'_{{\mathrm{1}}}}  = \LamDeltant{m}$.  
  Thus, by definition we obtain $ \langle  \Theta  \rangle^{ \LamDeltant{A} }_{ \LamDeltant{A'} }  \LamDeltant{n'}  =  \lambda  \LamDeltamv{y} : \LamDeltant{B_{{\mathrm{1}}}}  .  \LamDeltant{m} $.

\item[Case.] Suppose $\LamDeltant{n'} \equiv  \Delta  \LamDeltamv{y}  :   \neg  \LamDeltant{B}   .  \LamDeltant{n'_{{\mathrm{1}}}} $. Similar to the previous case.

\item[Case.] Suppose $\LamDeltant{n'} \equiv \LamDeltant{n'_{{\mathrm{1}}}} \, \LamDeltant{n'_{{\mathrm{2}}}}$. We have two cases to consider.
  \begin{itemize}
  \item[Case.] Suppose $\LamDeltant{n'_{{\mathrm{1}}}} \equiv \LamDeltamv{x}$ for some variable $\LamDeltamv{x}$.     
    \begin{itemize}
    \item[Case.] Suppose $\LamDeltant{n'_{{\mathrm{2}}}} \equiv  \lambda  \LamDeltamv{y} : \LamDeltant{A}  .  \LamDeltant{n''_{{\mathrm{2}}}} $, for some $\LamDeltamv{y}$ and $\LamDeltant{n''_{{\mathrm{2}}}}$,
      $ ( \LamDeltamv{x} , \LamDeltamv{z} , \LamDeltant{n} )  \in \Theta$. Since 
      $(A,1,\LamDeltant{n'}) > (A,1,\LamDeltant{n''_{{\mathrm{2}}}})$ and the typing assumptions
      hold by inversion we can apply the induction
      hypothesis to obtain $ \langle  \Theta  \rangle^{ \LamDeltant{A} }_{ \LamDeltant{A'} }  \LamDeltant{n''_{{\mathrm{2}}}}  = \LamDeltant{m}$ for some term $\LamDeltant{m}$.
      We know from Lemma~\ref{lemma:normality_preservation} that 
      $ \Gamma  \LamDeltasym{,}   \Theta ^2 :   \neg  \LamDeltant{A'}    \LamDeltasym{,}  \LamDeltamv{y}  \LamDeltasym{:}  \LamDeltant{A}  \vdash  \LamDeltant{m}  :  \LamDeltant{A'} $.
      Furthermore, sense $(A,1,n') > (A,0,m)$, the previous typing condition and
      the typing assumptions we also know from the induction hypothesis that 
      $ [  \LamDeltant{t}  /  \LamDeltamv{y}  ]^{ \LamDeltant{A} }  \LamDeltant{m}   \LamDeltasym{=}  \LamDeltant{m'}$ for some term $\LamDeltant{m'}$. Finally, by definition we know 
      $ \langle  \Theta  \rangle^{ \LamDeltant{A} }_{ \LamDeltant{A'} }  \LamDeltant{n'}  = \LamDeltamv{z} \, \LamDeltasym{(}   [  \LamDeltant{n}  /  \LamDeltamv{y}  ]^{ \LamDeltant{A} }  \LamDeltant{m}   \LamDeltasym{)} = \LamDeltamv{z} \, \LamDeltant{m'}$.  It is easy to see that
      $\LamDeltamv{z} \, \LamDeltant{m'}$ is normal.      

    \item[Case.] Suppose $\LamDeltant{n'_{{\mathrm{2}}}} \equiv  \Delta  \LamDeltamv{y}  :   \neg  \LamDeltasym{(}   \LamDeltant{A}  \to  \LamDeltant{A'}   \LamDeltasym{)}   .  \LamDeltant{n''_{{\mathrm{2}}}} $, for some $\LamDeltamv{y}$ and 
      $\LamDeltant{n''_{{\mathrm{2}}}}$, $ ( \LamDeltamv{x} , \LamDeltamv{z} , \LamDeltant{n} )  \in \Theta$. Since $(A,1,n') > (A,1,\LamDeltant{n''_{{\mathrm{2}}}})$ we know from the 
      induction hypothesis that $ \langle  \Theta  \LamDeltasym{,}   ( \LamDeltamv{y} , \LamDeltamv{z_{{\mathrm{2}}}} , \LamDeltant{n} )   \rangle^{ \LamDeltant{A} }_{ \LamDeltant{A'} }  \LamDeltant{n''_{{\mathrm{2}}}}  = \LamDeltant{m}$ for some normal form 
      $\LamDeltant{m}$, and $ \Gamma  \LamDeltasym{,}   \Theta ^2 :   \neg  \LamDeltant{A'}    \LamDeltasym{,}  \LamDeltamv{z_{{\mathrm{2}}}}  \LamDeltasym{:}   \neg  \LamDeltant{A'}   \vdash  \LamDeltant{m}  :   \perp  $.
      Finally, $ \langle  \Theta  \rangle^{ \LamDeltant{A} }_{ \LamDeltant{A'} }  \LamDeltant{n'}  = \LamDeltamv{z} \, \LamDeltasym{(}   \Delta  \LamDeltamv{z_{{\mathrm{2}}}}  :   \neg  \LamDeltant{A'}   .  \LamDeltant{m}   \LamDeltasym{)}$ by definition.

    \item[Case.] Suppose $\LamDeltant{n'_{{\mathrm{2}}}}$ is not an abstraction, and $ ( \LamDeltamv{x} , \LamDeltamv{z} , \LamDeltant{n} )  \in \Theta$.  
      Since $(A,1,n') > (A,1,\LamDeltant{n'_{{\mathrm{2}}}})$ we know from the 
      induction hypothesis that $ \langle  \Theta  \rangle^{ \LamDeltant{A} }_{ \LamDeltant{A'} }  \LamDeltant{n'_{{\mathrm{2}}}}  = \LamDeltant{m}$ for some normal form $\LamDeltant{m}$.  Finally,
      $ \langle  \Theta  \rangle^{ \LamDeltant{A} }_{ \LamDeltant{A'} }  \LamDeltant{n'}  = \LamDeltamv{z} \, \LamDeltant{m}$ by definition.      

    \item[Case.] Suppose $ ( \LamDeltamv{x} , \LamDeltamv{z} , \LamDeltant{n''} )  \not \in \Theta$ for any term $\LamDeltant{n''}$ and $\LamDeltamv{z}$.  
      Since $(A,1,n') > (A,1,\LamDeltant{n'_{{\mathrm{2}}}})$ we know from the 
      induction hypothesis that $ \langle  \Theta  \rangle^{ \LamDeltant{A} }_{ \LamDeltant{A'} }  \LamDeltant{n'_{{\mathrm{2}}}}  = \LamDeltant{m}$ for some term $\LamDeltant{m}$.  Finally,
      $ \langle  \Theta  \rangle^{ \LamDeltant{A} }_{ \LamDeltant{A'} }  \LamDeltant{n'}  = \LamDeltamv{x} \, \LamDeltant{m}$ by definition.      
    \end{itemize}
  \item[Case.] Suppose $\LamDeltant{n'_{{\mathrm{1}}}}$ is not a variable.  This case follows easily from
    the induction hypothesis and Lemma~\ref{lemma:ssub_var_head}.
  \end{itemize}
\end{itemize}

\ \\
\noindent Part two.
\begin{itemize}
\item[Case.] Suppose $\LamDeltant{n'}$ is either $\LamDeltamv{x}$ or a variable $\LamDeltamv{y}$ distinct from $\LamDeltamv{x}$.  
  Trivial in both cases.

\item[Case.] Suppose $\LamDeltant{n'} \equiv  \lambda  \LamDeltamv{y} : \LamDeltant{B_{{\mathrm{1}}}}  .  \LamDeltant{n'_{{\mathrm{1}}}} $.  
  We also know that $\LamDeltant{n'_{{\mathrm{1}}}}$ is a strict subexpression of $\LamDeltant{n'}$, hence we can apply the second part of the
  induction hypothesis to obtain
  $ [  \LamDeltant{n}  /  \LamDeltamv{x}  ]^{ \LamDeltant{A} }  \LamDeltant{n'_{{\mathrm{1}}}}  = \LamDeltant{m_{{\mathrm{1}}}}$ for some normal form $\LamDeltant{m_{{\mathrm{1}}}}$.  
  By the definition of the hereditary substitution function 
  \begin{center}
    \begin{math}
      \begin{array}{lll}
         [  \LamDeltant{n}  /  \LamDeltamv{x}  ]^{ \LamDeltant{A} }  \LamDeltant{n'}  & = &  \lambda  \LamDeltamv{y} : \LamDeltant{A_{{\mathrm{1}}}}  .   [  \LamDeltant{n}  /  \LamDeltamv{x}  ]^{ \LamDeltant{A} }  \LamDeltant{n'_{{\mathrm{1}}}}   \\
        & = &  \lambda  \LamDeltamv{y} : \LamDeltant{B_{{\mathrm{1}}}}  .  \LamDeltant{m_{{\mathrm{1}}}} .
      \end{array}
    \end{math}
  \end{center}
  Clearly, $ \lambda  \LamDeltamv{y} : \LamDeltant{B_{{\mathrm{1}}}}  .  \LamDeltant{m_{{\mathrm{1}}}} $ is normal.
  
\item[Case.] Suppose $\LamDeltant{n'} \equiv  \Delta  \LamDeltamv{y}  :   \neg  \LamDeltant{B}   .  \LamDeltant{n'_{{\mathrm{1}}}} $.  Similar to the previous case.

\item[Case.] Suppose $\LamDeltant{t'} \equiv \LamDeltant{t'_{{\mathrm{1}}}} \, \LamDeltant{t'_{{\mathrm{2}}}}$.
  Clearly, $\LamDeltant{n'_{{\mathrm{1}}}}$ and $\LamDeltant{n'_{{\mathrm{2}}}}$ are strict subexpressions of $\LamDeltant{n'}$.  Thus, by the 
  induction hypothesis there exists normal forms $\LamDeltant{n_{{\mathrm{1}}}}$ and $\LamDeltant{n_{{\mathrm{2}}}}$ such that 
  $ [  \LamDeltant{n}  /  \LamDeltamv{x}  ]^{ \LamDeltant{A} }  \LamDeltant{n'_{{\mathrm{1}}}}  = \LamDeltant{m_{{\mathrm{1}}}}$ and
  $ [  \LamDeltant{n}  /  \LamDeltamv{x}  ]^{ \LamDeltant{A} }  \LamDeltant{n'_{{\mathrm{2}}}}  = \LamDeltant{m_{{\mathrm{2}}}}$.
  We case split on whether or not $\LamDeltant{m_{{\mathrm{1}}}}$ is a $\lambda$-abstraction or
  a $\Delta$-abstraction and $\LamDeltant{n'_{{\mathrm{1}}}}$ is not, or $\LamDeltant{m_{{\mathrm{1}}}}$ and $\LamDeltant{n'_{{\mathrm{1}}}}$ are both a 
  $\lambda$-abstraction or a $\Delta$-abstraction.  
  We only consider the non-trivial cases when $\LamDeltant{m_{{\mathrm{1}}}} \equiv  \lambda  \LamDeltamv{y} : \LamDeltant{B'}  .  \LamDeltant{m'_{{\mathrm{1}}}} $ and 
  $\LamDeltant{n'_{{\mathrm{1}}}}$ is not a 
  $\lambda$-abstraction, and $\LamDeltant{m_{{\mathrm{1}}}} \equiv  \Delta  \LamDeltamv{y}  :   \neg  \LamDeltasym{(}   \LamDeltant{B'}  \to  \LamDeltant{B}   \LamDeltasym{)}   .  \LamDeltant{m'_{{\mathrm{1}}}} $ and $\LamDeltant{n'_{{\mathrm{1}}}}$ is not a 
  $\Delta$-abstraction.  Consider the former.

  \ \\
  Now by Lemma~\ref{lemma:ctype_props} it is the case that 
  there exists a $\LamDeltant{B''}$ such that $ \textsf{ctype}_{ \LamDeltant{A} }( \LamDeltamv{x} , \LamDeltant{t'_{{\mathrm{1}}}} )  = \LamDeltant{B''}$, 
  $\LamDeltant{B''} \equiv  \LamDeltant{B'}  \to  \LamDeltant{B} $, and $\LamDeltant{B}$ is a subexpression of $\LamDeltant{A}$, hence
  $\LamDeltant{A} > \LamDeltant{B'}$.  By the definition of the hereditary substitution function
  $ [  \LamDeltant{n}  /  \LamDeltamv{x}  ]^{ \LamDeltant{A} }  \LamDeltasym{(}  \LamDeltant{n'_{{\mathrm{1}}}} \, \LamDeltant{n'_{{\mathrm{2}}}}  \LamDeltasym{)}  =  [  \LamDeltant{m_{{\mathrm{2}}}}  /  \LamDeltamv{y}  ]^{ \LamDeltant{B'} }  \LamDeltant{m'_{{\mathrm{1}}}} $. Therefore, by the induction hypothesis there 
  exists a 
  normal form $\LamDeltant{m}$ such that $ [  \LamDeltant{m_{{\mathrm{2}}}}  /  \LamDeltamv{y}  ]^{ \LamDeltant{A} }  \LamDeltant{m'_{{\mathrm{1}}}}  = \LamDeltant{m}$.

  \ \\
  At this point consider when $\LamDeltant{m_{{\mathrm{1}}}} \equiv  \Delta  \LamDeltamv{y}  :   \neg  \LamDeltasym{(}   \LamDeltant{B'}  \to  \LamDeltant{B}   \LamDeltasym{)}   .  \LamDeltant{m'_{{\mathrm{1}}}} $ and $\LamDeltant{n'_{{\mathrm{1}}}}$ is not a 
  $\Delta$-abstraction.  Again, by Lemma~\ref{lemma:ctype_props} it is the case that 
  there exists a $\LamDeltant{B''}$ such that $ \textsf{ctype}_{ \LamDeltant{A} }( \LamDeltamv{x} , \LamDeltant{t'_{{\mathrm{1}}}} )  = \LamDeltant{B''}$, $\LamDeltant{B''} \equiv  \LamDeltant{B'}  \to  \LamDeltant{B} $ and
  $ \LamDeltant{B'}  \to  \LamDeltant{B} $ is a subexpression of $\LamDeltant{A}$.  Hence, $\LamDeltant{A} > \LamDeltant{B'}$. Let $\LamDeltamv{r}$ be a fresh variable of type $ \neg  \LamDeltant{B} $.    
  Then by the induction hypothesis, there exists a term $m''$, such that, $ \langle   ( \LamDeltamv{y} , \LamDeltamv{r} , \LamDeltant{m_{{\mathrm{2}}}} )   \rangle^{ \LamDeltant{B'} }_{ \LamDeltant{B} }  \LamDeltant{m'_{{\mathrm{1}}}}  = m''$ and 
  Therefore, $ [  \LamDeltant{n}  /  \LamDeltamv{x}  ]^{ \LamDeltant{A} }  \LamDeltasym{(}  \LamDeltant{n'_{{\mathrm{1}}}} \, \LamDeltant{n'_{{\mathrm{2}}}}  \LamDeltasym{)}  =  \Delta  \LamDeltamv{r}  :   \neg  \LamDeltant{B}   .   \langle   ( \LamDeltamv{y} , \LamDeltamv{r} , \LamDeltant{m_{{\mathrm{2}}}} )   \rangle^{ \LamDeltant{B'} }_{ \LamDeltant{B} }  \LamDeltant{m'_{{\mathrm{1}}}}   =  \Delta  \LamDeltamv{r}  :   \neg  \LamDeltant{B}   .  \LamDeltant{m''} $.
\end{itemize}

\subsection{Proof of Soundness with Respect to Reduction}
\label{subsec:soundness_with_respect_to_reduction}
This is a mutually inductive proof using the lexicographic combination
$(\LamDeltant{A}, f,\LamDeltant{t'})$ of our ordering on types,
the natural number ordering where $f \in \{0,1\}$, and
the strict subexpression ordering on terms. We first prove part one
and then part two.  In both parts we case split on $\LamDeltant{t'}$.

\ \\
\noindent Part One.
\begin{itemize}
\item[Case.] Suppose $\LamDeltant{t'}$ is a variable $\LamDeltamv{x}$.  Then either there exists
  a term $\LamDeltant{a}$ such that $ ( \LamDeltamv{x} , \LamDeltamv{z} , \LamDeltant{a} )  \in \Theta$ or not.  Suppose so. 
  Then by definition we know $ \langle  \Theta  \rangle^{\uparrow^{ \LamDeltant{A} }_{ \LamDeltant{A'} } }\, \LamDeltamv{x}  =  \lambda  \LamDeltamv{y} :  \LamDeltant{A}  \to  \LamDeltant{A'}   .  \LamDeltasym{(}  \LamDeltamv{z} \, \LamDeltasym{(}  \LamDeltamv{y} \, \LamDeltant{a}  \LamDeltasym{)}  \LamDeltasym{)} $,
  for some fresh variable $\LamDeltamv{y}$.  Now $ \langle  \Theta  \rangle^{ \LamDeltant{A} }_{ \LamDeltant{A'} }  \LamDeltamv{x}  =  \lambda  \LamDeltamv{y} :  \LamDeltant{A}  \to  \LamDeltant{A'}   .  \LamDeltasym{(}  \LamDeltamv{z} \, \LamDeltasym{(}  \LamDeltamv{y} \, \LamDeltant{a}  \LamDeltasym{)}  \LamDeltasym{)} $,
  where we choose the same $\LamDeltamv{y}$.  Thus, $ \langle  \Theta  \rangle^{\uparrow^{ \LamDeltant{A} }_{ \LamDeltant{A'} } }\, \LamDeltamv{x}   \redto^*   \langle  \Theta  \rangle^{ \LamDeltant{A} }_{ \LamDeltant{A'} }  \LamDeltamv{x} $.
  Now suppose there does not exist any term $\LamDeltant{a}$ or $\LamDeltamv{z}$ such that $ ( \LamDeltamv{x} , \LamDeltamv{z} , \LamDeltant{a} )  \in \Theta$.
  Then $ \langle  \Theta  \rangle^{ \LamDeltant{A} }_{ \LamDeltant{A'} }  \LamDeltamv{x}  =  \langle  \Theta  \rangle^{\uparrow^{ \LamDeltant{A} }_{ \LamDeltant{A'} } }\, \LamDeltamv{x}  = x$. Thus, $ \langle  \Theta  \rangle^{\uparrow^{ \LamDeltant{A} }_{ \LamDeltant{A'} } }\, \LamDeltamv{x}   \redto^*   \langle  \Theta  \rangle^{ \LamDeltant{A} }_{ \LamDeltant{A'} }  \LamDeltamv{x} $.

\item[Case.] Suppose $\LamDeltant{t'} \equiv  \lambda  \LamDeltamv{y} : \LamDeltant{B_{{\mathrm{1}}}}  .  \LamDeltant{t'_{{\mathrm{1}}}} $. This case follows from the
  induction hypothesis.

\item[Case.] Suppose $\LamDeltant{t'} \equiv  \Delta  \LamDeltamv{y}  :   \neg  \LamDeltant{B}   .  \LamDeltant{t'_{{\mathrm{1}}}} $. Similar to the previous case.

\item[Case.] Suppose $\LamDeltant{t'} \equiv \LamDeltant{t'_{{\mathrm{1}}}} \, \LamDeltant{t'_{{\mathrm{2}}}}$. We have two cases to consider.
  \begin{itemize}
  \item[Case.] Suppose $\LamDeltant{t'_{{\mathrm{1}}}} \equiv \LamDeltamv{x}$ for some variable $\LamDeltamv{x}$. 
    \begin{itemize}
    \item[Case.] Suppose $\LamDeltant{t'_{{\mathrm{2}}}} \equiv  \lambda  \LamDeltamv{y} : \LamDeltant{A}  .  \LamDeltant{t''_{{\mathrm{2}}}} $, for some $\LamDeltamv{y}$ and $\LamDeltant{t''_{{\mathrm{2}}}}$,
      $ ( \LamDeltamv{x} , \LamDeltamv{z} , \LamDeltant{t} )  \in \Theta$. 
      Now
      \begin{center}
        \begin{math}
          \begin{array}{lll}
             \langle  \Theta  \rangle^{\uparrow^{ \LamDeltant{A} }_{ \LamDeltant{A'} } }\, \LamDeltasym{(}  \LamDeltamv{x} \, \LamDeltasym{(}   \lambda  \LamDeltamv{y} : \LamDeltant{A}  .  \LamDeltant{t''_{{\mathrm{2}}}}   \LamDeltasym{)}  \LamDeltasym{)}  & = & \LamDeltasym{(}   \lambda  \LamDeltamv{y} :  \LamDeltant{A}  \to  \LamDeltant{A'}   .  \LamDeltasym{(}  \LamDeltamv{z} \, \LamDeltasym{(}  \LamDeltamv{y} \, \LamDeltant{t}  \LamDeltasym{)}  \LamDeltasym{)}   \LamDeltasym{)} \, \LamDeltasym{(}   \lambda  \LamDeltamv{y} : \LamDeltant{A}  .  \LamDeltasym{(}   \langle  \Theta  \rangle^{\uparrow^{ \LamDeltant{A} }_{ \LamDeltant{A'} } }\, \LamDeltant{t''_{{\mathrm{2}}}}   \LamDeltasym{)}   \LamDeltasym{)}\\
                                             & \redto & \LamDeltamv{z} \, \LamDeltasym{(}  \LamDeltasym{(}   \lambda  \LamDeltamv{y} : \LamDeltant{A}  .  \LamDeltasym{(}   \langle  \Theta  \rangle^{\uparrow^{ \LamDeltant{A} }_{ \LamDeltant{A'} } }\, \LamDeltant{t''_{{\mathrm{2}}}}   \LamDeltasym{)}   \LamDeltasym{)} \, \LamDeltant{t}  \LamDeltasym{)}\\
                                             & \redto & \LamDeltamv{z} \, \LamDeltasym{(}  \LamDeltasym{[}  \LamDeltant{t}  \LamDeltasym{/}  \LamDeltamv{y}  \LamDeltasym{]}  \LamDeltasym{(}   \langle  \Theta  \rangle^{\uparrow^{ \LamDeltant{A} }_{ \LamDeltant{A'} } }\, \LamDeltant{t''_{{\mathrm{2}}}}   \LamDeltasym{)}  \LamDeltasym{)}\\
          \end{array}
        \end{math}
      \end{center}
      Since 
      $(A,1,\LamDeltant{t'}) > (A,1,\LamDeltant{t''_{{\mathrm{2}}}})$ we can apply the induction
      hypothesis to obtain $ \langle  \Theta  \rangle^{\uparrow^{ \LamDeltant{A} }_{ \LamDeltant{A'} } }\, \LamDeltant{t''_{{\mathrm{2}}}}   \redto^*   \langle  \Theta  \rangle^{ \LamDeltant{A} }_{ \LamDeltant{A'} }  \LamDeltant{t''_{{\mathrm{2}}}} $. Hence,
      \begin{center}
        \begin{math}
          \begin{array}{lll}
            \LamDeltamv{z} \, \LamDeltasym{(}  \LamDeltasym{[}  \LamDeltant{t}  \LamDeltasym{/}  \LamDeltamv{y}  \LamDeltasym{]}  \LamDeltasym{(}   \langle  \Theta  \rangle^{\uparrow^{ \LamDeltant{A} }_{ \LamDeltant{A'} } }\, \LamDeltant{t''_{{\mathrm{2}}}}   \LamDeltasym{)}  \LamDeltasym{)} & \redto^* & \LamDeltamv{z} \, \LamDeltasym{(}  \LamDeltasym{[}  \LamDeltant{t}  \LamDeltasym{/}  \LamDeltamv{y}  \LamDeltasym{]}  \LamDeltasym{(}   \langle  \Theta  \rangle^{ \LamDeltant{A} }_{ \LamDeltant{A'} }  \LamDeltant{t''_{{\mathrm{2}}}}   \LamDeltasym{)}  \LamDeltasym{)}\\
          \end{array}
        \end{math}
      \end{center}
      Furthermore, sense $(A,1,t') > (A,0, \langle  \Theta  \rangle^{ \LamDeltant{A} }_{ \LamDeltant{A'} }  \LamDeltant{t''_{{\mathrm{2}}}} )$, we also know from the induction hypothesis that 
      \begin{center}
        \begin{math}
          \begin{array}{lll}
            \LamDeltamv{z} \, \LamDeltasym{(}  \LamDeltasym{[}  \LamDeltant{t}  \LamDeltasym{/}  \LamDeltamv{y}  \LamDeltasym{]}  \LamDeltasym{(}   \langle  \Theta  \rangle^{ \LamDeltant{A} }_{ \LamDeltant{A'} }  \LamDeltant{t''_{{\mathrm{2}}}}   \LamDeltasym{)}  \LamDeltasym{)} & \redto^* & \LamDeltamv{z} \, \LamDeltasym{(}   [  \LamDeltant{t}  /  \LamDeltamv{y}  ]^{ \LamDeltant{A} }  \LamDeltasym{(}   \langle  \Theta  \rangle^{ \LamDeltant{A} }_{ \LamDeltant{A'} }  \LamDeltant{t''_{{\mathrm{2}}}}   \LamDeltasym{)}   \LamDeltasym{)}\\
                                         & =        &  \langle  \Theta  \rangle^{ \LamDeltant{A} }_{ \LamDeltant{A'} }  \LamDeltant{t'} \\
          \end{array}
        \end{math}
      \end{center}

    \item[Case.] Suppose $\LamDeltant{t'_{{\mathrm{2}}}} \equiv  \Delta  \LamDeltamv{y'}  :   \neg  \LamDeltasym{(}   \LamDeltant{A}  \to  \LamDeltant{A'}   \LamDeltasym{)}   .  \LamDeltant{t''_{{\mathrm{2}}}} $, for some $\LamDeltamv{y}$ and 
      $\LamDeltant{t''_{{\mathrm{2}}}}$, $ ( \LamDeltamv{x} , \LamDeltamv{z} , \LamDeltant{t} )  \in \Theta$. 
      Now using a fresh variable $\LamDeltamv{z_{{\mathrm{2}}}}$ we know 
      \begin{center}
        \small
        \begin{math}
          \begin{array}{lll}
             \langle  \Theta  \rangle^{\uparrow^{ \LamDeltant{A} }_{ \LamDeltant{A'} } }\, \LamDeltasym{(}  \LamDeltamv{x} \, \LamDeltasym{(}   \Delta  \LamDeltamv{y'}  :   \neg  \LamDeltasym{(}   \LamDeltant{A}  \to  \LamDeltant{A'}   \LamDeltasym{)}   .  \LamDeltant{t''_{{\mathrm{2}}}}   \LamDeltasym{)}  \LamDeltasym{)}  \\
            \ \ \ \ = \LamDeltasym{(}   \lambda  \LamDeltamv{y} :  \LamDeltant{A}  \to  \LamDeltant{A'}   .  \LamDeltasym{(}  \LamDeltamv{z} \, \LamDeltasym{(}  \LamDeltamv{y} \, \LamDeltant{t}  \LamDeltasym{)}  \LamDeltasym{)}   \LamDeltasym{)} \, \LamDeltasym{(}   \Delta  \LamDeltamv{y'}  :   \neg  \LamDeltasym{(}   \LamDeltant{A}  \to  \LamDeltant{A'}   \LamDeltasym{)}   .  \LamDeltasym{(}   \langle  \Theta  \rangle^{\uparrow^{ \LamDeltant{A} }_{ \LamDeltant{A'} } }\, \LamDeltant{t''_{{\mathrm{2}}}}   \LamDeltasym{)}   \LamDeltasym{)}\\ 
            \ \ \ \ \redto \LamDeltamv{z} \, \LamDeltasym{(}  \LamDeltasym{(}   \Delta  \LamDeltamv{y'}  :   \neg  \LamDeltasym{(}   \LamDeltant{A}  \to  \LamDeltant{A'}   \LamDeltasym{)}   .  \LamDeltasym{(}   \langle  \Theta  \rangle^{\uparrow^{ \LamDeltant{A} }_{ \LamDeltant{A'} } }\, \LamDeltant{t''_{{\mathrm{2}}}}   \LamDeltasym{)}   \LamDeltasym{)} \, \LamDeltant{t}  \LamDeltasym{)} \\
            \ \ \ \ \redto \LamDeltamv{z} \, \LamDeltasym{(}   \Delta  \LamDeltamv{z_{{\mathrm{2}}}}  :   \neg  \LamDeltant{A'}   .  \LamDeltasym{(}  \LamDeltasym{[}   \lambda  \LamDeltamv{y} :  \LamDeltant{A}  \to  \LamDeltant{A'}   .  \LamDeltasym{(}  \LamDeltamv{z_{{\mathrm{2}}}} \, \LamDeltasym{(}  \LamDeltamv{y} \, \LamDeltant{t}  \LamDeltasym{)}  \LamDeltasym{)}   \LamDeltasym{/}  \LamDeltamv{y'}  \LamDeltasym{]}  \LamDeltasym{(}   \langle  \Theta  \rangle^{\uparrow^{ \LamDeltant{A} }_{ \LamDeltant{A'} } }\, \LamDeltant{t''_{{\mathrm{2}}}}   \LamDeltasym{)}  \LamDeltasym{)}   \LamDeltasym{)} \\
            \ \ \ \ = \LamDeltamv{z} \, \LamDeltasym{(}   \Delta  \LamDeltamv{z_{{\mathrm{2}}}}  :   \neg  \LamDeltant{A'}   .  \LamDeltasym{(}   \langle  \Theta  \LamDeltasym{,}   ( \LamDeltamv{y'} , \LamDeltamv{z_{{\mathrm{2}}}} , \LamDeltant{t} )   \rangle^{\uparrow^{ \LamDeltant{A} }_{ \LamDeltant{A'} } }\, \LamDeltant{t''_{{\mathrm{2}}}}   \LamDeltasym{)}   \LamDeltasym{)}\\
          \end{array}
        \end{math}
      \end{center}
      Since $(A,1,t') > (A,1,\LamDeltant{t''_{{\mathrm{2}}}})$ we know from the 
      induction hypothesis that $ \langle  \Theta  \LamDeltasym{,}   ( \LamDeltamv{y'} , \LamDeltamv{z_{{\mathrm{2}}}} , \LamDeltant{t} )   \rangle^{\uparrow^{ \LamDeltant{A} }_{ \LamDeltant{A'} } }\, \LamDeltant{t''_{{\mathrm{2}}}}   \redto^*   \langle  \Theta  \LamDeltasym{,}   ( \LamDeltamv{y'} , \LamDeltamv{z_{{\mathrm{2}}}} , \LamDeltant{t} )   \rangle^{ \LamDeltant{A} }_{ \LamDeltant{A'} }  \LamDeltant{t''_{{\mathrm{2}}}} $.
      Thus, 
      \begin{center}
        \small
        \begin{math}
          \begin{array}{lll}
            \LamDeltamv{z} \, \LamDeltasym{(}   \Delta  \LamDeltamv{z_{{\mathrm{2}}}}  :   \neg  \LamDeltant{A'}   .  \LamDeltasym{(}   \langle  \Theta  \LamDeltasym{,}   ( \LamDeltamv{y'} , \LamDeltamv{z_{{\mathrm{2}}}} , \LamDeltant{t} )   \rangle^{\uparrow^{ \LamDeltant{A} }_{ \LamDeltant{A'} } }\, \LamDeltant{t''_{{\mathrm{2}}}}   \LamDeltasym{)}   \LamDeltasym{)} & \redto^* & \LamDeltamv{z} \, \LamDeltasym{(}   \Delta  \LamDeltamv{z_{{\mathrm{2}}}}  :   \neg  \LamDeltant{A'}   .  \LamDeltasym{(}   \langle  \Theta  \LamDeltasym{,}   ( \LamDeltamv{y'} , \LamDeltamv{z_{{\mathrm{2}}}} , \LamDeltant{t} )   \rangle^{ \LamDeltant{A} }_{ \LamDeltant{A'} }  \LamDeltant{t''_{{\mathrm{2}}}}   \LamDeltasym{)}   \LamDeltasym{)}\\
            & = &  \langle  \Theta  \rangle^{ \LamDeltant{A} }_{ \LamDeltant{A'} }  \LamDeltant{t'} .
          \end{array}
        \end{math}
      \end{center}
      
    \item[Case.] Suppose $\LamDeltant{t'_{{\mathrm{2}}}}$ is not an abstraction, and $ ( \LamDeltamv{x} , \LamDeltamv{z} , \LamDeltant{t} )  \in \Theta$. 
      Since $(\LamDeltant{A},1,\LamDeltant{t'}) > (A,1,\LamDeltant{t'_{{\mathrm{2}}}})$ we know from the 
      induction hypothesis that $ \langle  \Theta  \rangle^{\uparrow^{ \LamDeltant{A} }_{ \LamDeltant{A'} } }\, \LamDeltant{t'_{{\mathrm{2}}}}   \redto^*   \langle  \Theta  \rangle^{ \LamDeltant{A} }_{ \LamDeltant{A'} }  \LamDeltant{t'_{{\mathrm{2}}}} $.  Thus,
      \begin{center}
        \begin{math}
          \begin{array}{lll}
             \langle  \Theta  \rangle^{\uparrow^{ \LamDeltant{A} }_{ \LamDeltant{A'} } }\, \LamDeltant{t'}  & = &  \langle  \Theta  \rangle^{\uparrow^{ \LamDeltant{A} }_{ \LamDeltant{A'} } }\, \LamDeltasym{(}  \LamDeltamv{x} \, \LamDeltant{t'_{{\mathrm{2}}}}  \LamDeltasym{)} \\ 
            & = & \LamDeltamv{z} \, \LamDeltasym{(}   \langle  \Theta  \rangle^{\uparrow^{ \LamDeltant{A} }_{ \LamDeltant{A'} } }\, \LamDeltant{t'_{{\mathrm{2}}}}   \LamDeltasym{)}\\ 
            & \redto^* & \LamDeltamv{z} \, \LamDeltasym{(}   \langle  \Theta  \rangle^{ \LamDeltant{A} }_{ \LamDeltant{A'} }  \LamDeltant{t'_{{\mathrm{2}}}}   \LamDeltasym{)}\\
            & = &  \langle  \Theta  \rangle^{ \LamDeltant{A} }_{ \LamDeltant{A'} }  \LamDeltasym{(}  \LamDeltamv{x} \, \LamDeltant{t'_{{\mathrm{2}}}}  \LamDeltasym{)}  =  \langle  \Theta  \rangle^{ \LamDeltant{A} }_{ \LamDeltant{A'} }  \LamDeltant{t'} .
          \end{array}
        \end{math}
      \end{center}

    \item[Case.] Suppose $ ( \LamDeltamv{x} , \LamDeltamv{z} , \LamDeltant{t''} )  \not \in \Theta$ for any term $\LamDeltant{t''}$ and $\LamDeltamv{z}$.  
      Since $(A,1,t') > (A,1,\LamDeltant{t'_{{\mathrm{2}}}})$ we know from the 
      induction hypothesis that $ \langle  \Theta  \rangle^{\uparrow^{ \LamDeltant{A} }_{ \LamDeltant{A'} } }\, \LamDeltant{t'_{{\mathrm{2}}}}   \redto^*   \langle  \Theta  \rangle^{ \LamDeltant{A} }_{ \LamDeltant{A'} }  \LamDeltant{t'_{{\mathrm{2}}}} $.  Thus,
      \begin{center}
        \begin{math}
          \begin{array}{lll}
             \langle  \Theta  \rangle^{\uparrow^{ \LamDeltant{A} }_{ \LamDeltant{A'} } }\, \LamDeltant{t'}  & = &  \langle  \Theta  \rangle^{\uparrow^{ \LamDeltant{A} }_{ \LamDeltant{A'} } }\, \LamDeltasym{(}  \LamDeltamv{x} \, \LamDeltant{t'_{{\mathrm{2}}}}  \LamDeltasym{)} \\ 
            & = & \LamDeltamv{x} \, \LamDeltasym{(}   \langle  \Theta  \rangle^{\uparrow^{ \LamDeltant{A} }_{ \LamDeltant{A'} } }\, \LamDeltant{t'_{{\mathrm{2}}}}   \LamDeltasym{)}\\ 
            & \redto^* & \LamDeltamv{x} \, \LamDeltasym{(}   \langle  \Theta  \rangle^{ \LamDeltant{A} }_{ \LamDeltant{A'} }  \LamDeltant{t'_{{\mathrm{2}}}}   \LamDeltasym{)}\\
            & = &  \langle  \Theta  \rangle^{ \LamDeltant{A} }_{ \LamDeltant{A'} }  \LamDeltasym{(}  \LamDeltamv{x} \, \LamDeltant{t'_{{\mathrm{2}}}}  \LamDeltasym{)}  =  \langle  \Theta  \rangle^{ \LamDeltant{A} }_{ \LamDeltant{A'} }  \LamDeltant{t'} .
          \end{array}
        \end{math}
      \end{center}
    \end{itemize}
  \item[Case.] Suppose $\LamDeltant{t'_{{\mathrm{1}}}}$ is not a variable.  This case follows easily from
    the induction hypothesis.
  \end{itemize}
\end{itemize}

\ \\
\noindent Part two\\
\begin{itemize}
\item[Case.] Suppose $\LamDeltant{t'}$ is a variable $\LamDeltamv{x}$ or $\LamDeltamv{y}$ distinct from $\LamDeltamv{x}$.  
  Trivial in both cases.
  
\item[Case.] Suppose $\LamDeltant{t'} \equiv  \lambda  \LamDeltamv{y} : \LamDeltant{B_{{\mathrm{1}}}}  .  \LamDeltant{s} $.  Then
  $\LamDeltasym{[}  \LamDeltant{t}  \LamDeltasym{/}  \LamDeltamv{x}  \LamDeltasym{]}  \LamDeltasym{(}   \lambda  \LamDeltamv{y} : \LamDeltant{B_{{\mathrm{1}}}}  .  \LamDeltant{s}   \LamDeltasym{)}  \LamDeltasym{=}   \lambda  \LamDeltamv{y} : \LamDeltant{B_{{\mathrm{1}}}}  .  \LamDeltasym{(}  \LamDeltasym{[}  \LamDeltant{t}  \LamDeltasym{/}  \LamDeltamv{x}  \LamDeltasym{]}  \LamDeltant{s}  \LamDeltasym{)} $. 
  Now $\LamDeltamv{s}$ is a strict subexpression of $\LamDeltant{t'}$ so we can apply the second part of the induction hypothesis to obtain 
  $\LamDeltasym{[}  \LamDeltant{t}  \LamDeltasym{/}  \LamDeltamv{x}  \LamDeltasym{]}  \LamDeltant{s}  \redto^*   [  \LamDeltant{t}  /  \LamDeltamv{x}  ]^{ \LamDeltant{A} }  \LamDeltant{s} $.  At this point we can see that since 
  $ \lambda  \LamDeltamv{y} : \LamDeltant{B_{{\mathrm{1}}}}  .  \LamDeltasym{[}  \LamDeltant{t}  \LamDeltasym{/}  \LamDeltamv{x}  \LamDeltasym{]}  \LamDeltant{s}   \equiv  \LamDeltasym{[}  \LamDeltant{t}  \LamDeltasym{/}  \LamDeltamv{x}  \LamDeltasym{]}  \LamDeltasym{(}   \lambda  \LamDeltamv{y} : \LamDeltant{B_{{\mathrm{1}}}}  .  \LamDeltant{s}   \LamDeltasym{)}$ we may
  conclude that $ \lambda  \LamDeltamv{y} : \LamDeltant{B_{{\mathrm{1}}}}  .  \LamDeltasym{[}  \LamDeltant{t}  \LamDeltasym{/}  \LamDeltamv{x}  \LamDeltasym{]}  \LamDeltant{s}   \redto^*   \lambda  \LamDeltamv{y} : \LamDeltant{B_{{\mathrm{1}}}}  .   [  \LamDeltant{t}  /  \LamDeltamv{x}  ]^{ \LamDeltant{A} }  \LamDeltant{s}  $.

\item[Case.] Suppose $\LamDeltant{t'} \equiv  \Delta  \LamDeltamv{y}  :   \neg  \LamDeltant{B}   .  \LamDeltant{s} $.  Similar to the previous case.

\item[Case.] Suppose $\LamDeltant{t'}  \equiv  \LamDeltant{t'_{{\mathrm{1}}}} \, \LamDeltant{t'_{{\mathrm{2}}}}$.  By Lemma~\ref{lemma:totality_and_type_preservation}
  there exists terms $\LamDeltant{s_{{\mathrm{1}}}}$ and $\LamDeltant{s_{{\mathrm{2}}}}$
  such that $ [  \LamDeltant{t}  /  \LamDeltamv{x}  ]^{ \LamDeltant{A} }  \LamDeltant{t'_{{\mathrm{1}}}}   \LamDeltasym{=}  \LamDeltant{s_{{\mathrm{1}}}}$ and $ [  \LamDeltant{t}  /  \LamDeltamv{x}  ]^{ \LamDeltant{A} }  \LamDeltant{t'_{{\mathrm{2}}}}   \LamDeltasym{=}  \LamDeltant{s_{{\mathrm{2}}}}$.  Since
  $\LamDeltant{t'_{{\mathrm{1}}}}$ and $\LamDeltant{t'_{{\mathrm{2}}}}$ are strict subexpressions of $\LamDeltant{t'}$ we can apply the second part of the induction hypothesis to obtain
  $\LamDeltasym{[}  \LamDeltant{t}  \LamDeltasym{/}  \LamDeltamv{x}  \LamDeltasym{]}  \LamDeltant{t'_{{\mathrm{1}}}}  \redto^*  \LamDeltant{s_{{\mathrm{1}}}}$ and $\LamDeltasym{[}  \LamDeltant{t}  \LamDeltasym{/}  \LamDeltamv{x}  \LamDeltasym{]}  \LamDeltant{t'_{{\mathrm{2}}}}  \redto^*  \LamDeltant{s_{{\mathrm{2}}}}$.  Now we case
  split on whether or not $\LamDeltant{s_{{\mathrm{1}}}}$ is a $\lambda$-abstraction and $\LamDeltant{t'_{{\mathrm{1}}}}$ is not,
  a $\Delta$-abstraction and $\LamDeltant{t'_{{\mathrm{1}}}}$ is not, or $\LamDeltant{s_{{\mathrm{1}}}}$ is not a $\lambda$-abstraction or a $\Delta$-abstraction.  If
  $\LamDeltant{s_{{\mathrm{1}}}}$ is not a $\lambda$-abstraction or a $\Delta$-abstraction then 
  $ [  \LamDeltant{t}  /  \LamDeltamv{x}  ]^{ \LamDeltant{A} }  \LamDeltant{t'}   \LamDeltasym{=}  \LamDeltasym{(}   [  \LamDeltant{t}  /  \LamDeltamv{x}  ]^{ \LamDeltant{A} }  \LamDeltant{t'_{{\mathrm{1}}}}   \LamDeltasym{)} \, \LamDeltasym{(}   [  \LamDeltant{t}  /  \LamDeltamv{x}  ]^{ \LamDeltant{A} }  \LamDeltant{t'_{{\mathrm{2}}}}   \LamDeltasym{)}  \equiv  \LamDeltant{s_{{\mathrm{1}}}} \, \LamDeltant{s_{{\mathrm{2}}}}$. Thus, by two applications of the
  induction hypothesis, $\LamDeltasym{[}  \LamDeltant{t}  \LamDeltasym{/}  \LamDeltamv{x}  \LamDeltasym{]}  \LamDeltant{t'}  \redto^*   [  \LamDeltant{t}  /  \LamDeltamv{x}  ]^{ \LamDeltant{A} }  \LamDeltant{t'} $, because $\LamDeltasym{[}  \LamDeltant{t}  \LamDeltasym{/}  \LamDeltamv{x}  \LamDeltasym{]}  \LamDeltant{t'}  \LamDeltasym{=}  \LamDeltasym{(}  \LamDeltasym{[}  \LamDeltant{t}  \LamDeltasym{/}  \LamDeltamv{x}  \LamDeltasym{]}  \LamDeltant{t'_{{\mathrm{1}}}}  \LamDeltasym{)} \, \LamDeltasym{(}  \LamDeltasym{[}  \LamDeltant{t}  \LamDeltasym{/}  \LamDeltamv{x}  \LamDeltasym{]}  \LamDeltant{t'_{{\mathrm{2}}}}  \LamDeltasym{)}$.

  \ \\
  Suppose $\LamDeltant{s_{{\mathrm{1}}}}  \equiv   \lambda  \LamDeltamv{y} : \LamDeltant{B'}  .  \LamDeltant{s'_{{\mathrm{1}}}} $ and $\LamDeltant{t'_{{\mathrm{1}}}}$ is not a $\lambda$-abstraction.  
  By Lemma~\ref{lemma:ctype_props} there exists a type $\LamDeltant{B''}$ such that
  $ \textsf{ctype}_{ \LamDeltant{A} }( \LamDeltamv{x} , \LamDeltant{t'_{{\mathrm{1}}}} )  = \LamDeltant{B''}$, $\LamDeltant{B''}  \equiv   \LamDeltant{B'}  \to  \LamDeltant{B} $, and $\LamDeltant{B''}$ is a subexpression
  of $\LamDeltant{A}$.  Then by the definition of the hereditary substitution function 
  $ [  \LamDeltant{t}  /  \LamDeltamv{x}  ]^{ \LamDeltant{A} }  \LamDeltasym{(}  \LamDeltant{t'_{{\mathrm{1}}}} \, \LamDeltant{t'_{{\mathrm{2}}}}  \LamDeltasym{)}   \LamDeltasym{=}   [  \LamDeltant{s_{{\mathrm{2}}}}  /  \LamDeltamv{y}  ]^{ \LamDeltant{B'} }  \LamDeltant{s'_{{\mathrm{1}}}} $.
  Now we know $\LamDeltant{A}  \LamDeltasym{>}  \LamDeltant{B'}$ so we can apply the second part of the induction hypothesis to obtain 
  $\LamDeltasym{[}  \LamDeltant{s_{{\mathrm{2}}}}  \LamDeltasym{/}  \LamDeltamv{y}  \LamDeltasym{]}  \LamDeltant{s'_{{\mathrm{1}}}}  \redto^*   [  \LamDeltant{s_{{\mathrm{2}}}}  /  \LamDeltamv{y}  ]^{ \LamDeltant{B'} }  \LamDeltant{s'_{{\mathrm{1}}}} $. By knowing that 
  $ \LamDeltasym{(}  \LamDeltasym{(}   \lambda  \LamDeltamv{y} : \LamDeltant{B'}  .  \LamDeltant{s'_{{\mathrm{1}}}}   \LamDeltasym{)} \, \LamDeltant{s_{{\mathrm{2}}}}  \LamDeltasym{)}  \redto  \LamDeltasym{(}  \LamDeltasym{[}  \LamDeltant{s_{{\mathrm{2}}}}  \LamDeltasym{/}  \LamDeltamv{y}  \LamDeltasym{]}  \LamDeltant{s'_{{\mathrm{1}}}}  \LamDeltasym{)} $ and
  by the previous fact we know $\LamDeltasym{(}   \lambda  \LamDeltamv{y} : \LamDeltant{B'}  .  \LamDeltant{s'_{{\mathrm{1}}}}   \LamDeltasym{)} \, \LamDeltant{s_{{\mathrm{2}}}}  \redto^*   [  \LamDeltant{s_{{\mathrm{2}}}}  /  \LamDeltamv{y}  ]^{ \LamDeltant{B'} }  \LamDeltant{s'_{{\mathrm{1}}}} $.
  We now make use of the well known result of full $\beta$-reduction.  The
  result is stated as
  \begin{center}
    \begin{math}
      $$\mprset{flushleft}
      \inferrule* [right=] {
        \LamDeltant{a}  \redto^*  \LamDeltant{a'}
        \\\\
        \LamDeltant{b}  \redto^*  \LamDeltant{b'}
        \\
        \LamDeltant{a'} \, \LamDeltant{b'}  \redto^*  \LamDeltant{c}
      }{\LamDeltant{a} \, \LamDeltant{b}  \redto^*  \LamDeltant{c}}
    \end{math}
  \end{center}
  where $\LamDeltant{a}$, $\LamDeltant{a'}$, $\LamDeltant{b}$, $\LamDeltant{b'}$, and $\LamDeltant{c}$ are all terms.  We apply this
  result by instantiating $\LamDeltant{a}$, $\LamDeltant{a'}$, $\LamDeltant{b}$, $\LamDeltant{b'}$, and $\LamDeltant{c}$ with
  $\LamDeltasym{[}  \LamDeltant{t}  \LamDeltasym{/}  \LamDeltamv{x}  \LamDeltasym{]}  \LamDeltant{t'_{{\mathrm{1}}}}$, $\LamDeltant{s_{{\mathrm{1}}}}$, $\LamDeltasym{[}  \LamDeltant{t}  \LamDeltasym{/}  \LamDeltamv{x}  \LamDeltasym{]}  \LamDeltant{t'_{{\mathrm{2}}}}$, $\LamDeltant{s_{{\mathrm{2}}}}$, and $ [  \LamDeltant{s_{{\mathrm{2}}}}  /  \LamDeltamv{y}  ]^{ \LamDeltant{B'} }  \LamDeltant{s'_{{\mathrm{1}}}} $ 
  respectively.  Therefore, $\LamDeltasym{[}  \LamDeltant{t}  \LamDeltasym{/}  \LamDeltamv{x}  \LamDeltasym{]}  \LamDeltasym{(}  \LamDeltant{t'_{{\mathrm{1}}}} \, \LamDeltant{t'_{{\mathrm{2}}}}  \LamDeltasym{)}  \redto^*   [  \LamDeltant{s_{{\mathrm{2}}}}  /  \LamDeltamv{y}  ]^{ \LamDeltant{B'} }  \LamDeltant{s'_{{\mathrm{1}}}} $.        

  \ \\
  Suppose $\LamDeltant{s_{{\mathrm{1}}}}  \equiv   \Delta  \LamDeltamv{y}  :   \neg  \LamDeltasym{(}   \LamDeltant{B'}  \to  \LamDeltant{B}   \LamDeltasym{)}   .  \LamDeltant{s'_{{\mathrm{1}}}} $ and $\LamDeltant{t'_{{\mathrm{1}}}}$ is not a $\Delta$-abstraction.
  By Lemma~\ref{lemma:ctype_props} there exists a type $\LamDeltant{B''}$ such that
  $ \textsf{ctype}_{ \LamDeltant{A} }( \LamDeltamv{x} , \LamDeltant{t'_{{\mathrm{1}}}} )  = \LamDeltant{B''}$, $\LamDeltant{B''}  \equiv   \LamDeltant{B'}  \to  \LamDeltant{B} $, and $\LamDeltant{B''}$ is a subexpression
  of $\LamDeltant{A}$.  Then by the definition of the hereditary substitution function
  $ [  \LamDeltant{t}  /  \LamDeltamv{x}  ]^{ \LamDeltant{A} }  \LamDeltasym{(}  \LamDeltant{t'_{{\mathrm{1}}}} \, \LamDeltant{t'_{{\mathrm{2}}}}  \LamDeltasym{)}   \LamDeltasym{=}   \Delta  \LamDeltamv{z}  :   \neg  \LamDeltant{B}   .   \langle   ( \LamDeltamv{y} , \LamDeltamv{z} , \LamDeltant{s_{{\mathrm{2}}}} )   \rangle^{ \LamDeltant{B'} }_{ \LamDeltant{B} }  \LamDeltant{s'_{{\mathrm{1}}}}  $, where $\LamDeltamv{z}$ is fresh variable.
  Now
  \begin{center}
    \begin{math}
      \begin{array}{lll}
        \LamDeltasym{[}  \LamDeltant{t}  \LamDeltasym{/}  \LamDeltamv{x}  \LamDeltasym{]}  \LamDeltasym{(}  \LamDeltant{t'_{{\mathrm{1}}}} \, \LamDeltant{t'_{{\mathrm{2}}}}  \LamDeltasym{)} & = & \LamDeltasym{(}  \LamDeltasym{[}  \LamDeltant{t}  \LamDeltasym{/}  \LamDeltamv{x}  \LamDeltasym{]}  \LamDeltant{t'_{{\mathrm{1}}}}  \LamDeltasym{)} \, \LamDeltasym{(}  \LamDeltasym{[}  \LamDeltant{t}  \LamDeltasym{/}  \LamDeltamv{x}  \LamDeltasym{]}  \LamDeltant{t'_{{\mathrm{2}}}}  \LamDeltasym{)}\\
        & \redto^* & \LamDeltant{s_{{\mathrm{1}}}} \, \LamDeltant{s_{{\mathrm{2}}}}\\
        & \equiv & \LamDeltasym{(}   \Delta  \LamDeltamv{y}  :   \neg  \LamDeltasym{(}   \LamDeltant{B'}  \to  \LamDeltant{B}   \LamDeltasym{)}   .  \LamDeltant{s'_{{\mathrm{1}}}}   \LamDeltasym{)} \, \LamDeltant{s_{{\mathrm{2}}}}\\
        & \redto &  \Delta  \LamDeltamv{z}  :   \neg  \LamDeltant{B}   .  \LamDeltasym{[}   \lambda  \LamDeltamv{y'} :  \LamDeltant{B'}  \to  \LamDeltant{B}   .  \LamDeltasym{(}  \LamDeltamv{z} \, \LamDeltasym{(}  \LamDeltamv{y'} \, \LamDeltant{s_{{\mathrm{2}}}}  \LamDeltasym{)}  \LamDeltasym{)}   \LamDeltasym{/}  \LamDeltamv{y}  \LamDeltasym{]}  \LamDeltant{s'_{{\mathrm{1}}}} \\
        & = &  \Delta  \LamDeltamv{z}  :   \neg  \LamDeltant{B}   .  \LamDeltasym{(}   \langle   ( \LamDeltamv{y} , \LamDeltamv{z} , \LamDeltant{s_{{\mathrm{2}}}} )   \rangle^{\uparrow^{ \LamDeltant{B'} }_{ \LamDeltant{B} } }\, \LamDeltant{s'_{{\mathrm{1}}}}   \LamDeltasym{)} 
      \end{array}
    \end{math}
  \end{center}
  It suffices to show that $ \Delta  \LamDeltamv{z}  :   \neg  \LamDeltant{B}   .  \LamDeltasym{(}   \langle   ( \LamDeltamv{y} , \LamDeltamv{z} , \LamDeltant{s_{{\mathrm{2}}}} )   \rangle^{\uparrow^{ \LamDeltant{B'} }_{ \LamDeltant{B} } }\, \LamDeltant{s'_{{\mathrm{1}}}}   \LamDeltasym{)}   \redto^*   \Delta  \LamDeltamv{z}  :   \neg  \LamDeltant{B}   .   \langle   ( \LamDeltamv{y} , \LamDeltamv{z} , \LamDeltant{s_{{\mathrm{2}}}} )   \rangle^{ \LamDeltant{B'} }_{ \LamDeltant{B} }  \LamDeltant{s'_{{\mathrm{1}}}}  $,
  but this follows from the induction hypothesis, because $(\LamDeltant{A},0,\LamDeltant{t'}) > (\LamDeltant{B'},1,\LamDeltant{s'_{{\mathrm{1}}}})$.
\end{itemize}

\subsection{Proof of Type Soundness}
\label{subsec:proof_of_type_soundness}
This is a proof by induction on the assumed typing derivation.
\begin{itemize}
\item[Case.] \ \\
  \begin{center}
    $\LamDeltadruleAx{}$
  \end{center}
  Trivial.
  
\item[Case.] \ \\
  \begin{center}
    $\LamDeltadruleLam{}$
  \end{center}
  By the induction hypothesis $\LamDeltant{t} \in \interp{\LamDeltant{B}}_{\Gamma  \LamDeltasym{,}  \LamDeltamv{x}  \LamDeltasym{:}  \LamDeltant{A}}$.  By the definition of the
  interpretation of types $\LamDeltant{t} \normto \LamDeltant{n} \in \interp{\LamDeltant{B}}_{\Gamma  \LamDeltasym{,}  \LamDeltamv{x}  \LamDeltasym{:}  \LamDeltant{A}}$ and 
  $ \Gamma  \LamDeltasym{,}  \LamDeltamv{x}  \LamDeltasym{:}  \LamDeltant{A}  \vdash  \LamDeltant{n}  :  \LamDeltant{B} $.  Thus, by applying the $\lambda$-abstraction type-checking
  rule, $ \Gamma  \vdash   \lambda  \LamDeltamv{x} : \LamDeltant{A}  .  \LamDeltant{n}   :   \LamDeltant{A}  \to  \LamDeltant{B}  $, hence by the definition of the 
  interpretation of types $ \lambda  \LamDeltamv{x} : \LamDeltant{A}  .  \LamDeltant{n}  \in \interp{ \LamDeltant{A}  \to  \LamDeltant{B} }_\Gamma$.  Therefore,
  $ \lambda  \LamDeltamv{x} : \LamDeltant{A}  .  \LamDeltant{t}  \normto  \lambda  \LamDeltamv{x} : \LamDeltant{A}  .  \LamDeltant{n}  \in \interp{ \LamDeltant{A}  \to  \LamDeltant{B} }_\Gamma$.
  
\item[Case.] \ \\
  \begin{center}
    $\LamDeltadruleDelta{}$
  \end{center}
  Similar to the previous case.
  
\item[Case.] \ \\
  \begin{center}
    $\LamDeltadruleApp{}$
  \end{center}
  
  By the induction hypothesis we know $\LamDeltant{t_{{\mathrm{1}}}} \in \interp{ \LamDeltant{A}  \to  \LamDeltant{B} }_\Gamma$ and $\LamDeltant{t_{{\mathrm{2}}}} \in \interp{\LamDeltant{A}}_\Gamma$.
  So by the definition of the interpretation of types we know there exists normal forms $\LamDeltant{n_{{\mathrm{1}}}}$ and $\LamDeltant{n_{{\mathrm{2}}}}$
  such that $\LamDeltant{t_{{\mathrm{1}}}}  \redto^*  \LamDeltant{n_{{\mathrm{1}}}} \in \interp{ \LamDeltant{A}  \to  \LamDeltant{B} }_\Gamma$ and $\LamDeltant{t_{{\mathrm{2}}}}  \redto^*  \LamDeltant{n_{{\mathrm{2}}}} \in \interp{\LamDeltant{A}}_\Gamma$. Assume $\LamDeltamv{y}$ is a fresh
  variable in $\LamDeltant{n_{{\mathrm{1}}}}$ and $\LamDeltant{n_{{\mathrm{2}}}}$ of type $\LamDeltant{A}$.    
  Then by hereditary 
  substitution for the interpretation of types (Lemma~\ref{lemma:substitution_for_the_interpretation_of_types}) 
  $ [  \LamDeltant{n_{{\mathrm{1}}}}  /  \LamDeltamv{y}  ]^{ \LamDeltant{A} }  \LamDeltasym{(}  \LamDeltamv{y} \, \LamDeltant{n_{{\mathrm{2}}}}  \LamDeltasym{)}  \in \interp{\LamDeltant{B}}_\Gamma$.  
  It suffices to show that $\LamDeltant{t_{{\mathrm{1}}}} \, \LamDeltant{t_{{\mathrm{2}}}}  \redto^*   [  \LamDeltant{n_{{\mathrm{1}}}}  /  \LamDeltamv{y}  ]^{ \LamDeltant{A} }  \LamDeltasym{(}  \LamDeltamv{y} \, \LamDeltant{n_{{\mathrm{2}}}}  \LamDeltasym{)} $.  This is an easy consequence of soundness with respect
  to reduction (Lemma~\ref{lemma:soundness_reduction}), that is, $\LamDeltant{t_{{\mathrm{1}}}} \, \LamDeltant{t_{{\mathrm{2}}}}  \redto^*  \LamDeltant{n_{{\mathrm{1}}}} \, \LamDeltant{n_{{\mathrm{2}}}} = \LamDeltasym{[}  \LamDeltant{n_{{\mathrm{1}}}}  \LamDeltasym{/}  \LamDeltamv{y}  \LamDeltasym{]}  \LamDeltasym{(}  \LamDeltamv{y} \, \LamDeltant{n_{{\mathrm{2}}}}  \LamDeltasym{)}$ 
  and by soundness with respect to reduction $\LamDeltasym{[}  \LamDeltant{n_{{\mathrm{1}}}}  \LamDeltasym{/}  \LamDeltamv{y}  \LamDeltasym{]}  \LamDeltasym{(}  \LamDeltamv{y} \, \LamDeltant{n_{{\mathrm{2}}}}  \LamDeltasym{)}  \redto^*   [  \LamDeltant{n_{{\mathrm{1}}}}  /  \LamDeltamv{y}  ]^{ \LamDeltant{A} }  \LamDeltasym{(}  \LamDeltamv{y} \, \LamDeltant{n_{{\mathrm{2}}}}  \LamDeltasym{)} $.  Therefore, 
  $\LamDeltant{t_{{\mathrm{1}}}} \, \LamDeltant{t_{{\mathrm{2}}}} \in \interp{\LamDeltant{B}}_\Gamma$.  
\end{itemize}


\end{document}